\documentclass[12pt]{amsart}
\usepackage{graphicx}
\usepackage{geometry}
\usepackage{amssymb}
\usepackage{setspace}
\usepackage[dvipsnames]{pstricks}
\usepackage{pst-node}
\usepackage{pst-slpe}
\usepackage{amsthm}
\usepackage{tipa}
\usepackage{protosem}
\usepackage{hieroglf}
\usepackage{subcaption}
\usepackage{graphicx}
\usepackage{amsmath}
\usepackage{float}
\usepackage{natbib}

\usepackage{pst-tree}
\usepackage{amsmath}
\DeclareMathOperator*{\argmax}{argmax}

\usepackage{hyperref}
\hypersetup{
    colorlinks=true,
    linkcolor=blue,
    filecolor=magenta,      
    citecolor=blue,
    pdftitle={Overleaf Example},
    pdfpagemode=FullScreen,
}

\usepackage{cleveref}

\usepackage{amsfonts}
\usepackage{amsmath}
\usepackage{amssymb}
\usepackage{graphicx}
\usepackage{appendix}
\usepackage{multirow}
\usepackage{ulem}
\usepackage{setspace}
\usepackage[dvipsnames]{pstricks}
\usepackage{pst-node}
\usepackage{pst-slpe}
\usepackage{amsthm}
\usepackage{lmodern}
\usepackage[T1]{fontenc}
\usepackage{mathtools}

\usepackage{booktabs}
\usepackage{amssymb}% http://ctan.org/pkg/amssymb
\usepackage{pifont}% http://ctan.org/pkg/pifont

\usepackage{natbib}

 \theoremstyle{plain}
  
  \theoremstyle{plain}
  \theoremstyle{definition}
  \newtheorem{defn}{\protect\definitionname}
  \theoremstyle{plain}
  
  \theoremstyle{plain}
  
  \theoremstyle{definition}
  
  \theoremstyle{plain}
  
  \theoremstyle{plain}
  \newtheorem{lemma}{Lemma}
  \theoremstyle{plain}
  
  \theoremstyle{plain}
  \newtheorem{proposition}{\protect\propositionname}
\theoremstyle{remark}
\newtheorem{rem}{Remark}
  \theoremstyle{definition}
  \newtheorem{example}{Example}
  \newtheorem{corollary}{Corollary}
  \theoremstyle{definition}
  \newtheorem{assump}{Axiom}
  
  \newtheoremstyle{emptyplain}
    {}          % default space above
    {}          % default space below
    {\itshape}  % body font
    {}          % no indent
    {\bfseries} % theorem head font
    {.}         % punctuation after theorem head
    { }         % space after theorem head (normal interword space)
    {#3}        % using "\thmnote{#3}" is redundant, isn't it?
\theoremstyle{emptyplain}

  \providecommand{\axiomname}{Axiom}
  \providecommand{\conjecturename}{Conjecture}
  \providecommand{\definitionname}{Definition}
  \providecommand{\corollaryname}{Corollary}
  \providecommand{\theoremname}{Theorem}
\providecommand{\propositionname}{Proposition}
\usepackage{xcolor,cancel}

\newcommand{\Wecon}{W}
\newcommand{\Wpsych}{W_{\mathrm{psych}}}

\usepackage{empheq}
\usepackage[most]{tcolorbox}

\newtcbox{\mymath}[1][]{%
    nobeforeafter, math upper, tcbox raise base,
    enhanced, colframe=blue!30!black,
    colback=blue!30, boxrule=1pt,
    #1}

\begin{document}

\title[]{Measuring Choice Difficulty$^\dag$}

% brainstorming:
% POSSIBLE TITLE: MEASURING CHOICE DIFFICULTY
% POSSIBLE TITLE: MEASURING CHOICE UNDERSTANDING

\author[]{Christopher P. Chambers$^\ast$}
\author[]{Yusufcan Masatlioglu$^\S$}
\author[]{Paulo Natenzon$^\sharp$}
\author[]{Collin Raymond$^\ddagger$}
\thanks{}
\thanks{$^\ast$ Department of Economics, Georgetown University, ICC 580  37th and O Streets NW, Washington DC 20057. E-mail: \texttt{Christopher.Chambers@georgetown.edu}.}
\thanks{$^\S$ University of Maryland, 3147E Tydings Hall, 7343 Preinkert Dr.,  College Park, MD 20742. E-mail: \texttt{yusufcan@umd.edu}}
\thanks{$^{\sharp}$ WashU Olin Business School, 1 Brookings Drive, St.\ Louis, MO 63130, 
 \texttt{pnatenzon@wustl.edu}.}
\thanks{$^{\ddagger}$ Johnson School of Management, Cornell University, Ithaca, NY 14853, \texttt{collinbraymond@gmail.com}.}
\date{March 2026.  We thank helpful comments from audience members at Gerzensee, Stanford Institute for Theoretical Economics, SOUR BEER,  and the Paris School of Economics.}

 \begin{abstract}

We provide a theoretical framework to understand how widely used measures of choice difficulty relate. In a binary-option Bayesian expected‑utility framework, we show that three measures of difficulty, (i) understanding (ex‑ante value), (ii) choice randomness, and (iii) confidence that the chosen option is ex post correct, are, in general, unrelated, and that this result extends to other potential measures like attenuation. We provide intuitive sufficient conditions which align the orders, using both restrictions on Blackwell experiments that capture well known classes (such as logit) and restrictions on payoffs and demonstrate that in psychophysical tasks that pay only for correctness, confidence coincides with understanding.   We show willingness‑to‑accept to switch, when measured in utils, is equivalent to understanding. Our results suggest caution in interpreting measures of choice difficulty  as well as the degree of portability between economics and psychophysics experiments.  

  \end{abstract}

\maketitle

% TO DO COLLIN/PAULO chat feb 11 2026:
% (1) In interpreting Proposition 3, to throw a bone at the confidence measuring experimentalists, we can say confindence is a better measure to use of understanding than randomness of choice because it requires weaker conditions, just shift alignment but doesn't require indicativeness of the experiments.
% (2) Gaussian signals and symmetric Gaussian prior. What does correlation of signals do?

\section{Introduction}

A growing empirical literature in experimental economics looks to document how difficult a choice is for individuals—whether due to features of the decision problem (e.g., complexity) or to characteristics of the decision-maker (e.g., sophistication). This set of studies draws on a rich literature in the social sciences on stochastic choice (e.g., \cite{fechner1860elemente}) as well as more recent innovations in confidence and meta-cognition (e.g., \cite{fleming2014measure}).
Within economics, the goal is typically to relate notions of how well a decision-maker understands a problem to various observed behaviors.  

The most traditional piece of data which is often used is the degree of randomness in choice (related to, but not the same as accuracy). More recently, experiments have looked at meta-cognition such as cognitive uncertainty (\cite{enke2023cognitive}), as well as other kinds of behavior, such as attenuation (\cite{enke2024behavioral}).  These measures, although conceptually related, are not developed from a single, general conceptual framework.  Many of the intuitions have been developed for specific kinds of measurements using functional form assumptions; e.g., assumptions about Gaussian priors and signals in a Bayesian Expected Utility framework (e.g., \cite{enke2023cognitive}), or generalized Fechnerian choice (e.g., \cite{shubatt2024tradeoffs}).  Similarly, in the broader literature studying meta-cognition, most results are derived under assumptions of normality (e.g., \cite{fleming2014measure}).  This leaves limited understanding of the  general theoretical underpinnings of these behaviors as well as unclear relations between concepts.  

This paper develops a unifying framework that connects various notions of behavior used to measure choice difficulty with the extent of a decision maker's (DM’s) understanding of the decision problem. Our analysis is conducted within Bayesian Expected Utility theory (see \cite{strzalecki2025stochastic} for a textbook treatment) and relates each behavioral notion to an ordering over Blackwell experiments.\footnote{See also \cite{rehbeck2019revealed} for work in this environment, as well as the closely related BAUP environment in  \cite{safonov2017random,safonov2022random}.} Prior to choice, DMs learn about the utility of the available options through information structures (``experiments''). Different choice environments ---such as menus of varying complexity--- induce different experiments. Each measure of ``difficulty'' corresponds to a distinct partial order over experiments, derived from the value of information, the degree of choice randomness, or reported confidence.

Our primary contribution is conceptual: we offer general foundations for widely used measures of choice difficulty beyond parametric normal-signal microfoundations, and clarify precisely how they are related to one another.  Focusing on the widely used setting where there are only binary options, our core results study three key orderings—(i) the degree of ``understanding'' of a decision problem, which we measure as the value of the decision problem under an information structure (in other words, how well the decision maker will choose, as measured by their expected utility), (ii) the degree of randomness in choice, and (iii) expressed confidence that the chosen option is actually the option with higher realized utility. Our interpretation is that the experiment itself is unobserved; we observe repeated, de novo choices without cross-trial information spillovers.\footnote{It is particularly important that the data comes from a single dgp.  The fact that we focus on on ``one-shot'' learning, is, given the assumption of exogenous learning, essentially without loss,  as repeated learning from a single Blackwell experiment can be captured, in a reduced form way, as a single draw from a different Blackwell experiment.} 

We have five main results.  The first is negative: in general, these three orderings are unrelated (ranking experiments in any two orderings does not imply the ranking in the third). The next three are positive.  In the second result, we deliver an intuitive sufficient condition on the set of potential Blackwell experiments under which the three rankings co-move, and show how these conditions relate to well-known classes of experiments.  As a corollary, we show that a weakening of our sufficient condition allows us to infer payoffs from randomness. Third, we also provide a distinct condition on the set of payoff functions (which correspond to payoffs often used in psychophysical experiments)  which guarantees that confidence and understanding co-move.  Fourth, we show that a well-known measure, the willingness to accept to switch a choice (widely used, e.g., in endowment effect experiments such as \cite{Knetsch84}) is, under the assumption that it can be measured in utils, equivalent to the understanding of the DM.  Last, we extend our initial negative results, showing that measuring attenuation cannot overcome our negative results.  

Taken as a whole, our negative results suggest that we should interpret results from existing experiments with some caution, and that insights from psychophysics are not necessarily immediately portable to economics frameworks.  The primary negative result, at its core, is because the ability to understand a problem and choose well from it uses a utility weighted measures (e.g., expected payoffs from the decision), which depends on cardinal attributes of states (i.e., the actual utility values of each option in each state).  On the other hand, measures such as randomness and confidence reflect ordinal notions, such as how accurate a choice is (i.e., does the choice match the option that has higher utility).  On the other hand, our positive results suggest reasonable assumptions under which we can clearly interpret measures such as randomness and confidence in an economic context, and provide alternative measures that directly measure the value of the decision.  

Our results also highlight that depending on the kinds of assumptions we make about the decision-problem, different versions of the same concept will need to be used to recover understanding. For example, under our sufficient condition on the Blackwell experiment, we should measure option by option confidence, averaged across states, while randomness should be measured state-by-state.  In contrast, under our sufficient conditions for payoffs, we should measure confidence averaging across options, and randomness does not help us at all.  Thus, researchers need to think carefully about not just what is being measured, but at what level of aggregation.  

Conceptually, the framework connects canonical statistical decision theory (Blackwell experiments and notions of informativeness) with empirical measures used in economics labs to measure choice difficulty. We can rationalize why noisy choice, low confidence, and low value of information may sometimes move together (when induced experiments vary within a restricted family) and sometimes diverge (when complexity or beliefs shift the experiment in ways that affect these metrics differentially). 

Section \ref{sec:main} begins by providing formal definitions of our choice environment as well as our three primary orderings.  The framework is simple: the DM faces a binary choice set.  They do not know the utility of either option, but have a prior belief over possible utilities.  They observe a signal from a Blackwell experiment, update their beliefs and choose the option with the highest expected utility. We are interested in the comparative statics of this problem as the Blackwell experiment varies.  In other words, the only differences in choice across scenarios must come from differences in the experiment.  A natural question is whether it is plausible to vary the Blackwell experiment (which we conceive of as a mental process) while keeping the choice set fixed. We believe it is --- e.g., one might have choices over acts where the payoff in each state is described as either a single monetary payoff, or the sum of a number of positive and negative numbers (e.g., as in \cite{caplin2011search}).  The latter would likely induce a noisier experiment.  

We focus on three key orderings over experiments.  The first has been extensively studied since \cite{fechner1860elemente}: how much randomness there is in choice.  The second, inspired by recent work on cognitive uncertainty (\cite{enke2023cognitive}), is the degree of confidence the DM has that their choice is the ex-post correct one (i.e., it has the highest utility given the actual state).  The third is how well the DM ``understands'' the decision-problem, measured as the ex-ante expected payoff of the problem given the Blackwell experiment (this is inspired by Blackwell's ordering (\cite{blackwell1951comparison,blackwell}).  We highlight that it may be important to think about whether we believe we can observe randomness and confidence conditional on a state, or just as an average across states.  

Section \ref{sec:neg} presents our first main result: knowing how to rank experiments $\sigma$ and $\sigma'$ in any two of the orderings does not tell us how they may rank in the third.  For instance, one experiment may induce a choice that is both less random and where the DM exhibits more confidence, but it still may be the case that the DM understands the problem less, in that she will obtain a lower expected utility.  This implies that researchers need to be cautious in interpreting measures of randomness and confidence as providing information about the DM's understanding of the choice problem.  Existing results which imply a linkage rely on specific functional forms in order to derive those results.

In Section \ref{sec:suffsignal} we look to bridge the gap between the existing literature and our negative result by providing an easily interpretable, and natural, sufficient condition that ensures that all three rankings co-move.  The sufficient condition is a restriction on the set of experiments.  The condition essentially guarantees that one experiment places more weight on signals that cause the DM to choose $i$ in states where $i$ is optimal, compared to the other.  As an extension we show that if we assume that experiments have the property that in states where option $x$ is ordinally better than than $y$, signals which induce a choice of an option $x$ are more likely than those which induce a choice of $y$, it is the case that less randomness implies higher payoffs. We conclude the section by providing examples of classes of Blackwell experiments which either fail or satisfy our sufficient conditions, and demonstrate how this is reflected in the relationship between the rankings within each class.  

Section \ref{sec:suffpayoff} provides a distinct set of sufficient conditions, which operate over payoffs (rather than experiments).  Specifically, we focus on payoffs used in what we term psychophysical experiments, where the payoff depends only on choosing the option that has ``more'' of some quality (and does not depend on how much more).  We show that in this environment, confidence is equivalent to understanding.  Thus, we caution researchers that we should interpret results from these two different classes of experiments differently, and that we should be cautious when moving between different payoff functions.  

Section \ref{sec:ext} extends our analysis to consider additional rankings over experiments that have been discussed in the literature.  We begin by showing that willingness to accept to switch a choice (WTA), which is widely used in experiments to measure preference intensity, is, within our framework (where preferences are fixed), under the assumption that it can be measured in utils, equivalent to the degree of understanding of the choice problem (i.e.\ the ex-ante expected value of the problem) given the experiment. This suggests an alternative elicitation when measuring choice difficulty. We show that we cannot circumvent our negative results by considering attenuation, another widely measured proxy for choice difficulty (\cite{enke2024behavioral}).

Section \ref{sec:con} concludes by summarizing and discussing various extensions both within economics and related fields.  It also summarizes the contents of the appendices.

\section{Literature Review}\label{sec:lit}

Our framework builds on \cite{blackwell1951comparison,blackwell}'s notion of experiments, and supplements his classic ordering of informativeness—the idea that one experiment is more informative than another if it yields weakly higher expected payoff in all decision problems, with additional orderings over experiments.  Moreover, his ordering specifically inspired our notion of understanding (which is a completion of Blackwell). 

A key inspiration for our work has also been recent empirical work documenting the role of complexity in decision making and how it impacts both choice as well as other meta-cognitive measures.  \cite{enke2023cognitive} formalizes cognitive uncertainty and show that when subjects are cognitively uncertain, responses are attenuated toward intermediate defaults across risk, beliefs, and forecasts.  \cite{enke2024behavioral} report ``behavioral attenuation'' across more than 30 experiments, linking lower elasticities to higher measured cognitive uncertainty. Complementary work quantifies problem complexity in risky choice. \cite{enke2023quantifying} quantify complexity using randomness and cognitive uncertainty, and \cite{agranov2025complex} study subjective heterogeneity in perceived complexity. \cite{de2024caution} relate these measures with measures of ambiguity aversion.

Other researchers also measure the degree of confidence subjects have in their choice. E.g.,  \cite{butler2007imprecision, butler2011imprecision, lucia2024confidence, demartino2013confidence, cubitt2015imprecision} elicit confidence in the chosen option, using scales ranging from binary to a 0-100 scale (or percentages). Relatedly, \cite{falk2025limited}, under parametric assumptions show that one can recover the precision of information for subjects from response patterns in surveys.

Paralleling our theoretical work, \cite{bernheim2026improvable} points out empirical problems in eliciting measures of confidence due to misinterpretation of subjects (we rule this out, in essence assuming subjects perfectly understand the elicitations).

Although most work on confidence elictations are unincentivized, there is a small literature looking at how one can elicit notions of confidence in an incentive compatible way.
\cite{pkeski2025nondistortionary} shows that from a theoretical perspective that this is quite difficult (we assume away this issue, and suppose that the DM truthfully reports).

A key intuition that relates the degree of difficulty in choice to choice randomness comes from models inspired by \cite{fechner1860elemente}, including \cite{he2023random,he2024moderate,shubatt2024tradeoffs} where the probability an option is chosen (from a binary menu) is proportional to the utility difference between options, which offer accounts of random choice via imperfect discrimination.  These models can arise from Bayesian Expected Utility when restricted to specific functional forms. We discuss them in more detail in Appendix \ref{sec:appfech}.   

Our theoretical emphasis on information structures connects to models in which agents acquire signals prior to choice.  Like us, in  \cite{lu2016random} the experiments are exogenous, although his environment is richer than ours (the objects of choice are Anscombe-Aumann acts), which provides additional structure for identification.  \cite{safonov2017random, safonov2022random, DovalEilatLiuZhou2025} consider environments more similar to ours (which lack the richness of \cite{lu2016random}) and demonstrate that identification is typically much more difficult (if not impossible). \cite{rehbeck2019revealed} and \cite{caplin2021comparison} provide revealed preference type characterizations of models of learning in similar environments where the objects of choice generate state-contingent payoffs, and the utility function is known (and the experiments are exogenous), e.g., \cite{caplin2021comparison} considers a GARP style characterization of comparing the Blackwell informativeness of experiments where state-contingent data is available. \footnote{\cite{deOliveiraLamba2022} consider similar issues in a dynamic setting.} \cite{natenzon2019random} considers a specific structure on Blackwell experiments and considers the implications for choice.  \cite{vaeth2025imprecision} provides conditions under which ``noise'' attenuate updating.  Although related to our paper, we focus on measures that depend on the induced choice, whereas he focuses on beliefs over states.

 A distinct literature, rational inattention, allows the experiment to vary across choice problems, as in \cite{matvejka2015rational, caplin2015revealed}, and so the experiments are instead endogenous.

There is also a recent literature on cognitive imprecision that assumes individuals get noisy signals about the utilities (absolute or relative) of items, and then update their priors.  This literature, including \cite{KhawLiWoodford2021,Vieider2024, SteinerStewart2016,GossnerSteiner2018}, focuses on showing various behavioral anomalies can be generated due to imprecise signals, while recent experimental evidence, like \cite{Oprea2024} and \cite{frydman2023source,FrydmanJin2022}, provide some support for these hypotheses.

Relative to these existing theoretical frameworks, we focus on the information experiment as the primitive that choice situations induce, provide a clean mapping from lab measurements (randomness, confidence) to orders over experiments; and provide conditions under which these orders coincide or diverge.

One particularly fruitful area where both choice randomness and additional measurements have been leveraged to think about choice difficulty has been dynamic sequential-sampling models.  These link decision time with accuracy under costly information acquisition.  The literature is vast and extends well beyond economics (e.g., \cite{krajbich2010visual}), but has been recently explored by economists (\cite{fudenberg2018speed,gonccalves2024speed,gonccalvesrevealing}).  Our analysis abstracts from response times.  However, in Section \ref{sec:examples} we discuss an environment where the DM repeatedly samples from the same experiment, which is quite similar to the setting considered in this literature. However, our results are much weaker than those found in the literature, in large part because we allow for a much wider range of information generating processes.  

We also consider a classic measure of preference intensity: willingness-to-accept to switch (\cite{Knetsch84}), which has been used extensively within economics. We leverage similar payoff-based comparisons ---``WTA to switch'' between options--- to recover value-of-information in our framework.

Outside of economics, there is also a large literature on meta-cognition.  This work (see \cite{fleming2014measure} and \cite{rahnev2025comprehensive} for recent surveys) develops various measures that capture how well decision-makers understand the quality of their decisions, including choice confidence.  These typically use a very different paradigm  (where payoffs are not based on the utility of the outcome selected, but on whether the correct answer was made, see Section \ref{sec:suffpayoff} for a formal discussion) and also generally leverage specific functional form assumptions. 

Linking this literature outside of economics to formal economic analysis, recent work by \cite{bilotta2025introspection} develops a formal theory of metacognition and tests it using data. He  uses a psychophysics type framework (where only the ordinal ranking of options in a state matters for the decision-maker).  He then shows one can use confidence data, along with different test scoring regimes to understand whether individuals are sophisticated about the information process, in other words, whether they have a correct understanding of the Blackwell experiment they face.  In contrast, we assume they do.

\section{Understanding Choice Difficulty}\label{sec:main}

\subsection{Definitions}\label{sec:def}

Consider a finite set of alternatives, $X$, where typical options are denoted by $x,y,..,z$.  A decision-maker (DM) is not certain how the observed items map to outcomes;  or in other words, they are uncertain about the utility level of any given item.   To this end, let $\Omega$ be a finite set of states, with prior $\pi\in\Delta(\Omega)$. The decision maker is endowed with a state-dependent utility function $u:X\times\Omega\rightarrow\mathbb{R}$.  Refer to the tuple $(X,\Omega,u,\pi)$ as an \emph{environment}.  For $Y \subseteq X$ we 
 refer to $\Omega(Y)$ as states where elements of $Y$ are the maximizers of utility, and no elements of $X \backslash Y$ are maximizer of utility, that is,  
 $$\Omega(Y)=\{\omega \in \Omega:\arg\max_{x\in X}u(x,\omega)=Y\}.$$

An \emph{experiment} for $(\Omega,\pi)$ is a pair $(S,\sigma)$ consisting of a finite set of signals, $S$, and a map $\sigma:S \times \Omega\rightarrow [0,1]$ satisfying $\sum\limits_{s\in S}\sigma(s|\omega)=1$. The quantity $\sigma(s|\omega)$ denotes the probability that signal $s$ occurs when the state is  $\omega$. %Given a state $\omega$, we have  $\sum\limits_{s\in S}\sigma(s|\omega)=1$, i.e., $\sigma(\cdot|\omega)\in \Delta(S)$. %A type of experiment we will use frequently is experiments where $|S|=|\Omega|$ and each $\omega\in\Omega$ is associated with some $s_{\omega}\in S$.

Given that signals can occur with zero probability in all states, for our purposes it is without loss to assume that $S$ is the same    across experiments, and so simply identify experiments with $\sigma$.

 Given environment $(X,\Omega,u,\pi)$ and experiment $\sigma$, we define the posterior over states after receiving signal $s\in S$. That is, $$\pi_{\sigma}(\omega|s)=\frac{\pi(\omega)\sigma(s|\omega)}{\sum\limits_{\omega' \in \Omega }\pi(\omega')\sigma(s|\omega')}.$$

The posterior $\pi_{\sigma}(\cdot|s)$ governs choices via a uniform randomization over maximizers. Given a signal $s$ and experiment $\sigma$, each alternative $x$ yields the expected utility  

\[
 \sum_{\omega \in \Omega} u(x,\omega)\,\pi_{\sigma}(\omega | s).
\]

Then the set of maximizers is 

\[
C_{\sigma}(s)=\argmax_{x \in X} \sum_{\omega \in \Omega} u(x,\omega)\,\pi_{\sigma}(\omega | s)
\]

We may refer to the uniform randomization over $C_{\sigma}(s)$ as \(c_{\sigma}(s)\in \Delta(X)\).
Note that \(c_{\sigma}(x|s)=c_{\sigma}(y|s)\) for all
\(x,y \in C_{\sigma}(s) \). We also define  $S_\sigma(Y)$ as the set of signals where $Y \subseteq X $ yields maximal expected utility: $$S_\sigma(Y)=\{s\in S| \ Y = C_\sigma(s)\}$$

For the majority of this paper we will focus on a  binary choice environment with elements $x$ and $y$.  Given this, a state is simply a pair of utilities $\omega=(u(x), u(y))$. We also impose that individuals cannot distinguish ex-ante between $x$ and $y$: we impose that $\pi (u(x), u(y)) = \pi(u(y), u(x))$ so that $x$ and $y$ are symmetric; ex-ante, items are completely indistinguishable.\footnote{ The symmetry assumption is meant to assume ex-ante indistinguishability, which makes it harder to generate the negative results.}

Our goal here is to study comparisons of experiments according to various criteria. We focus on three orders.  The first, which is based on traditional measures of choice behavior, captures the amount of randomness the DM exhibits in their choice. The second, inspired by recent work on certainty in choice, measures how confident the DM is in their choice.  The last captures the degree of understanding a DM has for a problem by considering the expected payoff of the DM.  We think of the first two as being observable, while the third is typically unobservable.   

We consider two nested settings in terms of conditioning. The first is where we compare our observables conditional on each possible  state of the world.  In other words, it is possible to identify all data coming from a given state, and to group data by state, even if it is not possible to observe the utility realizations associated with each state.
The second is where the observable orderings hold ``on average,'' i.e., averaging over all possible states of the world, using the weights given by the prior.

First, we compare experiments in terms of choices. To do this, we define both conditional and unconditional random choice. The \emph{conditional random choice rule} induced by experiment $\sigma$ in state $\omega$ is given by 
$$\rho_{\sigma}(x|\omega)=\sum_{s\in S} c_{\sigma}(x|s)\sigma(s|\omega).$$ 
The \emph{(unconditional) random choice rule} induced by $\sigma$ then simply $$\rho_{\sigma}(x)=\sum_{\omega\in\Omega} \pi(\omega)\rho_{\sigma}(x|\omega).$$

We now provide two definitions of randomness. The first one requires less randomness in every state. The second one compares randomness by averaging across states.  
Randomness, or closely related measures, have been used in recent work to help assess the difficulty of problems.  The empirical literature frequently measures whether the same DM facing the same state gives different choices across repetitions, a proxy for randomness. \cite{enke2024behavioral} have subjects repeat a decision twice and then ask whether subjects made the same choice in both repetitions.  \cite{enke2023quantifying} think about the compression of choice probabilities towards 50-50 as a measure of choice difficulty; they look at subjects who faced the same problem five times and ask what fraction of them are inconsistent at least once (i.e. didn't make the same decision in all five iterations).  

\begin{defn}\label{wchoice} Let $\sigma$ and $\sigma'$ be two experiments. We say 
\begin{enumerate}
\item \label{choice} $\sigma$ is less random than $\sigma'$ if  $\max\limits_{x} \ \rho_{\sigma}(x|\omega)\geq \max\limits_{x} \rho_{\sigma'}(x|\omega)$ for every $\omega$,
\item \label{expectedchoice}$\sigma$ is expected-less random than $\sigma'$ if $\max\limits_{x} \ \rho_{\sigma}(x) \geq\max\limits_{x } \ \rho_{\sigma'}(x)$.
\end{enumerate}
\end{defn}

Intuitively, $\max_x \rho_{\sigma}(x\mid\omega)$ measures how decisive the DM is in state $\omega$: it is $1$ under deterministic choice and $1/2$ under maximal randomization. Our first notion of ``less random'' requires that in every state the most likely action under $\sigma$ is at least as likely as under $\sigma'$. The second notion instead compares the ex-ante distribution of choices: $\max_x \rho_{\sigma}(x)$ is the probability of the DM's most frequently chosen action. A higher value means that, on average across states, the DM's choices are more concentrated on a single option, and therefore less random. Notice that although expected less random is a complete ordering over experiments, less random is a partial order. Moreover, they are independent.   We will show that less random will be more useful in generating our positive results later in the paper.

\begin{rem}
    Less random neither implies, nor is implied by, expected less random.
\end{rem}

To see this, consider two states. State by state less random may occur because $x$ is chosen very often in one state, $y$ is chosen very often in another, but this means that aggregating across states, the two are chosen almost at the same probability, and so choice is very random.

To understand the definitions, we provide a simple two-state, two-signal, and two-option example. Consider a world where each option can take on one of two potential utility values, $u_H$ and $u_L$, but the two options cannot have the same utility value.  The prior is equal over the two states. Suppose also there are two signals $s_1$ and $s_2$.  An experiment $\sigma$ has the structure shown in Table \ref{table:Blackwell}. In this world, every experiment is parametrized by two variables: $\sigma=(\theta,\gamma)$. Under $\theta+\gamma > 1$, receiving $s_1$ makes $x$ optimal, and receiving $s_2$ makes $y$ optimal. So the induced choice probabilities are $\rho_{\sigma}(x \mid \omega_1)=\theta $, $ \rho_{\sigma}(y\mid\omega_1)=1-\theta$, $\rho_{\sigma}(y\mid\omega_2)=\gamma$, and $ \rho_{\sigma}(x\mid\omega_2)=1-\gamma$.

\begin{table}[H]
\begin{tabular}{ccc}
\toprule
& \multicolumn{2}{c}{Signals}  \\ \cmidrule(lr){2-3}
\multicolumn{1}{c}{\multirow{-2}{*}[0.2ex]{$\Omega$}}
 & \ $s_1$ \  & $s_2$ \   \\ \midrule
\multicolumn{1}{c|}{$(u_H,u_L) $} & $\theta$ & $1-\theta$  \\ \hline
\multicolumn{1}{c|}{$(u_L,u_H) $} & $1-\gamma$ & $\gamma$   \\  
\bottomrule
\end{tabular}
\caption{\tiny{The Blackwell Experiment parametrized by two variables $\sigma=(\theta,\gamma)$. Under $\theta+\gamma> 1$, receiving signal $s_1$ ($s_2$) implies choosing the first (second) option is optimal.}}\label{table:Blackwell}
    \end{table}

Notice that being less random can occur not only because the currently most common signal (in a given state) occurs more often, but also if it changes to occurring so infrequently the other signal now occurs frequently enough to induce choice dominance.    Formally, $(\theta',\gamma')$ is less random than $(\theta,\gamma)$ if $\max\{ \theta, 1-\theta\} \leq \max\{ \theta', 1-\theta'\}$ and $ \max\{ \gamma, 1-\gamma\} \leq \max\{ \gamma', 1-\gamma'\}$.  Note that $(1,1)$ is the least random experiment while $(0.5,0.5)$ is the most random one.  Figure \ref{fig:choicedom} illustrates state-by-state choice randomness. The red region contains experiments that are more random than $(\theta,\gamma)$, the green regions contain experiments that are less random than $(\theta,\gamma)$, and the white area reflects incomparability under the state-wise randomness order. 

The expected-less randomness relation is illustrated in the right panel of Figure~\ref{fig:expchoicedom}. Formally, $(\theta',\gamma')$ is
expected-less random than $(\theta,\gamma)$ if
\[
\max \{\theta + 1-\gamma, 1-\theta + \gamma\}
\leq \max \{\theta' + 1-\gamma', 1-\theta' + \gamma'\}.
\]
Unlike the state-by-state notion of choice randomness, expected randomness induces a complete order, with straight iso-expected-randomness loci. Note that $(1,1)$ and $(0.5,0.5)$ lie on the same locus and achieve the maximal level of expected randomness. Expected randomness decreases symmetrically as one moves outward from this locus in the direction indicated by the thicker level curves.

\begin{figure}[H]
\centering
\begin{subfigure}{.5\textwidth}
  \centering
  \includegraphics[width=.8\linewidth]{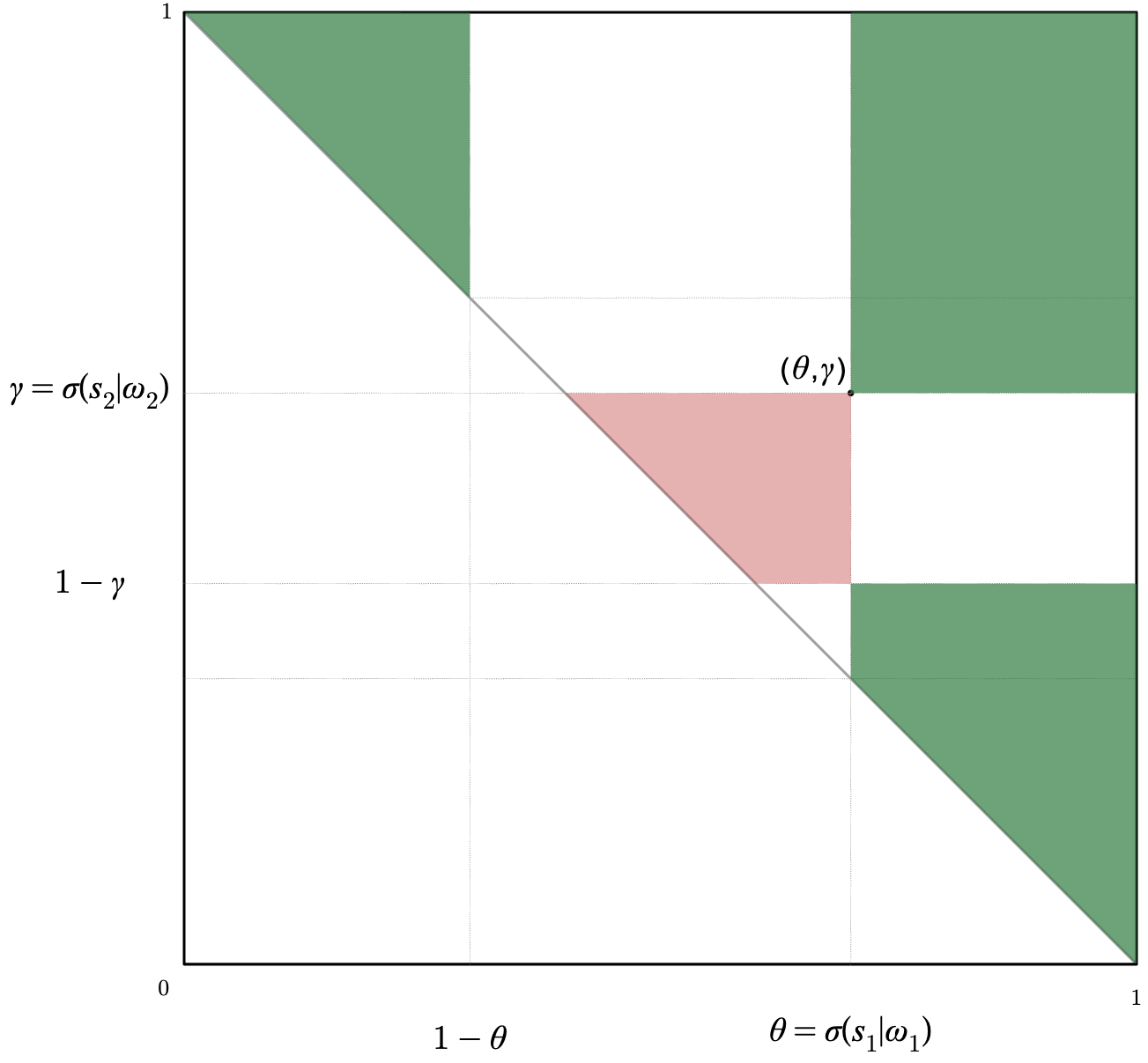}
    \caption{Choice Randomness}
    \label{fig:choicedom}
\end{subfigure}%
\begin{subfigure}{.5\textwidth}
  \centering
   \includegraphics[width=.8\linewidth]{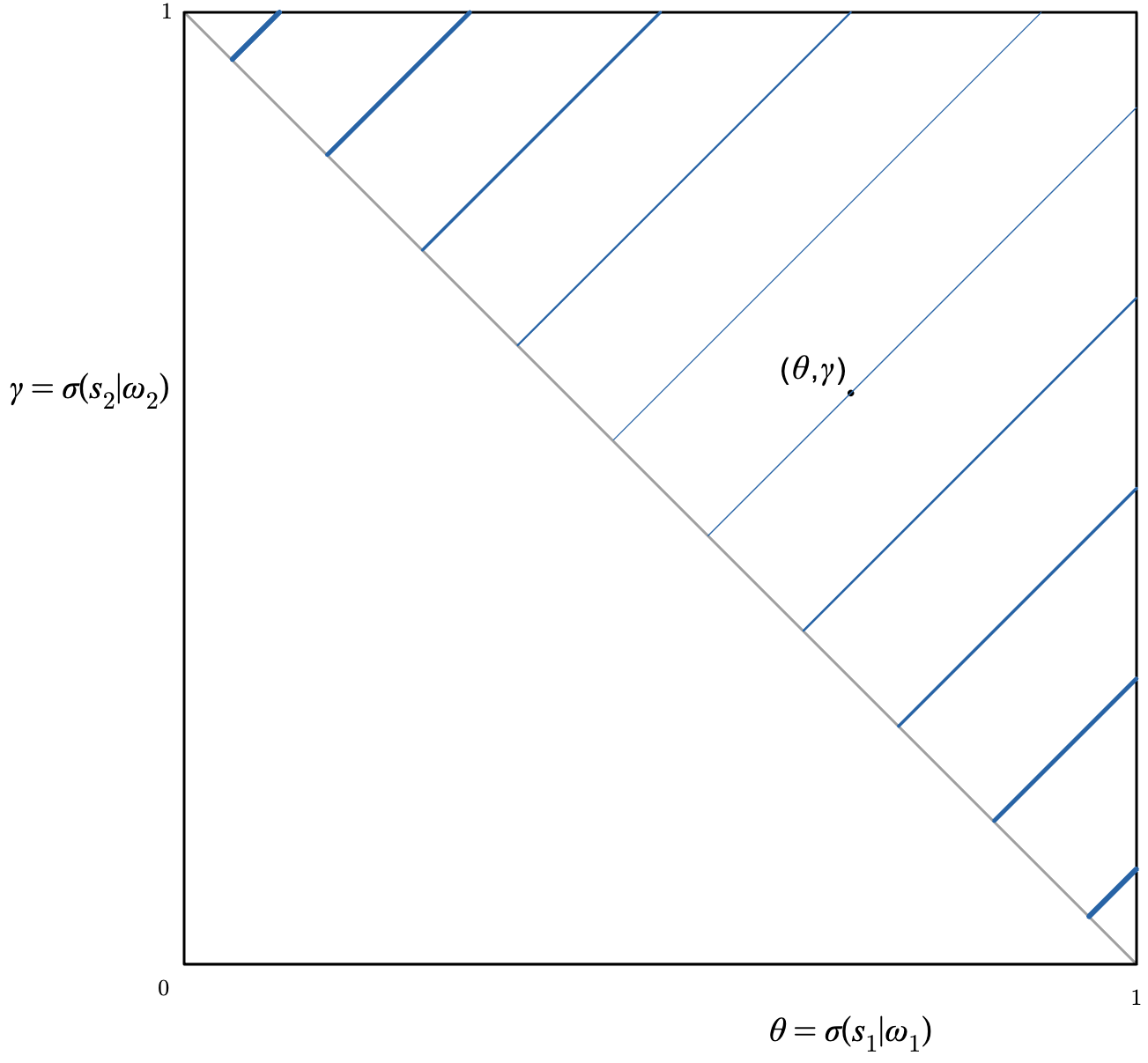}
    \caption{Expected Randomness}
    \label{fig:expchoicedom}
\end{subfigure}
\caption{\tiny{Illustration of randomness in a two-state, two-signal, two-option environment under the experiment $\sigma=(\theta,\gamma)$. Panel (A) shows state-by-state choice randomness: the light red region contains experiments that are more random than $(\theta,\gamma)$, the darker green regions contain experiments that are less random than $(\theta,\gamma)$, and the white area reflects incomparability under the state-wise randomness order. Panel (B) shows expected randomness: the order is complete, and each line depicts an iso-expected-randomness locus, with randomness increasing as one moves outward along the direction indicated by the thicker level curves.} }
\label{fig:random}
\end{figure}

%We can see from Figure \ref{fig:expchoicedom} why expected randomness may be a problematic measure --- there are experiments which align signals which states strictly more often in all states but which fail to be expected less random.  

%The figures also indicate why indicativeness can be important --- they force the less random sets to be convex (in Figure \ref{fig:choicedom} it selects the upper right green area).  

Now, we compare two experiments according to DM's confidence. To do this, we need more notation. Let $\hat{\Omega}(x)$ be the set of states where $x$ is an optimal choice; in other words, it is the union of $\Omega(x)$ and $\Omega(\{ x,y\})$. That is, $\hat{\Omega}(x)=\{\omega: u(x,\omega) \geq  u(y,\omega)\}$. Moreover, let $\hat{S}_{\sigma}(x)$ be the set of signals where $x$ is weakly optimal according to interim expected utility (i.e. the union of $S_{\sigma}(x)$ and $S_{\sigma}(\{x,y\})$.  Then conditional on $s \in \hat{S}_\sigma(x)$, the total probability of $x$ being optimal given $s$ is $\pi_{\sigma}(\hat{\Omega}(x)|s)$.  Then the (conditional) choice confidence given $\sigma$ and $\omega$ is\footnote{Recall that when $x$ and $y$ are tied in terms of interim expected utility the choice rule applies uniform randomization.} \begin{equation}\label{eq:confidence}\psi_{\sigma}(x|\omega)=\frac{\sum_{s\in S}\sigma(s|\omega) c_{\sigma}(x|s) \pi_{\sigma}(\hat{\Omega}(x)|s)}{\rho_{\sigma}(x|\omega)}.\end{equation}

The way to interpret this is that, given that the DM chose $x$ in state $\omega$, what is their average posterior probability that $x$ is actually optimal? The unconditional confidence (averaging across all possible states) is:  \begin{equation}\label{eq:confidence2}\psi_{\sigma}(x)=
\frac{\sum_{\omega\in\Omega}\sum_{s\in S}\pi(\omega)\sigma(s|\omega)c_{\sigma}(x|s)\pi_{\sigma}(\hat{\Omega}(x)|s)}{\sum_{\omega\in\Omega}\pi(\omega)\rho_{\sigma}(x|\omega)}.
 \end{equation}

Measures of confidence have been widely used in the literature.\footnote{Some models elicit confidence for belief elicitation rather than choice; e.g., \cite{de2024caution} or \cite{enke2023cognitive}.}  Recently they have been elicited as through measures of cognitive uncertainty.\footnote{\cite{enke2023cognitive} highlight that cognitive uncertainty may reflect things like imperfect perception, preference uncertainty, computational difficulties, and other aspects of bounded rationality.  To the extent that these can all be captured through a noisy signal about the right choice, they can be captured by our framework.} \cite{agranov2025complex} and \cite{lucia2024confidence} ask individuals how certain they are (on a numerical scale) that the choice they made was the correct one for them. \cite{demartino2013confidence} asks a similar question, albeit with a different scale.  Sometimes, as in \cite{enke2023quantifying, enke2024behavioral, bernheim2026improvable} the elicitation is in percent --- subjects are asked how certain they are (in percent) that they chose the best option.\footnote{The original paper on cognitive uncertainty, \cite{enke2023cognitive}, measures confidence in choice in a lottery as certainty that a lottery's value is within some interval.} \cite{cubitt2015imprecision, butler2007imprecision, butler2011imprecision} use a simple binary ``sure,'' ``not sure'' dichotomy to measure confidence.

\begin{defn} Let $\sigma$ and $\sigma'$ be two experiments and $X(\sigma,\sigma')$ be the set of options chosen unconditionally with strict probability.\footnote{Formally, $X(\sigma,\sigma')$ is equal to $\{x \in X \ | \  \rho_\sigma (x)\rho_{\sigma'} (x)>0\}$.} Then we say 

\begin{enumerate}
%\item  $\sigma$ $\omega$-confidence dominates $\sigma'$ if $\psi_{\sigma}(x|\omega) \geq \psi_{\sigma'}(x|\omega)$ for all $x$ where $\rho_\sigma (x)\rho_\sigma' (x)>0$.
%\item $\sigma$ confidence dominates $\sigma'$ if $\sigma$ $\omega$ confidence dominates $\sigma'$ for every $\omega$
\item $\sigma$ confidence dominates $\sigma'$ if  $\psi_{\sigma}(x|\omega) \geq \psi_{\sigma'}(x|\omega)$ for all $\omega$ and $x \in X(\sigma,\sigma')$,
\item $\sigma$ expected confidence dominates $\sigma'$ if $\psi_{\sigma}(x) \geq \psi_{\sigma'}(x)$
for all $x \in X(\sigma,\sigma')$.
\end{enumerate}
\end{defn}

Analogously to choice randomness, our definitions compare what happens conditional on a specific state, comparing what happens across all states (state by state) and comparing what happens averaging across states (using the prior).  Both of these ordering are partial orderings.  Interestingly enough, unlike randomness, where state-by-state measurements will be useful for our sufficient conditions, the opposite is true for confidence --- averaging across states will be more useful for our sufficient conditions.  We will also discuss, in Section \ref{sec:suffpayoff}, a measure of confidence which aggregates not just across states (like expected confidence) but also across items.  

We now illustrate these definitions in our a two-state, two-signal, two-option environment. The posterior probabilities are
\[
  \pi_{\sigma}(\omega_1| s_1)=\frac{\theta}{\theta+(1-\gamma)}, \qquad
  \pi_{\sigma}(\omega_2| s_2)=\frac{\gamma}{\gamma+(1-\theta)}.
\]
Substituting into \eqref{eq:confidence}, the conditional confidence in $x$ reduces to
\[
  \psi_{\sigma}(x|\omega_1)
  =\frac{\sigma(s_1|\omega_1)\cdot 1\cdot\pi_{\sigma}(\omega_1| s_1)}
        {\sigma(s_1|\omega_1)}
  =\pi_{\sigma}(\omega_1| s_1)
  =\frac{\theta}{\theta+1-\gamma}
\]
for \emph{both} $\omega\in\{\omega_1,\omega_2\}$; symmetrically, $\psi_{\sigma}(y|\omega)=\gamma/(\gamma+1-\theta)$ for both states.  Intuitively, given that the DM chose $x$, the signal $s_1$ must have been observed, so confidence depends only on how diagnostic $s_1$ is---not on which state actually obtained.

Because $\psi_{\sigma}(x|\omega)$ is constant across states, equations \eqref{eq:confidence} and \eqref{eq:confidence2} coincide, and hence confidence domination and expected-confidence domination are equivalent in this environment. Explicitly, $(\theta,\gamma)$ confidence dominates $(\theta',\gamma')$ if and only if 
$$
  \frac{\theta}{1-\gamma}\geq\frac{\theta'}{1-\gamma'}
  \quad\text{and}\quad
  \frac{\gamma}{1-\theta}\geq\frac{\gamma'}{1-\theta'},
$$ i.e.\ both posterior probabilities---of $\omega$ after $s_1$ and of $\omega'$ after
$s_2$---are weakly higher under $(\theta,\gamma)$.

Figure~\ref{fig:confidence} illustrates this in the $(\theta,\gamma)$ plane.  For a
reference experiment $\sigma$, the darker red region contains experiments that confidence dominate $\sigma$ (both likelihood ratios are higher), the lighter red region contains experiments dominated by $\sigma$, and the two remaining wedges consist of experiments that are incomparable with $\sigma$.  The two boundary lines $\theta/(1-\gamma)$ and $\gamma/(1-\theta)$ ( the likelihood ratios of $\sigma$) partition the space into these four regions.  

\begin{figure}[H]
  \centering
  \includegraphics[width=0.42\linewidth]{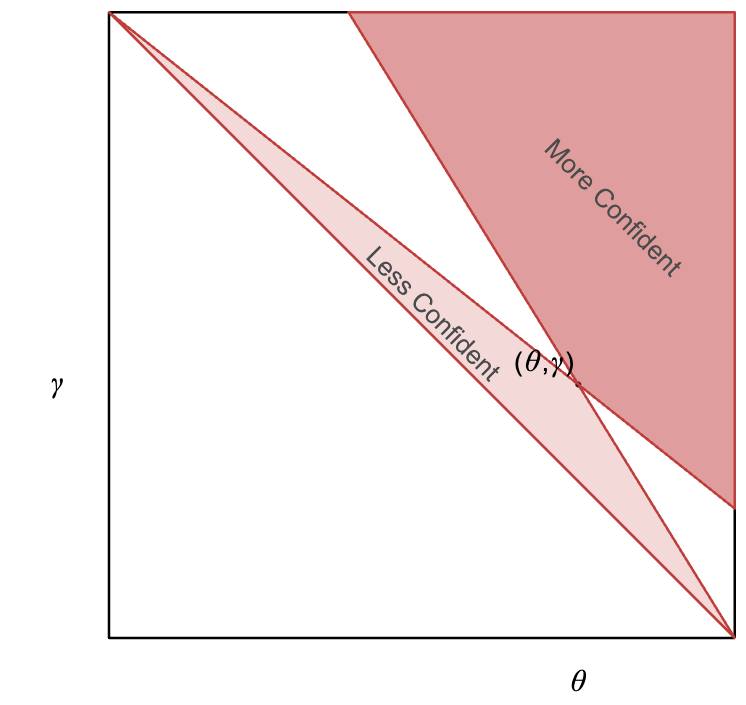}
  \caption{\tiny{Confidence dominance in the binary state/signal environment.  Dark Red: more
           confident than $\sigma$; Light Red: less confident; white wedges: incomparable.}}
  \label{fig:confidence}
\end{figure}

Before continuing, we note that expected confidence is for any option is simply the state-contingent confidence for a given option, averaged across states. 

\begin{rem} Confidence domination implies expected confidence domination.
\end{rem}

Last, we provide definitions that capture how well a decision-maker understands a choice problem, which we take as a proxy for their expected payoff.  The classic notion to capture this is the Blackwell ordering over experiments; but this seems too weak for our purposes (we revisit this ordering in Section~\ref{sec:blackwell}).  Recall that the Blackwell ordering asks whether an experiment provides a higher expected payoff to the DM across all possible decision problems.  Here we have a fixed set of actions (choose one of two options), a fixed mapping from choice to payoffs that depends on the state (the state-dependent utilities) and a fixed prior.  Thus, we focus on an ordering which ranks experiments according to the expected utility induced by the choice problem the DM actually faces.  We call this the choice value ranking.  

The expected utility given $\sigma$, conditional on state $\omega$ is $$W(\sigma|\omega) = \sum_{s \in S} \sigma(s|\omega) \sum_x c_{\sigma}(x|s) u(x|\omega). $$

The ex-ante expected utility of $\sigma$ is then just $$W(\sigma) =\sum_{\omega} \pi(\omega) \sum_{s \in S} \sigma(s|\omega) \sum_x c_{\sigma}(x|S) u(x|\omega). $$

\begin{defn} Let $\sigma$ and $\sigma'$ be two experiments. We say $\sigma$ choice payoff dominates $\sigma'$ if $W(\sigma) \geq W(\sigma')$.

%\item  $\sigma$ $\omega$-choice value dominates $\sigma'$ if $W(\sigma|\omega) \geq W(\sigma'|\omega)$ 
%\item $\sigma$ choice value dominates $\sigma'$ if $\sigma$ $\omega$ choice value dominates $\sigma'$ for every $\omega$

\end{defn}

Choice payoff domination is a complete ordering. To illustrate this concept in our two-state, two-signal, two-option environment, normalize so that selecting the higher-utility option yields $1$ and the lower-utility option yields $0$.  The expected payoff is $\theta$ in state $\omega_1$ and $\gamma$ in state $\omega_2$; with symmetric priors the expected payoff
is proportional to $\theta+\gamma$.  Figure~\ref{fig:payoff} illustrates: iso-payoff curves are lines $\theta+\gamma=\mathrm{const}$, so the higher-payoff set is the region north-east of the reference experiment along these lines.

\begin{figure}[h!]
  \centering
  \includegraphics[width=0.42\linewidth]{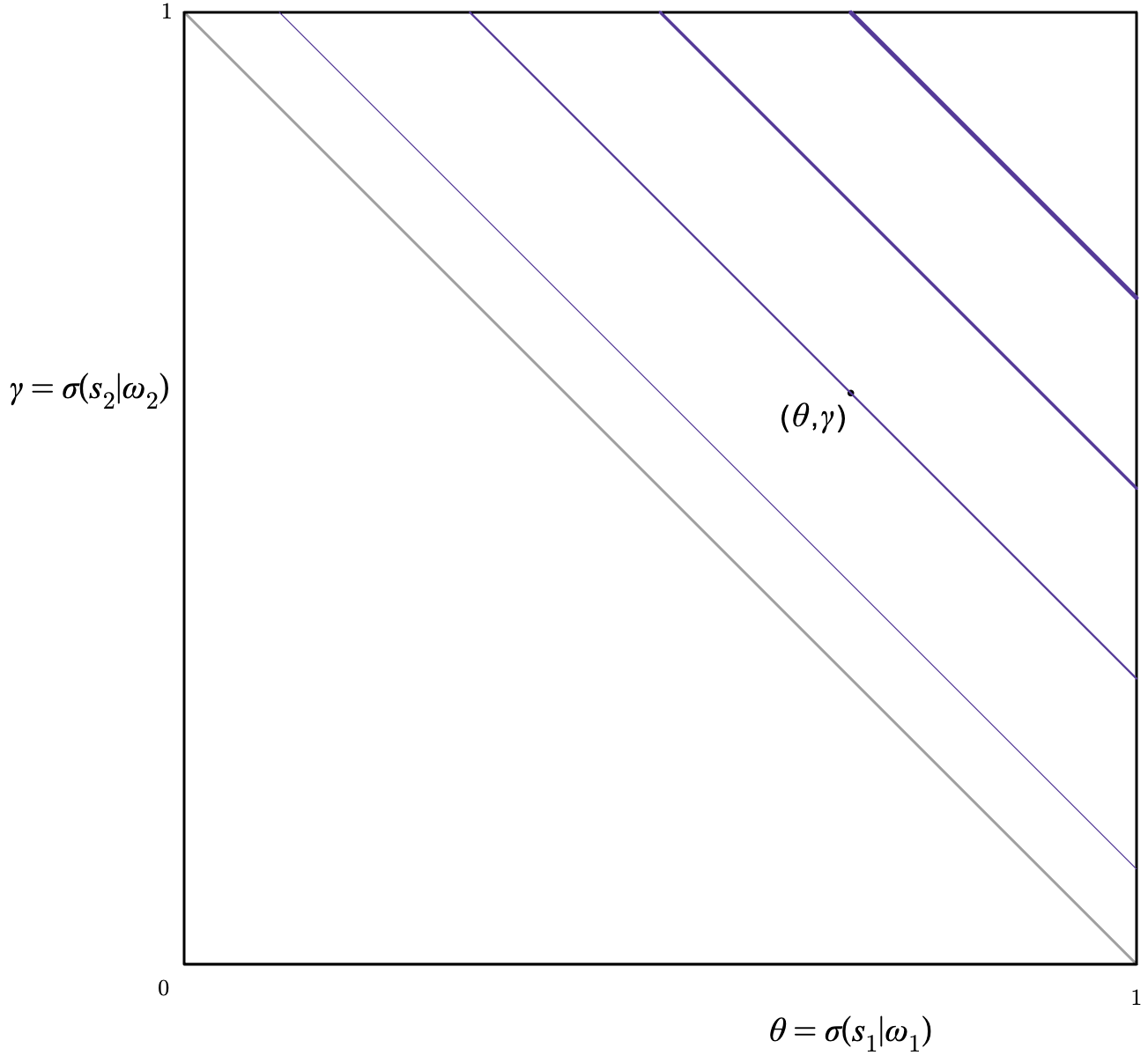}
  \caption{\tiny{Choice Payoff Domination: Iso-payoff curves are lines $\theta+\gamma=\mathrm{const}$, the direction indicated by the thicker level curves.}}
  \label{fig:payoff}
\end{figure}

\subsection{Relation Between Randomness, Confidence and Payoff Dominance}\label{sec:neg}

Although it seems somewhat intuitive that decreased randomness and higher confidence in choices should be indicative of greater understanding of a problem, in that the decision-maker has a higher expected payoff, this is not true.  In fact, even if one experiment generates \emph{both} lower randomness and greater confidence (either state by state, or in expectation) it is not the case that we can infer the DM understands it better; in fact they may have a lower expected payoff.  The lack of relationship between these orderings is even stronger though.  In fact, our next proposition shows that, even if we knew the ranking of expected payoffs induced by the Blackwell experiments, as well as observing one of either observing the randomness ranking or the confidence ranking, we would not know the ranking of the other.

%No two orderings imply the third. That is, for any two of the orderings, a common and known ranking between a pair of experiments does not indicate the same ranking for the third ordering.

\begin{proposition}\label{prop:neg2}~
    \begin{enumerate}

        \item If $\sigma$ is less random (or less expected random) than and confidence dominates $\sigma'$, then it can be the case that $\sigma'$ choice payoff  dominates $\sigma$. 

        \item If $\sigma$ is less random (or expected less random) than and choice payoff dominates $\sigma'$ then $\sigma'$ can confidence dominate $\sigma$.
        
        \item If $\sigma$ confidence dominates and choice payoff dominates $\sigma'$ then $\sigma'$ can be less random (and expected less random) than $\sigma$.
    \end{enumerate}
\end{proposition}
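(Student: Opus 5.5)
All three parts are existence claims, so I will prove each one by exhibiting an explicit environment $(X,\Omega,u,\pi)$ with a symmetric prior and a pair of experiments $\sigma,\sigma'$. The construction exploits the asymmetry stressed in the introduction. The payoff $W$ is cardinal: it weights each state by the utility gap $|u(x,\omega)-u(y,\omega)|$. Randomness $\max_x\rho_\sigma(x|\omega)$ and confidence $\psi_\sigma(x|\omega)$, by contrast, depend only on which option is optimal in which state, through $\hat\Omega(x)$, and not on the size of the gap.

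\textbf{The environment.} I will use four states, closed under swapping the two coordinates:
\[
\omega_1=(H,0),\quad \omega_2=(0,H),\quad \omega_3=(1+\varepsilon,1),\quad \omega_4=(1,1+\varepsilon).
\]
The prior gives mass $p$ to each of $\omega_1,\omega_2$ and mass $q$ to each of $\omega_3,\omega_4$, with $2p+2q=1$. Here $H$ is large, $\varepsilon$ is small, and $p$, $q$, $H$, $\varepsilon$ are tuning parameters. Consequently $\hat\Omega(x)=\{\omega_1,\omega_3\}$, and posteriors only need to be computed over four states.

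\textbf{Main device: pooling.} A signal $t$ that is sent in $\omega_1$ and in $\omega_4$ produces the posterior expected-utility difference
\[
E[u(x)-u(y)\mid t]\ \propto\ pH\,\sigma(t|\omega_1)-q\varepsilon\,\sigma(t|\omega_4),
\]
so the DM chooses $x$ after $t$ whenever $pH\,\sigma(t|\omega_1)>q\varepsilon\,\sigma(t|\omega_4)$. The confidence attached to that choice is
\[
\frac{p\,\sigma(t|\omega_1)}{p\,\sigma(t|\omega_1)+q\,\sigma(t|\omega_4)}.
\]
The choice is governed by the stakes $H$ and $\varepsilon$, while the confidence is governed only by the masses $p$ and $q$. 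This lets me move $W$ independently of confidence. The pooling also keeps choices deterministic within a state, even when they are wrong, which makes a signal structure state-by-state less random at no cost. This is exactly the remark after Table~\ref{table:Blackwell}.

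\textbf{Part (1): less random and more confident, but lower payoff.} I take $\sigma$ to be fully revealing on the low-stakes states $\omega_3,\omega_4$, and on the high-stakes states to pool $\omega_1$ with part of $\omega_4$ and $\omega_2$ with part of $\omega_3$. I arrange the pooling so that the choice in $\omega_1,\omega_2$ remains deterministic but is made with low stakes-weighted precision. I take $\sigma'$ to be the mirror image: fully revealing on $\omega_1,\omega_2$, and nearly uninformative on $\omega_3,\omega_4$, for instance with $\theta'=\gamma'$ close to $1/2$ on that block. When $q\gg p$, $\sigma$ wins on per-state determinism and on posterior confidence, because the mass-weighted posteriors favour getting the frequent low-stakes states right. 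When $H$ is large, $\sigma'$ wins on $W$, because it gets the rare high-stakes states right. I will write the three objects as functions of the parameters, verify the inequalities for one numerical choice, and check the expected versions $\rho_\sigma(x)$ and $\psi_\sigma(x)$. By the symmetry of the construction both of these equal $1/2$ or coincide across options.

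\textbf{Parts (2) and (3).} These reverse the roles of the same template.

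For (2), let $\sigma$ be precise on the high-stakes states, so that $W$ is high and choice there is deterministic. Let $\sigma'$ be precise on the frequent low-stakes states. I then tune the pooling so that $\sigma'$ has higher confidence ratios while $\sigma$ still has higher per-state decisiveness. The pooling trick lets $\sigma$ be deterministic in every state, which gives maximal state-by-state decisiveness, without raising its confidence.

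For (3), I need $\sigma'$ to be deterministic everywhere while being wrong in some states, so that it has lower confidence and lower $W$. Meanwhile $\sigma$ should be correct and confident but randomize in some state, for example through a tie or a mixed signal in a low-stakes state. The expected-randomness version then requires checking $\max_x\rho(x)$. I expect to break the $x/y$ symmetry of the experiments slightly for this, while keeping the prior symmetric.

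\textbf{Expected obstacle.} I expect the hardest part to be the \emph{simultaneous} verification in (1) and (2) of the state-by-state confidence inequalities $\psi_\sigma(x|\omega)\ge\psi_{\sigma'}(x|\omega)$ for all $\omega$. These are conditional on the state, so they must hold for every state individually, which is more demanding than comparing the posteriors of single signals. My first attempt will be to make every chosen option be induced by a single signal, as in the $2\times2$ example. Then $\psi$ is constant across states, and the condition collapses to comparing one posterior per option. This is why I intend to keep each experiment to two signals, one leading to $x$ and one leading to $y$.
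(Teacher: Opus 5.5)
Your plan works, but it is organized differently from the paper. The paper proves the three parts with three unrelated examples:
\begin{itemize}
\item for (1), a six-state, three-signal pair in which $\sigma$ is a garbling of $\sigma'$ (Example~\ref{ex:3states_2}); splitting a signal creates randomness in one state and steers the DM toward the rare high-upside option, which lowers confidence;
\item for (2), the binary signal against the uninformative experiment (Example~\ref{ex:secondcounter});
\item for (3), the $2\times 2$ pair with $1-\theta<\gamma<\gamma'<1/2$ (Example~\ref{ex:binarcounter}).
\end{itemize}
Your (2) is essentially Example~\ref{ex:secondcounter}: pool a rare high-stakes state with a frequent state of the opposite ordinal ranking, so that the choice follows the stakes. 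Your (1) is its mirror image: the choice follows the prior mass, so $\sigma$ is deterministic and confident but wrong exactly where payoffs are large. Your (3) is new, and it goes through with no parameter restrictions. For example, let $\sigma'$ send $\omega_1,\omega_3,\omega_4$ to $r_x$ and $\omega_2$ to $r_y$. Let $\sigma$ send $\omega_1$, $\omega_3$ and a fraction $m\in(0,1)$ of $\omega_4$ to $s_x$, and the rest to $s_y$.

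The single template and the two-signal restriction buy you a confidence check with one inequality per option. The paper's choice buys Blackwell comparability, which its appendix uses to restate Proposition~\ref{prop:neg2} with Blackwell dominance. Your (1) pair is Blackwell-incomparable: the posterior of $\sigma$ after its pooled signal charges $\omega_4$ but not $\omega_3$, whereas every posterior of $\sigma'$ that charges $\omega_4$ also charges $\omega_3$. Its expected-randomness comparison also holds only with equality ($1/2=1/2$), whereas Example~\ref{ex:3states_2} gives a strict one.

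When you carry out (1), drop the heuristic that the choice is governed by the stakes and confidence by the masses. The pooled signal (all of $\omega_1$ and a fraction $a$ of $\omega_4$) induces $y$ iff $q\varepsilon a>pH$. Meanwhile $W(\sigma')>W(\sigma)$ requires $pH>q\varepsilon(1-\theta')$. So $pH/(q\varepsilon)$ must lie in $(1-\theta',a)$, and ``$H$ large'' cannot be taken literally.

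If the pooled signal instead induces $x$, the construction fails. Then $\psi_\sigma(x\mid\omega_1)=p/(p+qa)$. Beating the two-signal $\sigma'$, whose confidence is $(p+q\theta')/(p+q)$, forces $\theta'<1-a$, while the payoff requirement in that case is $\theta'>1-a$. Against a fully revealing $\sigma'$ it fails outright, since $\psi_{\sigma'}(x\mid\omega_1)=1$. So $\sigma$ must be deterministically wrong on $\omega_1,\omega_2$, and ``fully revealing on $\omega_3,\omega_4$'' should read ``always correct on $\omega_3,\omega_4$''.

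A working instance:
\begin{itemize}
\item Parameters: $H=7$, $\varepsilon=1$, $p=1/22$, $q=10/22$.
\item $\sigma$ sends $\omega_1,\omega_4$ to $s_y$ and $\omega_2,\omega_3$ to $s_x$.
\item $\sigma'$ sends all of $\omega_1$, $0.6$ of $\omega_3$ and $0.4$ of $\omega_4$ to $r_x$, and symmetrically to $r_y$.
\end{itemize}
Then $W(\sigma)=40/22<46/22=W(\sigma')$. Each option's confidence is $10/11$ under $\sigma$ versus $7/11$ under $\sigma'$. Finally, $\sigma$ is deterministic while $\sigma'$ randomizes in $\omega_3,\omega_4$.

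One convention needs stating. $\sigma$ never picks $x$ in $\omega_1$ or $\omega_4$ (nor $y$ in $\omega_2$ or $\omega_3$), so $\psi_\sigma(x\mid\omega)$ is $0/0$ there. As in Example~\ref{ex:3states_2}, you must read confidence dominance as a comparison only where both sides are defined. Equivalently, it is a comparison of signal posteriors, which your two-signal design makes canonical.
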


Because confidence domination implies expected confidence domination the first part of  Proposition \ref{prop:neg2} holds if the former is replaced by the latter.  We prove the first part by example.  

\begin{example}\label{ex:3states_2} Suppose we have two items, but three utility levels, $0, 1$ and $1,000$.  We assume that items cannot have the same utility; but that the prior probability that an item has the highest utility ($1,000$) is $2\%$, and the probability it has the other two utility levels ($0$ and $1$) are each $49\%$. We consider two Blackwell experiments described in the following table.

\begin{table}[h!]
\begin{tabular}{ccccccccc}
\toprule
& & \multicolumn{3}{c}{$\sigma$} & \ \  & \multicolumn{3}{c}{$\sigma'$} \\ \cmidrule(lr){3-5} \cmidrule(lr){7-9}
Prior & States  & \multicolumn{3}{c}{Signals} & & \multicolumn{3}{c}{Signals} \\ \cmidrule(lr){1-1}  \cmidrule(lr){2-2} \cmidrule(lr){3-5} \cmidrule(lr){7-9}
\multicolumn{1}{c}{\multirow{-1}{*}[0.2ex]{$\pi$  }} & \multicolumn{1}{|c|}{\multirow{-1}{*}[0.2ex]{$ (u(x), u(y))$}}  & \ $s_1$ \  & $s_2$ \  & $s_3$ \ &  & \ $s_1$ \  & $s_2$ \  & $s_3$ \ \\ \midrule
$0.005$ & \multicolumn{1}{|c|}{(1000,1)} & $1$ & $0$ & $0$ & & $1$ & $0$ & $0$  \\ \midrule
$0.005$ & \multicolumn{1}{|c|}{(1000,0)} & $1$ & $0$ & $0$ & & $\frac{1}{2}$ & $0$ & $\frac{1}{2}$  \\ \midrule
$0.005$ & \multicolumn{1}{|c|}{(1,1000)} & $1$ & $0$ & $0$ & & $0$ & $0$ & $1$ \\ \midrule
$0.490$ & \multicolumn{1}{|c|}{(1,0)} &$1$ & $ 0$ & $0$    & & $0$ & $0$ & $1$ \\ \midrule
$0.005$ & \multicolumn{1}{|c|}{(0,1000)} &$0$ & $ 1$ & $0$ &  & $0$ & $1$ & $0$  \\ \midrule
$0.490$ & \multicolumn{1}{|c|}{(0,1)} & $1$ & $0$ & $0$  &   & $1$ & $0$ & $0$ \\  
\bottomrule
\end{tabular}
\caption{Two Blackwell experiments $\sigma$ and $\sigma'$ for Example \ref{ex:3states_2}}\label{table:three_2}
    \end{table}

% \begin{table}[h!]
% \begin{tabular}{cccc}
% \toprule
% & \multicolumn{3}{c}{Signals}  \\ \cmidrule(lr){2-4}
% \multicolumn{1}{c}{\multirow{-2}{*}[0.2ex]{$\Omega = (u(a_1), u(a_2))$}}
%  & \ $s_1$ \  & $s_2$ \  & $s_3$ \ \\ \midrule
% \multicolumn{1}{c|}{(1000,1)} & $1$ & $0$ & $0$  \\ \midrule
% \multicolumn{1}{c|}{(1000,0)} & $1$ & $0$ & $0$  \\ \midrule
% \multicolumn{1}{c|}{(1,1000)} & $1$ & $0$ & $0$  \\ \midrule
% \multicolumn{1}{c|}{(1,0)} &$1$ & $ 0$ & $0$  \\ \midrule
% \multicolumn{1}{c|}{(0,1000)} &$0$ & $ 1$ & $0$  \\ \midrule
% \multicolumn{1}{c|}{(0,1)} & $1$ & $0$ & $0$  \\  

% \bottomrule\\
% \end{tabular}

% \caption{$\sigma$ for Example \ref{ex:3states_2}}\label{table:three_2.1}
%     \end{table}

% \begin{table}[h!]
% \begin{tabular}{cccc}
% \toprule
% & \multicolumn{3}{c}{Signals}  \\ \cmidrule(lr){2-4}
% \multicolumn{1}{c}{\multirow{-2}{*}[0.2ex]{$\Omega = (u(a_1), u(a_2))$}}
%  & \ $s_1$ \  & $s_2$ \  & $s_3$ \ \\ \midrule
% \multicolumn{1}{c|}{(1000,1)} & $1$ & $0$ & $0$  \\ \midrule
% \multicolumn{1}{c|}{(1000,0)} & $\frac{1}{2}$ & $0$ & $\frac{1}{2}$  \\ \midrule
% \multicolumn{1}{c|}{(1,1000)} & $0$ & $0$ & $1$  \\ \midrule
% \multicolumn{1}{c|}{(1,0)} &$0$ & $ 0$ & $1$  \\ \midrule
% \multicolumn{1}{c|}{(0,1000)} &$0$ & $ 1$ & $0$  \\ \midrule
% \multicolumn{1}{c|}{(0,1)} & $1$ & $0$ & $0$  \\ 

% \bottomrule\\
% \end{tabular}

% \caption{$\sigma'$ Blackwell Experiment for Example \ref{ex:3states_2}}\label{table:three_2.2}
%     \end{table}

Under $\sigma$ in each state, there is no randomness in choice (option 1 is always chosen in 5 of the 6 states, and option two is always chosen in the remaining state).  Moreover, if option 1 is chosen, confidence is greater than 50\%.  If option 2 is chosen, confidence is 100\%. 

Now consider $\sigma'$.  Notice that $\sigma'$ is Blackwell more informative than $\sigma$, and has a higher expected payoff for the DM for all choice problems, thus $\sigma'$ choice payoff dominates $\sigma$. In all but the second state, there is still no randomness (option 1 is always chosen in the first and last states, option 2 in the third, fourth and fifth).  In the second state each option is chosen with equal chance.  When Option 1 is chosen in the first, second, and last state, confidence is below 50\%.  Similarly, confidence when option 2 is chosen in the second, third, and fourth states, confidence is below 50\%.  Confidence in the fifth state is still 100\%.  Thus, randomness has gone up, and confidence has gone down when moving from $\sigma$ to $\sigma'$. 

Intuitively, the example shows that we can move to an experiment that is more informative, but it introduces more randomness in one state (keeping it constant in all other states), and it reduces confidence because the DM chooses the option that has a high potential upside which is infrequently realized.
$\Box$
\end{example}

We can prove the second part Proposition \ref{prop:neg2}  by example as well: here choice payoff goes up and randomness goes down, but confidence goes down.  

    \begin{example}\label{ex:secondcounter} This example leverages similar intuitions to the previous one.  There are two alternatives $X=\{x,y\}$ but four states, represented by the utility vectors $u(\cdot|\omega_1)=(1,0), u(\cdot|\omega_2)=(1,100), u(\cdot|\omega_3)=(0,1), u(\cdot|\omega_4)=(100,1)$.  Thus, one of the alternatives has utility $1$ with certainty, but there is uncertainty as to which one it is.  Assume a prior over states which is symmetric, so that $\pi(\omega_1)=\pi(\omega_3)=.49$ and $\pi(\omega_2)=\pi(\omega_4)=.01$.  
    
With the completely uninformative information structure, the confidence that the right decision has been made after either choice $x$ or $y$ is $.5$, as the two alternatives are indifferent with the given prior.  Thus choice of either $x$ or $y$ is uninformative about the state.  The probability of a state in which $x$ is optimal is $0.5$ (similarly for $y$).

Imagine now the binary signal structure $S=\{s_x,s_y\}$ where $\sigma(\omega_1)=\sigma(\omega_2)=\delta_{s_x}$ and $\sigma(\omega_3)=\sigma(\omega_4)=\delta_{s_y}$.  This signal describes which of the two alternatives has utility 1, but nothing more.  

We claim that $y$ is chosen with probability one exactly when $s_x$ is realized.  This is because when $s_x$ is realized, $\pi_{\sigma}(\omega_1|s_x)=0.98$ and $\pi_{\sigma}(\omega_2|s_x)=0.02$, with all remaining states having zero probability.  Choice of $x$ therefore obtains a utility of $1$, whereas choice of $y$ obtains an expected utility of $2$, so $y$ is the uniquely optimal choice in this case.  Similarly, when $s_y$ realizes, $x$ is the uniquely optimal choice.

On the other hand, when $x$ is chosen (that is, when $s_y$ realizes), the probability that a state has obtained which renders $x$ optimal is $\pi(\omega_4|s_y)=0.02$, with a symmetric statement for when $y$ is chosen.  $0.5 > 0.02$, so a decrease in choice confidence has occurred even though there has been an increase in Blackwell informativeness. 

The intuition here is similar to the other example.  Because the DM learns which option has a utility of 1 when the experiment is informative, they choose the option that is very infrequently the best option, but when it is, it is much much better, leading to a reduction in confidence.
$\Box$ \end{example}

For the last example (proving the third part of Proposition \ref{prop:neg2}), suppose we have binary signals; we can show that confidence goes up, choice payoff goes up  but randomness also goes up.  

\begin{example}\label{ex:binarcounter}
Revisit the two-state, two-signal, two-option environment  in Section \ref{sec:def}. Suppose  $1-\theta < \gamma < \gamma' < 0.5$. Then   $(\theta,\gamma')$ is Blackwell more informative than $(\theta,\gamma)$ and so choice payoff dominates.  Moreover, state-by-state, $(\theta,\gamma')$ induces greater confidence than $(\theta,\gamma)$.  In the first state, randomness is clearly the same across $(\theta,\gamma)$ and $(\theta,\gamma')$.  In the second state, $(\theta,\gamma')$ is more random than $(\theta,\gamma)$. Moreover, $(\theta,\gamma')$ is also more expected random.  $\Box$
\end{example}

Thus, even if one experiment induces more understanding, and we observe less randomness, we cannot conclude that the individual will be more confident.  Similarly, knowing an experiment induces more confidence and greater understanding does not allow us to predict changes in randomness.  

Experiments relating randomness to confidence are not testing a theoretical equivalence.  Observing correlation tells us something about the structure of the experiment. In the same vein, one needs to be careful about inferring the level of understanding of the choice problem from randomness and confidence.

Can we produce ``positive'' results? We will do so in three ways.  First, we can restrict the set of experiments we consider to ensure that all three orders co-move.  Second, we discuss restriction on payoff structure that again induce co-movement within a subset of our measures.  Third, we suggest an alternative empirical measurement that comoves with choice payoffs.

\section{Sufficient Conditions on Signal Structure}\label{sec:suffsignal}

In this section, we provide conditions on Blackwell experiments that guarantee the three notions co-move. Given a Blackwell experiment $\sigma$, we construct another Blackwell experiment $\sigma'$ via a single \emph{aligned shift} if in a state $\omega$ where option $x$ is better than option $y$, we move weight towards a signal that induces the choice of $x$ more often from a signal that induces the choice of $x$ less often.

\begin{defn}\label{def:aligned}
     Experiment $\sigma'$ is obtained from an \emph{aligned shift} from $\sigma$ if for some option $x$ and state $\omega_i \in \Omega(x)$, $s \in S_{\sigma}(-x)$, $\hat{s} \in  S_{\sigma}(x)$, and $\sigma(s|\omega_i) - \sigma'(s|\omega_i) = \sigma'(\hat{s}|\omega_i) - \sigma(\hat{s}|\omega_i) >0$. 
\end{defn}

An aligned shift takes probability weight away from a signal that leads to the ``wrong'' choice and gives it to a signal that leads to the ``right'' choice within a given state.\footnote{Because two information structures may have a different number of signals that occur with positive probability, we say that a signal which has all entries equal to 0 is in $S_{\sigma}(x,y)$.} Importantly, a single aligned shift does not (i) allow changes in weight for states where the two outcomes have the same utility, and (ii) allow for shifts in probability that involves signals which indicate $x$ and $y$ are tied in expected utility.\footnote{In states where $x$ and $y$ have the same utility, shifting around signal probabilities never impacts expected payoffs but can change randomness and confidence.  If we shift probability either from or to a signal that currently indicates $x$ and $y$ are ``tied'' in expected utility, this causes a discontinuous shift in choice probability which can lead confidence and randomness to move in the ``wrong'' direction.}

Next, we define a neutral shift, which redistributes probability mass
only across signals that all induce the same choice.

\begin{defn}
Experiment $\sigma'$ is obtained from a \emph{neutral shift} from $\sigma$
if there exists an option $x$, a state $\omega_i \in \Omega(x)$, and two
signals $s,\hat s \in S_{\sigma}(x) \cap S_{\sigma'}(x)$ such that
$
\sigma(s|\omega_i) - \sigma'(s|\omega_i)
=
\sigma'(\hat s|\omega_i) - \sigma(\hat s|\omega_i)
> 0.
$
\end{defn}

In words, at state $\omega_i$ we transfer some positive probability mass
from one signal $s$ that leads to choosing $x$ to another signal $\hat s$
that also leads to choosing $x$, leaving the total probability of choosing
$x$ unchanged in that state.

Aligned and neutral shifts describe how to move from one experiment to another while tracking how signals change the DM’s choices. In addition, we sometimes want a property of a single experiment: whenever option $x$ is truly optimal in a state, the experiment should, in that state, generate signals that point toward $x$ at least as often as signals that point away from it. This rules out perverse experiments that systematically ``mislead'' the DM in states where a given option is correct. We call such experiments indicative.

\begin{defn}
    An experiment $\sigma$ is indicative if for any $\omega_i \in \Omega(k)$, $\sum_{j \in S(k)} \sigma(s_j|\omega_i) \geq \sum_{j \in S(-k)} \sigma(s_j|\omega_i)$.\footnote{$S(-k)$  indicates the set of signals that induce the DM to choose the option that is not $k$.  Note that the definition is silent about ``tie'' signals --- this is deliberate, as they do not impact the indicativeness of an experiment. }
\end{defn}

Indicative experiments have the property where signals that cause DM to choose $x$ occur more often than those that cause a choice of $y$ when $x$ is correct choice, and similarly for $y$.  

We now illustrate the implication of being indicative in the simple two-state, two-signal environment of Table~\ref{table:Blackwell}. It  demands that, in any state where option $x$ is truly better, signals that lead the DM to choose option $x$ occur at least as often as signals that lead her to choose option $y$, and analogously when option $2$ is better. Applied to Table~\ref{table:Blackwell}, this means that in the state where $(u_H,u_L)$ obtains we must have $\theta \geq 1-\theta$, and in the state where $(u_L,u_H)$ obtains we must have $\gamma \geq 1-\gamma$, so both $\theta$ and $\gamma$ must exceed $1/2$. Thus many  Blackwell experiments (those with $\theta<1/2$ or $\gamma<1/2$) are ruled out, showing that being indicative is a substantive restriction on the signal structure.

Our next result shows that aligned and neutral shifts provide a simple sufficient condition under which our three measures of difficulty move in the same direction. An aligned shift explicitly reallocates  probability mass from signals that lead the DM to make the wrong choice toward signals that lead her to make the correct choice (or at least a tie), state by state. Neutral shifts only reshuffle mass across signals that induce the same choice and are therefore behaviorally irrelevant. A sequence of such shifts from $\sigma$ to $\sigma'$ guarantees that every state is (in terms of signal probabilities) at least as favorable to correct choice under $\sigma'$ as under $\sigma$.
 
\begin{proposition}\label{prop:suff}
    Suppose $\sigma'$ can be constructed from $\sigma$ by a sequence of aligned and neutral shifts.  Then 
    \begin{enumerate}
        \item  $\sigma'$  choice payoff dominates  $\sigma$,
        \item $\sigma'$ expected confidence dominates $\sigma$,
        \item if $\sigma$ is indicative, then $\sigma'$ is less  random than $\sigma$.
    \end{enumerate} 
\end{proposition}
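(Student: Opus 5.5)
The plan is to first show that a single aligned or neutral shift leaves the signal-to-choice map unchanged. The three claims then reduce to bookkeeping on the joint distribution of (state, chosen option). Everything extends to a finite sequence of shifts by induction, with two checks at each step: the hypothesis of the next shift refers to the current experiment's $S_\sigma(\cdot)$, and the properties we track (choice map, indicativeness) are preserved.

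\emph{Step 1 (choice map is preserved).} Take an aligned shift at $\omega_i\in\Omega(x)$ moving mass $\delta>0$ from $s\in S_\sigma(y)$ to $\hat s\in S_\sigma(x)$. Interim optimality at a signal depends only on the sign of $\sum_\omega \pi(\omega)\sigma(\cdot|\omega)\,[u(x,\omega)-u(y,\omega)]$, since the normalization does not affect the sign.
\begin{itemize}
\item At $\hat s$ this quantity rises by $\pi(\omega_i)\delta\,[u(x,\omega_i)-u(y,\omega_i)]>0$, so $x$ stays strictly optimal.
\item At $s$ it falls by the same amount, so $y$ stays strictly optimal. If $s$ now has zero total probability it is irrelevant to every measure.
\item No other signal's likelihoods change.
\end{itemize}
For a neutral shift, both signals lie in $S_\sigma(x)\cap S_{\sigma'}(x)$ by definition. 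Tie signals are never touched, so $c_{\sigma'}=c_\sigma$ on all signals with positive probability. I expect this step to be the main obstacle: everything below uses $c_\sigma$ as fixed, and the footnotes' exclusion of tie signals and of tie states is exactly what makes it go through.

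\emph{Step 2 (payoff).} With $c$ fixed, $W$ is linear in $\sigma$. An aligned shift changes $W$ by $\pi(\omega_i)\delta\,[u(x,\omega_i)-u(y,\omega_i)]>0$, and a neutral shift leaves it unchanged. This gives part (1).

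\emph{Step 3 (expected confidence).} Summing over $s$ with the posterior weights shows that the numerator of \eqref{eq:confidence2} equals $\Pr_\sigma(x \text{ chosen},\,\omega\in\hat\Omega(x))$. Hence $\psi_\sigma(x)=\Pr_\sigma(\hat\Omega(x)\mid x\text{ chosen})$. Under the aligned shift at $\omega_i\in\Omega(x)$:
\begin{itemize}
\item $\Pr(x\text{ chosen})$ and $\Pr(x\text{ chosen},\hat\Omega(x))$ both rise by $\pi(\omega_i)\delta$. A ratio $a/b\le 1$ with $a$ and $b$ increased by the same positive amount weakly increases, so $\psi(x)$ weakly rises.
\item $\Pr(y\text{ chosen})$ falls by $\pi(\omega_i)\delta$. $\Pr(y\text{ chosen},\hat\Omega(y))$ is unchanged because $\omega_i\notin\hat\Omega(y)$. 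So $\psi(y)$ rises.
\end{itemize}
Neutral shifts change neither joint probability. Restricting to $X(\sigma,\sigma')$ avoids division by zero, which gives part (2).

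\emph{Step 4 (randomness).} If $\sigma$ is indicative, then for $\omega\in\Omega(x)$ we have $\rho_\sigma(x|\omega)\ge\rho_\sigma(y|\omega)$, since tie signals contribute equally to both options; so the state's maximum is attained at $x$. An aligned shift in $\omega_i\in\Omega(x)$ raises $\rho(x|\omega_i)$ by $\delta$ and leaves every other state's choice probabilities unchanged; neutral shifts change nothing. So $\max_z\rho(z|\omega)$ weakly rises in every state. States in $\Omega(\{x,y\})$ are never shifted. Because the shifts only add mass to correct-choice signals, indicativeness is preserved along the sequence, so the induction carries through and gives part (3).
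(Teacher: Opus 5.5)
Your proposal is correct and follows essentially the same route as the paper. Both arguments first show that an aligned shift preserves every signal's classification, using the sign of the unnormalized utility difference. They then compute the single-shift effects on $W$, on expected confidence as a ratio, and on $\max_a\rho(a\mid\omega_i)$ under indicativeness, and extend to sequences by transitivity. Your identification $\psi_\sigma(x)=\Pr_\sigma(\hat\Omega(x)\mid x\text{ chosen})$ and your explicit check that the choice map and indicativeness persist along the sequence are a cleaner rendering of the paper's argument. Your hedge about $s$ losing all its mass is unnecessary: the unnormalized utility difference at $s$ stays strictly negative after the shift, so $s$ keeps positive probability.
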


\begin{proof}
    Since choice payoff dominance, expected confidence dominance, and less-randomness are all transitive, it suffices to prove each part for a single aligned shift and a single neutral shift. Neutral shifts rearrange weight among signals that all induce the same choice distribution, so they affect neither choice probabilities, payoffs, nor confidence. We therefore focus entirely on a single aligned shift.

Let $\sigma'$ be obtained from $\sigma$ by a single aligned shift. By Definition~\ref{def:aligned}, there exist an option $x$, a state $\omega_i \in \Omega(x)$ (so $u(x|\omega_i) > u(y|\omega_i)$), signals $s \in S_\sigma(-x)$ and $\hat{s} \in S_\sigma(x)$, and $\varepsilon > 0$ such that $\sigma'$ agrees with $\sigma$ everywhere except:
\[
\sigma'(s|\omega_i) = \sigma(s|\omega_i) - \varepsilon, \qquad \sigma'(\hat{s}|\omega_i) = \sigma(\hat{s}|\omega_i) + \varepsilon.
\]

Next, we show that the choice induced by any given signal is stable under aligned shifts. Define the unnormalized expected utility advantage of $x$ over $y$ after signal $\tilde{s}$:
\[
\Delta_\sigma(\tilde{s}) := \sum_\omega \pi(\omega)\,\sigma(\tilde{s}|\omega)\,[u(x|\omega) - u(y|\omega)].
\]
Option $x$ is chosen after $\tilde{s}$ iff $\Delta_\sigma(\tilde{s}) > 0$; $y$ is chosen iff $\Delta_\sigma(\tilde{s}) < 0$; a tie when $\Delta_\sigma(\tilde{s}) = 0$. After the aligned shift, only signals $s$ and $\hat{s}$ are affected:
\begin{align*}
\Delta_{\sigma'}(s) &= \Delta_\sigma(s) - \varepsilon\,\pi(\omega_i)\,[u(x|\omega_i) - u(y|\omega_i)] < \Delta_\sigma(s) < 0, \\
\Delta_{\sigma'}(\hat{s}) &= \Delta_\sigma(\hat{s}) + \varepsilon\,\pi(\omega_i)\,[u(x|\omega_i) - u(y|\omega_i)] > \Delta_\sigma(\hat{s}) > 0, \\
\Delta_{\sigma'}(\tilde{s}) &= \Delta_\sigma(\tilde{s}) \quad \text{for all other } \tilde{s}.
\end{align*}
The first inequality uses $s \in S_\sigma(-x)$, i.e., $\Delta_\sigma(s) < 0$; the second uses $\hat{s} \in S_\sigma(x)$, i.e., $\Delta_\sigma(\hat{s}) > 0$. Therefore $s$ remains in $S_{\sigma'}(-x)$, $\hat{s}$ remains in $S_{\sigma'}(x)$, and all other signals keep their classification. 

Next, we show that $\sigma'$ choice payoff dominates $\sigma$. 

\noindent \textit{Proof of Part 1)} Recall that because both signals maintain their strict classifications, the optimal actions at $s$ and $\hat{s}$ are the same under $\sigma$ and $\sigma'$.

Expected payoff admits the representation
\[
W(\sigma) = \sum_s \max_a\, V_\sigma(a, s),
\]
where $V_\sigma(a, s) = \sum_\omega \pi(\omega)\,\sigma(s|\omega)\,u(a|\omega)$ is the unnormalized expected utility of action $a$ at signal $s$.

The shift affects only signals $s$ and $\hat{s}$. Define $h(a) = \varepsilon \cdot \pi(\omega_i) \cdot u(a|\omega_i)$. Then:
\[
V_{\sigma'}(a, s) = V_\sigma(a, s) - h(a), \qquad V_{\sigma'}(a, \hat{s}) = V_\sigma(a, \hat{s}) + h(a).
\]
Note $h(x) > h(y)$ since $u(x|\omega_i) > u(y|\omega_i)$.

Since $s \in S_\sigma(-x)$, the optimal action at $s$ under $\sigma$ is $y$: $\max_a V_\sigma(a,s) = V_\sigma(y,s)$. Since $\hat{s} \in S_\sigma(x)$, the optimal action at $\hat{s}$ is $x$: $\max_a V_\sigma(a,\hat{s}) = V_\sigma(x,\hat{s})$. By signal classification stability, these remain optimal under $\sigma'$. Therefore:
\begin{align*}
W(\sigma') - W(\sigma) &= \bigl[V_\sigma(y,s) - h(y)\bigr] + \bigl[V_\sigma(x,\hat{s}) + h(x)\bigr] - V_\sigma(y,s) - V_\sigma(x,\hat{s}) \\
&= h(x) - h(y) \\
&= \varepsilon\,\pi(\omega_i)\,\bigl[u(x|\omega_i) - u(y|\omega_i)\bigr] > 0.
\end{align*}
$\Box$

Next, we prove that $\sigma'$ expected confidence dominates $\sigma$.

\noindent \textit{Proof of Part 2)} Recall confidence, conditional on signal $s \in S(x)$ is $\frac{\sum_{\omega \in \hat{\Omega}(x)} \sigma(s|\omega) \pi(\omega)} {\sum_{\omega \in \Omega} \sigma(s|\omega) \pi(\omega)}$.  Confidence, conditional on signal $s \in S(x,y)$ is, by construction, $\frac{1}{2}$.  Thus, the expected confidence, conditional on choosing $x$ must be the (normalized) weighted average over confidence for each $s \in S(x)$, where the weights are the ex-ante chances that each $s$ occurs, with an extra weight of $\frac{1}{2}$ applied to signals in $S(x,y)$:

$$\frac{\sum_{s \in S(x)}   [\frac{\sum_{\omega \in \hat{\Omega}(x)} \sigma(s|\omega)\pi(\omega)} {\sum_{\omega \in \Omega} \sigma(s|\omega)\pi(\omega)}  (\sum_{\omega \in \Omega} \sigma(s|\omega)\pi(\omega))] + \frac{1}{2} \sum_{s\in S(x,y)} \frac{\sum_{\omega \in \hat{ \Omega}(x)} \sigma(s|\omega)\pi(\omega)} {\sum_{\omega \in \Omega} \sigma(s|\omega)\pi(\omega)}  (\sum_{\omega \in \Omega} \sigma(s|\omega)\pi(\omega)) }{\sum_{s \in S(x)}\sum_{\omega \in \Omega} \sigma(s|\omega)\pi(\omega) + \frac{1}{2} \sum_{s \in S(x,y)}\sum_{\omega \in \Omega} \sigma(s|\omega) \pi(\omega) }$$

Simplifying 

$$\frac{\sum_{s \in S(x)}   [\sum_{\omega \in \hat{\Omega}(x)} \sigma(s|\omega)\pi(\omega)] + \frac{1}{2} \sum_{s\in S(x,y)} \sum_{\omega \in \hat{ \Omega}(x)} \sigma(s|\omega)\pi(\omega)}{\sum_{s \in S(x)}\sum_{\omega \in \Omega} \sigma(s|\omega)\pi(\omega) + \frac{1}{2} \sum_{s \in S(x,y)}\sum_{\omega \in \Omega} \sigma(s|\omega)\pi(\omega)  }$$

Under an aligned shift, the second term in the numerator and denominator do not change.  So we can focus on the first terms.  Since we are considering signals that induce $x$ to be chosen, consider a state $\omega' \in \Omega(x)$.  For that state, and for $s
\in S(x)$ and $s'' \in S(y)$, increase $\sigma(s'|\omega')$ by $\epsilon \pi(\omega')$, decrease the $\sigma(s''|\omega')$ by $\epsilon \pi(\omega')$, ensuring that both still stay weakly positive after the change.

Now consider the expected confidence about the choice of $x$.  The first term in the numerator has increased by $\epsilon \pi(\omega')$, as has the first term in the denominator (by the same amount).  Since the fraction is $<1$, it must be the case that both the numerator and denominator increasing by the same amount causes the fraction to get closer to 1, i.e. expected confidence after choosing $x$ increases.  

Now consider ex pected confidence after choosing $y$ (i.e., the analogous equation to what we considered).  Now, given the shift we just considered, the numerator of the expected confidence after $y$ doesn't change at all, but the denominator decreases, thus the fraction increases, so expected confidence after choosing $y$ increases.  $\Box$

Last we prove less random under indicativeness.

\noindent \textit{Proof of Part 3)}  Assume $\sigma$ and $\sigma'$ are both indicative. We need: for every state $\omega$,
\[
\max_a\,\rho_{\sigma'}(a|\omega) \geq \max_a\,\rho_\sigma(a|\omega).
\]
Recall $\rho_\sigma(a|\omega) = \sum_{\tilde{s}} \sigma(\tilde{s}|\omega)\,c_\sigma(a|\tilde{s})$.

Since signal classifications are preserved ($c_{\sigma'}(\cdot|\tilde{s}) = c_\sigma(\cdot|\tilde{s})$ for all $\tilde{s}$), changes to choice probabilities arise only from the direct weight change at $\omega_i$.

For state $\omega_i$ (the shifted state, $\omega_i \in \Omega(x)$) we know that 
\begin{align*}
\rho_{\sigma'}(x|\omega_i) &= \sum_{\tilde{s}} \sigma'(\tilde{s}|\omega_i)\,c_\sigma(x|\tilde{s}) \\
&= \rho_\sigma(x|\omega_i) - \varepsilon \cdot \underbrace{c_\sigma(x|s)}_{=\,0} + \varepsilon \cdot \underbrace{c_\sigma(x|\hat{s})}_{=\,1} \\
&= \rho_\sigma(x|\omega_i) + \varepsilon.
\end{align*}
By indicativeness of $\sigma$, $\max_a \rho_\sigma(a|\omega_i) = \rho_\sigma(x|\omega_i)$ since $\omega_i \in \Omega(x)$. Therefore:
\[
\max_a\,\rho_{\sigma'}(a|\omega_i) \geq \rho_{\sigma'}(x|\omega_i) = \rho_\sigma(x|\omega_i) + \varepsilon > \max_a\,\rho_\sigma(a|\omega_i). 
\]

For states $\omega \neq \omega_i$, the shift only modifies $\sigma(\cdot|\omega_i)$ and leaves $\sigma(\cdot|\omega)$ unchanged for $\omega \neq \omega_i$. Combined with signal classification stability:
\[
\rho_{\sigma'}(a|\omega) = \sum_{\tilde{s}} \sigma'(\tilde{s}|\omega)\,c_{\sigma'}(a|\tilde{s}) = \sum_{\tilde{s}} \sigma(\tilde{s}|\omega)\,c_\sigma(a|\tilde{s}) = \rho_\sigma(a|\omega).
\]
Hence $\max_a\,\rho_{\sigma'}(a|\omega) = \max_a\,\rho_\sigma(a|\omega).$ Thus, across all states, $\sigma'$ is less random than $\sigma$. Notice that by construction $\sigma'$ must also be indicative.  Thus, indicativeness is preserved across any aligned or neutral shift.  $\Box$

All three parts have been proven.
\end{proof}

Our conditions ensure that $\sigma'$ has higher expected confidence than $\sigma$, but they \emph{do not} imply the stronger, state-by-state ordering in which $\sigma'$ is more confident than $\sigma$ in every state. Aligned and neutral shifts raise \emph{average} confidence, yet confidence need not increase pointwise. Nor do these conditions ensure that moving from $\sigma$ to $\sigma'$ through aligned shifts renders $\sigma'$ expected-less random than $\sigma$. Expected randomness may move in the opposite direction.

To see this, return to the binary-state, binary-signal environment. Let $\theta_1 = 0.6$ and $\gamma_1 = 0.7$. The second option is then chosen 55\% of the time, so the ex-ante choice distribution is relatively concentrated. Now perform an aligned shift setting $\theta_2 = 0.7$ while keeping $\gamma_2 = \gamma_1$. Although $\sigma_2$ is obtained from $\sigma_1$ via an aligned shift and $\sigma_1$ is indicative, $\sigma_1$ is expected-less random than $\sigma_2$, even though $\sigma_2$ is less random than $\sigma_1$ in every state. Thus, even in this simple environment, aligned shifts and indicativeness are insufficient to ensure that expected randomness co-moves with our other measures. Even more simply, if $\theta = \gamma$, every experiment in the resulting class has identical expected randomness, equal to $0.5$, yet confidence ranges from $0.5$ to $1$, and experiments are strictly ordered by choice payoffs (and by Blackwell informativeness).
  
Instead, our result implies that if we move from $\sigma$ to $\sigma'$ exclusively through aligned and neutral shifts (or exclusively through anti-aligned and neutral shifts), then confidence and expected payoffs co-move, and, under indicativeness, so does state-by-state randomness. 

An intuitive perspective on aligned shifts emerges in environments without ties—neither in utilities across options nor in the choices induced by signals. In such settings, aligned and neutral shifts increase, state by state, the total probability of receiving a signal that induces the correct choice. Finally, two experiments share the same support if they assign positive probability to the same set of signals; aligned and neutral shifts then adjust only the probabilities on this common support.

\begin{proposition}\label{prop:suffequiv}
    Suppose $\Omega(x,y)$ is empty, $\sigma$ and $\sigma'$ have the same support, and both $S_{\sigma}(x,y)$ and $S_{\sigma'}(x,y)$ are empty.  Then $\sigma'$ can be obtained from $\sigma$ by a sequence of aligned and neutral shifts if and only if for all $\omega \in \Omega(k)$, $\sum_{s \in S_{\sigma'}(k)} \sigma'(s|\omega) \geq \sum_{s \in S_{\sigma}(k)} \sigma(s|\omega)$.
\end{proposition}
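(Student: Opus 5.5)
The plan is to prove the two directions separately. Both rest on one observation from the proof of Proposition~\ref{prop:suff}: under an aligned shift, every signal keeps its strict classification. I will first extend this to neutral shifts. In a neutral shift at $\omega_i\in\Omega(x)$ between $s,\hat s\in S_\sigma(x)$, only $\Delta(s)$ and $\Delta(\hat s)$ change, and the definition requires both to stay in $S_{\sigma'}(x)$. Every other signal keeps the same $\Delta$. Since $S_\sigma(x,y)=\emptyset$ and $\Omega(x,y)=\emptyset$, no ties are ever created. Hence along any sequence of aligned and neutral shifts the partition $\{S(x),S(y)\}$ is invariant, so $S_\sigma(k)=S_{\sigma'}(k)$ for each $k$.

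\emph{Only if.} With classifications fixed, I track, for each state $\omega\in\Omega(k)$, the correct-choice mass $M_\sigma(\omega)=\sum_{s\in S_\sigma(k)}\sigma(s|\omega)$. An aligned shift at $\omega_i$ raises $M(\omega_i)$ by $\varepsilon>0$ and leaves $M(\omega)$ unchanged for $\omega\neq\omega_i$. A neutral shift leaves every $M(\omega)$ unchanged. Induction on the length of the sequence then gives $M_{\sigma'}(\omega)\ge M_\sigma(\omega)$ for every $\omega$, which is the stated inequality.

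\emph{If.} Fix a state $\omega\in\Omega(x)$ and work one state at a time. Changing $\sigma(\cdot|\omega)$ at a single state only moves $\Delta$ in the directions already shown to preserve classification, provided each step is an aligned or neutral shift. Let $\delta(\omega)=M_{\sigma'}(\omega)-M_\sigma(\omega)\ge 0$. I will:
\begin{enumerate}
\item use aligned shifts to transfer exactly $\sigma(s|\omega)-\sigma'(s|\omega)$ from each $y$-signal $s$ to some $x$-signal, which brings the $y$-signal coordinates to their $\sigma'$ values and moves total mass $\delta(\omega)$ onto $x$-signals;
\item use neutral shifts among the $x$-signals to reach the target vector $\sigma'(\cdot|\omega)$ on $S(x)$; a transportation argument works here because the two vectors on $S(x)$ now have equal totals;
\item repeat for every state.
\end{enumerate}
The common support assumption guarantees that the recipient signals lie in the relevant sets $S_\sigma(x)$. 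At the end, every intermediate experiment has the same classification as $\sigma$, and the result equals $\sigma'$.

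\emph{Main obstacle.} Two points must be settled before the ``if'' direction goes through. First, the stated condition compares $S_{\sigma'}(k)$ with $S_\sigma(k)$, but by the invariance above a shift sequence can exist only if these sets coincide. I would first try to derive this from the hypotheses: same support, no ties, and the mass inequality in every state together with the fact that $\Delta$ for each signal is a signed sum of state masses. If that fails, I would add the equality of classifications explicitly as the intended reading. Second, in step 1 I need each individual $y$-signal probability at $\omega\in\Omega(x)$ to weakly decrease, not just their total. Aligned shifts can only lower $y$-signal masses in $x$-states, and neutral shifts as defined act only on $x$-signals in $x$-states. I would try to handle a $y$-signal whose mass rises by routing mass through the shared support. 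If that does not work, I would check whether the intended condition is coordinatewise, or whether neutral shifts are meant to be allowed among signals of the same choice in any state. I expect reconciling this per-signal requirement with the aggregate condition in the statement to be the crux of the ``if'' direction.
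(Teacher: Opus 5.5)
Your ``only if'' argument is correct, and it is the paper's argument made more careful: you state explicitly that aligned and neutral shifts both preserve the partition $\{S(x),S(y)\}$, which the paper uses silently. The gap is in ``if'', and it cannot be closed, because each of the two obstacles you flag makes the ``if'' direction false as stated.

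The paper's sketch passes over the same two points. In particular, its last step (neutral shifts that set every coordinate equal to $\sigma'(s|\omega)$) is impossible for wrong-choice signals, since a neutral shift at $\omega\in\Omega(k)$ only moves mass within $S(k)$. Concretely, take $u(\cdot|\omega_1)=(1,0)$, $u(\cdot|\omega_2)=(0,1)$, $\pi=(\tfrac12,\tfrac12)$, signals $s_1,s_2,s_3$, and $\sigma(\cdot|\omega_1)=(0.6,0.3,0.1)$, $\sigma(\cdot|\omega_2)=(0.1,0.45,0.45)$. Then $S_\sigma(x)=\{s_1\}$, $S_\sigma(y)=\{s_2,s_3\}$, and there are no ties.

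(i) Let $\sigma'$ be $\sigma$ with the labels $s_1,s_2$ swapped. Every hypothesis holds and the inequality holds with equality, but $s_1$ would have to move from $S(x)$ to $S(y)$. So $S_{\sigma'}(k)=S_\sigma(k)$ cannot be derived from the hypotheses.

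(ii) Let $\sigma'(\cdot|\omega_1)=(0.6,0.2,0.2)$ and $\sigma'(\cdot|\omega_2)=\sigma(\cdot|\omega_2)$. The classification is unchanged and the correct-choice masses coincide ($0.6$ and $0.9$), yet $\sigma'(s_3|\omega_1)>\sigma(s_3|\omega_1)$ for the $y$-signal $s_3$ at the $x$-state $\omega_1$. Along any sequence that coordinate can only fall: aligned shifts at $\omega_1$ remove mass from $y$-signals, neutral shifts at $\omega_1$ touch only $x$-signals, and shifts at $\omega_2$ leave the $\omega_1$ row alone. Relabeling $s_2\leftrightarrow s_3$ leaves $\sigma'$ unchanged, so no routing through the support can help.

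What is true, with necessity given exactly by your invariance argument, is that $\sigma'$ is reachable from $\sigma$ if and only if
\[
S_{\sigma'}(k)=S_\sigma(k)\quad\text{and}\quad \sigma'(s|\omega)\le\sigma(s|\omega)\ \text{ for all } k,\ \omega\in\Omega(k),\ s\in S_\sigma(-k);
\]
the coordinatewise condition implies the stated aggregate one. Alternatively, keep the aggregate inequality together with $S_{\sigma'}(k)=S_\sigma(k)$ and enlarge neutral shifts to allow reshuffling among wrong-choice signals. This changes no choice probability, payoff, or expected confidence, so Proposition~\ref{prop:suff} is unaffected.

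Even then your sufficiency construction needs one repair. A neutral shift lowers the donor's $\Delta$, and it counts as a neutral shift only if the donor stays in $S(x)$. So completing one state at a time can flip a signal that is an $x$-signal under both $\sigma$ and $\sigma'$. With equal priors and unit stakes, an $x$-signal whose mass falls from $0.5$ to $0.08$ at the $x$-state and from $0.1$ to $0.05$ at the $y$-state has $\Delta<0$ if the $x$-state is done first. With several $x$-states, two signals can each need to receive before they donate, so that no order of whole transfers works.

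Use convexity instead. The experiments inducing a fixed strict classification are cut out by strict linear inequalities in $\sigma$, so the segment from $\sigma$ to $\sigma'$ keeps that classification with a uniform margin. Decompose $\sigma'-\sigma$ state by state into wrong-to-right (aligned) and right-to-right (neutral) transfers, with each signal only a donor or only a receiver at each state. Execute them in $N$ rounds of $1/N$-fractions. Every intermediate experiment lies within $O(1/N)$ of the segment, so for large $N$ each step is a legitimate aligned or neutral shift.
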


\begin{proof}
    One direction is simple.  Given the restrictive environment, aligned shifts must cause, for $\omega \in  \Omega(k)$,  $\sum_{s \in S_{\sigma}(k)} \sigma(s|\omega)$ to increase so that the post shift value $\sum_{s \in S_{\sigma'}(k)} \sigma'(s|\omega)$ is greater than the pre-shift value $ \sum_{s \in S_{\sigma}(k)} \sigma(s|\omega)$.  Neutral shifts cannot cause the mass to decrease. 
    
    The other direction is also fairly simple.  Consider $\sigma$ and $\sigma'$ such that for all $\omega \in \Omega(k)$, $\sum_{s \in S_{\sigma'}(k)} \sigma'(s|\omega) \geq \sum_{s \in S_{\sigma}(k)} \sigma(s|\omega)$.  How do we construct the aligned and neutral shifts to go from $\sigma$ to $\sigma'$?  We do so in two steps.  First, we choose pairs of abitrary signals $s \in S(k)$ and $s' \in S(-k)$.  Fix a state $\omega \in \Omega(k)$.  We move probability mass from $s'$ to $s$ until the total probability assigned to $S(k)$ equals $\sum_{s \in S_{\sigma'}(k)} \sigma'(s|\omega)$.  We then do neutral shifts so that the probability attached to each signal equals $\sigma'(s|\omega)$.  We do this for each state.    
\end{proof}

Proposition \ref{prop:suff} shows that expected confidence and choice payoffs co-move under aligned shifts.  Moreover, if the experiment is indicative then randomness also co-moves.  This framing raises the question of what indicativeness itself does. The following result shows that for indicative experiments, less random implies choice payoff domination.  The converse is not true.\footnote{In fact, if we rule out tie states, we can obtain a somewhat stronger result.  In this case, higher state by state expected payoffs (see Section \ref{sec:addorder} for a discussion of this ordering which we call state-conditional choice payoff domination), which is strictly stronger than choice payoff domination, implies less random.  Moreover, less random also implies expected confidence domination.}

\begin{proposition}\label{prop:indict}

Suppose $\sigma$ and $\sigma'$ are indicative.  If $\sigma$ is less random than $\sigma'$ then $\sigma$ choice payoff dominates $\sigma'$.

\end{proposition}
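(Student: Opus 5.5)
The plan is to reduce the statement to a state-by-state comparison of the probability of choosing correctly, and then to integrate against the prior. The key observation is that, for an indicative experiment, the option chosen most often in a state is exactly the option that is optimal in that state. State-by-state less randomness then becomes state-by-state higher accuracy, and accuracy is what drives payoffs in the binary environment.

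First, I would show that indicativeness pins down the maximizer in $\max_a \rho_\sigma(a|\omega)$. Fix $\omega \in \Omega(k)$ for $k \in \{x,y\}$, and write $-k$ for the other option. Split the signals into $S_\sigma(k)$, $S_\sigma(-k)$ and the tie signals $S_\sigma(\{x,y\})$. Then
\[
\rho_\sigma(k|\omega) = \sum_{s \in S_\sigma(k)} \sigma(s|\omega) + \tfrac{1}{2}\sum_{s \in S_\sigma(\{x,y\})}\sigma(s|\omega),
\]
and $\rho_\sigma(-k|\omega)$ is the same expression with $S_\sigma(-k)$ in place of $S_\sigma(k)$. The tie contributions are identical in the two expressions. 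Indicativeness therefore gives $\rho_\sigma(k|\omega) \geq \rho_\sigma(-k|\omega)$, so $\max_a \rho_\sigma(a|\omega) = \rho_\sigma(k|\omega)$. The same holds for $\sigma'$. This tie bookkeeping is the only delicate point: the definition of indicativeness deliberately ignores tie signals, and it works here only because uniform randomization splits their mass equally. Combining this with the hypothesis that $\sigma$ is less random than $\sigma'$ (Definition~\ref{wchoice}, part~\ref{choice}) gives
\[
\rho_\sigma(k|\omega) \geq \rho_{\sigma'}(k|\omega) \quad \text{for every } \omega \in \Omega(k),\ k \in \{x,y\}.
\]

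Second, I would write the state-conditional payoff in the binary case. For $\omega \in \Omega(k)$,
\[
W(\sigma|\omega) = u(-k,\omega) + \rho_\sigma(k|\omega)\,\bigl[u(k,\omega) - u(-k,\omega)\bigr].
\]
This is increasing in $\rho_\sigma(k|\omega)$ because $u(k,\omega) > u(-k,\omega)$, so the previous step yields $W(\sigma|\omega) \geq W(\sigma'|\omega)$. For tie states $\omega \in \Omega(\{x,y\})$, $W(\cdot|\omega)$ equals the common utility level regardless of the experiment. Those states contribute equally under $\sigma$ and $\sigma'$, and no claim about randomness there is needed.

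Finally, since $W(\sigma) = \sum_\omega \pi(\omega) W(\sigma|\omega)$, summing the state-wise inequalities with prior weights gives $W(\sigma) \geq W(\sigma')$, so $\sigma$ choice payoff dominates $\sigma'$. The argument in fact proves the stronger state-conditional payoff domination mentioned in the footnote. For the converse failing, I would cite a two-state example in the $(\theta,\gamma)$ family where $\theta+\gamma$ rises while $\gamma$ falls, so that payoffs improve but randomness is not comparable.
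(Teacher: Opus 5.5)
Your proof is correct and follows the paper's argument step for step. Indicativeness identifies the most-chosen option with the correct one in each non-tie state, so less randomness gives $\rho_\sigma(k|\omega)\ge\rho_{\sigma'}(k|\omega)$; the decomposition $W(\sigma|\omega)=u(-k,\omega)+\rho_\sigma(k|\omega)\,[u(k,\omega)-u(-k,\omega)]$ turns this into a state-wise payoff comparison, tie states contribute equally, and you integrate against $\pi$. Your explicit check that tie signals contribute $\tfrac12$ to each option fills in a step the paper only asserts, and you could also note, as in the paper's Corollary~\ref{cor:positive}, that indicativeness of $\sigma'$ is not needed because $\max_a\rho_{\sigma'}(a|\omega)\ge\rho_{\sigma'}(k|\omega)$ always holds.
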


\begin{proof}
Fix any state $\omega \in \Omega$. We consider two cases.

\smallskip
\noindent\emph{Case 1: $\omega$ is not a tie state.} Then either $\omega \in \Omega(x)$ or $\omega \in \Omega(y)$. Suppose without loss of generality that $\omega \in \Omega(x)$, so $\Delta(\omega) := u(x|\omega) - u(y|\omega) > 0$. The state-conditional expected payoff decomposes as
\[
W(\sigma|\omega) = \sum_s \sigma(s|\omega)\bigl[c_\sigma(x|s)\, u(x|\omega) + c_\sigma(y|s)\, u(y|\omega)\bigr] = u(y|\omega) + \Delta(\omega)\, \rho_\sigma(x|\omega),
\]
since $c_\sigma(x|s) + c_\sigma(y|s) = 1$ for every signal (the DM either picks $x$, picks $y$, or randomizes with probabilities summing to one at a tie signal).

Because $\omega \in \Omega(x)$, we have $\rho_\sigma(x|\omega) = \rho_\sigma(\text{correct}|\omega)$. Indicativeness of $\sigma$ implies
\[
\rho_\sigma(\text{correct}|\omega) = \max_a \rho_\sigma(a|\omega),
\]
and the analogous identity holds for $\sigma'$. Hence, because $\sigma$ is less random than $\sigma'$,
\[
\rho_\sigma(x|\omega) = \max_a \rho_\sigma(a|\omega) \geq \max_a \rho_{\sigma'}(a|\omega) = \rho_{\sigma'}(x|\omega).
\]
Since $\Delta(\omega) > 0$,
\[
W(\sigma|\omega) = u(y|\omega) + \Delta(\omega)\, \rho_\sigma(x|\omega) \geq u(y|\omega) + \Delta(\omega)\, \rho_{\sigma'}(x|\omega) = W(\sigma'|\omega).
\]
A symmetric argument applies when $\omega \in \Omega(y)$.

\smallskip
\noindent\emph{Case 2: $\omega$ is a tie state.} Then $u(x|\omega) = u(y|\omega)$, and
\[
W(\sigma|\omega) = \sum_s \sigma(s|\omega)\bigl[c_\sigma(x|s)\, u(x|\omega) + c_\sigma(y|s)\, u(y|\omega)\bigr] = u(x|\omega),
\]
which does not depend on the experiment. Hence $W(\sigma|\omega) = W(\sigma'|\omega)$.

\smallskip
In both cases $W(\sigma|\omega) \geq W(\sigma'|\omega)$. Taking the expectation over $\omega$ with respect to $\pi$,
\[
W(\sigma) = \sum_\omega \pi(\omega)\, W(\sigma|\omega) \geq \sum_\omega \pi(\omega)\, W(\sigma'|\omega) = W(\sigma'),
\]
so $\sigma$ choice payoff dominates $\sigma'$.
\end{proof}

Thus, under indicativeness, less random is sufficient for choice payoff domination.  Recall that indicativeness requires that for any given state, the probability of obtaining a signal that ``agrees'' with the true ordinal ranking is higher than the probability of obtaining a signal that ``disagrees'' with the true ordinal ranking (where agreement and disagreement is given by the choice induced by the posteriors).

\subsection{Examples}\label{sec:examples}

We now present examples both satisfying and violating our sufficient conditions.

\subsubsection{Rational Inattention Logit} Consider experiment $\sigma$ where (i) there are two signals and ii) $\sigma(s_1|\omega_i)  = \frac{e^{\frac{u_i(x)}{\lambda}}}{e^{\frac{u_i(x)}{\lambda}}+e^{\frac{u_i(y)}{\lambda}}}$ for $\lambda>0$.  Call it a $\lambda$-Luce experiment, denoted by $\sigma^l_{\lambda}$. It is clear to see that this framework generates \emph{state-conditional Luce} choice probabilities (i.e. logit).  Moreover, the beliefs and choice probabilities can be derived as solution to rational inattention model where cost function is linear in mutual information with parameter $\lambda$ (see \cite{matvejka2015rational}).\footnote{Moreover, this model falls under the class of generalized Fechnerian approach discussed earlier in the paper.}

\begin{proposition}
    Consider the set of $\lambda$-Luce experiments.  The following are equivalent.\begin{itemize}
    \item \textit{$\lambda \leq \lambda'$}
    \item \textit{$\sigma^l_{\lambda}$ choice payoff dominates  $\sigma^l_{\lambda'}$}
    \item \textit{$\sigma^l_{\lambda}$ less random than (but equivalently expected random to) $\sigma^l_{\lambda'}$} 
    \item \textit{$\sigma^l_{\lambda}$ is more confident than $\sigma^l_{\lambda'}$}
\end{itemize}
\end{proposition}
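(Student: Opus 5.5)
The plan is to describe explicitly what each $\lambda$-Luce experiment does to choices, and then run each equivalence through either Proposition~\ref{prop:suff}/\ref{prop:indict} or a direct monotonicity computation.

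\emph{Step 1: signal classification.} Write $p_\lambda(\omega)=\sigma^l_\lambda(s_1|\omega)$. Then $p_\lambda(\omega)>1/2$ exactly when $u(x|\omega)>u(y|\omega)$, and $p_\lambda(\omega)=1/2$ on tie states. By the symmetry of $\pi$, the state $\bar\omega$ obtained by swapping utilities satisfies $p_\lambda(\bar\omega)=1-p_\lambda(\omega)$ and has the same prior weight. Hence
\[
\Delta_\lambda(s_1)=\sum_\omega\pi(\omega)p_\lambda(\omega)[u(x|\omega)-u(y|\omega)]
=\sum_{\omega\in\Omega(x)}\pi(\omega)\,(2p_\lambda(\omega)-1)\,[u(x|\omega)-u(y|\omega)]>0,
\]
provided $\Omega(x)\neq\emptyset$. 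So $s_1\in S_\lambda(x)$ and, symmetrically, $s_2\in S_\lambda(y)$, for every $\lambda$. Two consequences follow. First, $\rho_\lambda(x|\omega)=p_\lambda(\omega)$, which is state-conditional logit. Second, every Luce experiment is indicative, with $\max_a\rho_\lambda(a|\omega)=\max\{p_\lambda(\omega),1-p_\lambda(\omega)\}=e^{|d_\omega|/\lambda}/(1+e^{|d_\omega|/\lambda})$, where $d_\omega=u(x|\omega)-u(y|\omega)$.

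\emph{Step 2: monotonicity in $\lambda$.} The quantity $e^{|d|/\lambda}/(1+e^{|d|/\lambda})$ is weakly decreasing in $\lambda$, and strictly so when $d\neq0$. So $\lambda\le\lambda'$ iff $\sigma^l_\lambda$ is less random than $\sigma^l_{\lambda'}$, assuming some non-tie state exists so that strictness gives the converse. Expected randomness is always $1/2$ by symmetry, which explains the parenthetical remark in the statement.

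Moving from $\lambda'$ to $\lambda\le\lambda'$ raises the weight on the correct signal in every non-tie state and leaves tie states unchanged. With two signals there are no ties in signal classification, and the support is common. I would therefore not invoke Proposition~\ref{prop:suffequiv} directly, since its hypothesis excludes tie states. Instead I would construct the sequence of aligned shifts by hand, one per non-tie state, as in its proof. Proposition~\ref{prop:suff} then gives choice payoff dominance and expected confidence dominance. Alternatively, payoff dominance follows from Proposition~\ref{prop:indict}, since both experiments are indicative. For the converse directions I would use strict monotonicity: $W(\sigma^l_\lambda)=\sum_\omega\pi(\omega)[u(y|\omega)+d_\omega p_\lambda(\omega)]$ is strictly decreasing in $\lambda$. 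So $W$-dominance forces $\lambda\le\lambda'$, and payoff dominance is equivalent to $\lambda\le\lambda'$.

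\emph{Step 3: confidence.} With two signals, confidence given choice $x$ is the posterior $\pi_\lambda(\hat\Omega(x)|s_1)$, independent of the state, so conditional and expected confidence coincide. I would write it as $A(\lambda)/(A(\lambda)+B(\lambda))$, with
\[
A(\lambda)=\sum_{\hat\Omega(x)}\pi\,p_\lambda,\qquad B(\lambda)=\sum_{\Omega(y)}\pi\,p_\lambda .
\]
As $\lambda$ decreases, $A$ weakly increases and $B$ strictly decreases, so confidence strictly increases. By symmetry the same holds for $y$. This gives the equivalence with ``more confident'' in both directions.

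The main obstacle I anticipate is the degenerate cases: tie states, which are unaffected and contribute $1/2$ terms to confidence through $\hat\Omega(x)$, and the requirement that the orders be strict so the converses hold. I would handle these by assuming $\Omega(x)\neq\emptyset$, so some non-tie state exists, and noting that otherwise every Luce experiment coincides and all four statements hold trivially.
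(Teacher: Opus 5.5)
Your proposal is correct. Its core matches the paper's proof: the paper notes that lowering $\lambda$ is an aligned shift and invokes Proposition~\ref{prop:suff}, using indicativeness for the randomness part. It then upgrades expected confidence to state-by-state confidence by observing that with two signals $\psi_\sigma(x|\omega)=\pi_\sigma(\hat\Omega(x)|s_1)$ does not depend on $\omega$.

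Where you differ is that you also check each ordering directly as a strictly monotone function of $\lambda$. These are the decisiveness $1/(1+e^{-|d_\omega|/\lambda})$, the payoff $\sum_\omega\pi(\omega)[u(y|\omega)+d_\omega p_\lambda(\omega)]$, and the confidence ratio $A/(A+B)$. This makes Proposition~\ref{prop:suff} essentially optional. More importantly, it gives the converse implications that an ``equivalent'' statement requires; the paper's short proof only establishes the direction $\lambda\le\lambda'\Rightarrow$ the other three items.

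You are also more careful than the paper on three points it leaves implicit:
\begin{itemize}
\item Symmetry of $\pi$ is what places $s_1$ in $S_\sigma(x)$ for every $\lambda$. This gives indicativeness and stable signal classification.
\item A change in $\lambda$ is a \emph{sequence} of aligned shifts, one per non-tie state, not a single shift.
\item Proposition~\ref{prop:suffequiv} cannot be quoted directly, because tie states are allowed.
\end{itemize}

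One correction to your final sentence. If every state is a tie, all $\lambda$-Luce experiments coincide. The second through fourth items then hold for every pair $(\lambda,\lambda')$, while $\lambda\le\lambda'$ fails whenever $\lambda>\lambda'$. So the equivalence is false in that case, not trivially true. The existence of a non-tie state with positive prior is a genuine, implicit hypothesis of the proposition, not a degenerate case you can dispose of separately.
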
 

\begin{proof}
    It is easy to verify that decreasing $\lambda$ is an aligned shift. Therefore Proposition~\ref{prop:suff} yields almost all the stated equivalence, except item (2) in Proposition~\ref{prop:suff} only guarantees expected confidence domination, and we now show the full state by state confidence domination holds. This follows immediately from the proof of Proposition~\ref{prop:suff} when there are only two signals.  Similarly, it is easy to verify that the experiment is indicative.  
\end{proof}

We next present natural examples that violate our sufficient conditions and in which comovement breaks down.
 
 \subsubsection{Repeated Experiment} Consider a (non-trivial) Blackwell experiment $\sigma$, and let  $\sigma_t$ be the experiment generated by getting $t$ independent signals drawn from $\sigma$, that is, collecting data from $t$ independent replications of the original Blackwell experiment.

 % Let $\hat{\sigma}_\sigma$ denote the set of experiments $\sigma_1, \sigma_2...$ 

 \begin{proposition}
     If $t > t'$ then $\sigma_t$ choice payoff dominates $\sigma_{t'}$. But $\sigma_t$ may not be less random (and may be more expected random) and may not be more confident (nor more expected confident) than $\sigma_{t'}$.
 \end{proposition}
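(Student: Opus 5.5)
The positive claim follows from Blackwell's theorem, and each negative claim will be established by an explicit counterexample.

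\emph{Positive part.} The experiment $\sigma_t$ garbles into $\sigma_{t'}$ whenever $t>t'$: the garbling simply discards the last $t-t'$ of the independent draws. Blackwell's theorem then says that $\sigma_t$ yields weakly higher expected utility in every decision problem. This includes the fixed binary problem $(X,\Omega,u,\pi)$, so $W(\sigma_t)\ge W(\sigma_{t'})$. Tie-breaking by uniform randomization does not affect $W$, because ties only arise between choices that are already optimal.

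\emph{Negative parts.} I will work in the two-state, two-signal environment of Table~\ref{table:Blackwell} with $\theta+\gamma>1$, and compare $t'=1$ with $t=2$.

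For randomness, take an asymmetric experiment, for instance $\theta$ close to $1$ and $\gamma$ just above $1-\theta$. Under $\sigma_1$ the choice is driven by the signal, so the decisiveness in $\omega_2$ is $\max\{\gamma,1-\gamma\}$, which can be close to $1$ if $\gamma$ is small. Under $\sigma_2$ the posterior after the mixed pair $(s_1,s_2)$ may flip which option is chosen. We have $\rho_{\sigma_2}(y\mid\omega_2)$ equal to $\gamma^2$ plus possibly $2\gamma(1-\gamma)$, depending on how the mixed pair is classified. Choosing parameters so that the mixed pair leads to $x$ gives
\[
\rho_{\sigma_2}(x\mid\omega_2)=1-\gamma^2 ,\qquad \rho_{\sigma_1}(x\mid\omega_2)=1-\gamma .
\]
That comparison moves the wrong way, so the computation has to be arranged more carefully. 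The cleaner route is to look for a state in which the most-chosen option under $\sigma_1$ is chosen with probability $p>1/2$, while under $\sigma_2$ the classification of the mixed pair pushes that state's choice probability toward $1/2$. For example, suppose the wrong option is chosen with probability $1-\gamma$ under $\sigma_1$ and with probability $(1-\gamma)^2+2\gamma(1-\gamma)\cdot\mathbf{1}$ under $\sigma_2$; with $\gamma<1/2$ the latter can be closer to $1/2$. I expect to search numerically over $(\theta,\gamma)$ for such a configuration. If the two-signal base fails, I will switch to a three-state example in the style of Example~\ref{ex:3states_2}, where replication introduces a posterior tie, which forces the $1/2$ randomization noted in the footnote on tie signals.

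Expected randomness is easier to break. Take $\theta=\gamma$. Then every $\sigma_t$ is symmetric, so the expected randomness is $1/2$ for all $t$. To obtain a strict reversal, perturb the experiment so that $\sigma_1$ has concentrated ex-ante choice, while under $\sigma_2$ the mixed pair is classified so as to balance the two options.

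\emph{Confidence.} Reuse the mechanism of Example~\ref{ex:secondcounter}. With states $(1,0),(1,100),(0,1),(100,1)$ and a weak experiment, a single draw may not be informative enough to overturn indifference, while two draws are. I will build $\sigma$ so that one draw leaves the posterior near the prior, giving confidence near $1/2$, whereas two agreeing draws identify which option has utility $1$. At that point the DM switches to the long-shot option, and confidence collapses toward $0.02$, both statewise and in expectation. The main obstacle is calibrating $\sigma$ so that $\sigma_1$ does not already trigger the long-shot choice while $\sigma_2$ does. A signal accuracy just below the threshold at which the posterior ratio of $\omega_1$ to $\omega_2$ drops below $100$ after one draw, but above it after two, should work. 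I will solve for this threshold explicitly and then verify the confidence values.
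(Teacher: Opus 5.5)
Your positive part is correct and is the paper's argument: discarding draws is a garbling, so Blackwell's theorem gives $W(\sigma_t)\ge W(\sigma_{t'})$. The negative parts, however, are not proved. You never exhibit an example, and the one computation you carry out rests on a misclassification of the mixed pair.

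Suppose $\gamma<1/2$ and $\theta+\gamma>1$; this includes your own region, $\theta$ near $1$ with $1-\theta<\gamma$. Then $\theta>1-\gamma>1/2$. Since $t\mapsto t(1-t)$ is decreasing on $[1/2,1]$, it follows that $\theta(1-\theta)<\gamma(1-\gamma)$. So the mixed pair has likelihood ratio $\theta(1-\theta)/[\gamma(1-\gamma)]<1$ and is always classified as $y$. The case ``mixed pair leads to $x$'' that you computed cannot arise there.

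With the correct classification, $x$ is chosen only after $s_1s_1$. Hence $\rho_{\sigma_2}(x\mid\omega_1)=\theta^2$ and $\rho_{\sigma_2}(x\mid\omega_2)=(1-\gamma)^2$. The paper takes $\theta=0.9$, $\gamma=0.2$, which gives:
\begin{itemize}
\item The state-wise maximal choice probabilities fall from $(0.9,0.8)$ to $(0.81,0.64)$.
\item $\max_x\rho$ falls from $0.85$ to $0.725$. This is a strict expected-randomness reversal, which your $\theta=\gamma$ case does not deliver.
\item Confidence after choosing $y$ falls from $2/3$ to about $0.648$ in $\omega_1$, $0.658$ in $\omega_2$, and $0.655$ on average. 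So $\sigma_2$ neither confidence dominates nor expected-confidence dominates $\sigma_1$.
\end{itemize}
The idea you are missing is that non-indicativeness drives everything. In $\omega_2$ the modal choice is the wrong option $x$, and repetition dilutes exactly that. One two-state example therefore settles all the negative claims at once.

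Your separate confidence construction would also not work as described. In the states of Example~\ref{ex:secondcounter}, take a binary signal that reports which option has utility $1$ with accuracy $q$. Then $\sum_\omega\pi(\omega)\sigma(s_x|\omega)[u(x|\omega)-u(y|\omega)]=\tfrac12(1-2q)<0$ for every $q>1/2$, so a single draw already triggers the long-shot choice. The signal also leaves the posterior ratio of $\omega_1$ to $\omega_2$ at its prior value $49$, so there is no threshold to calibrate.

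With $t=2$ it gets worse: the mixed pair returns the prior, which is a tie. Hence $\rho_{\sigma_2}(y|\omega)=\rho_{\sigma_1}(y|\omega)$ in every state; for example, $q^2+q(1-q)=q$ in $\omega_1,\omega_2$. Expected confidence in $y$ equals $\sum_{\omega\in\hat{\Omega}(y)}\pi(\omega)\rho_\sigma(y|\omega)/\rho_\sigma(y)$, so it is exactly unchanged. You would need $t\ge 3$, or an asymmetric signal that you have not specified.
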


 \begin{proof}
     Take a binary signal structure that is non-indicative such as Example 3 and repeat it twice.  By way of a concrete example, consider $\theta_k=.9, \gamma_k=.2$.  With a single draw the vector of max choice probabilities (over states) is (0.9,0.8); the confidence after choosing $x$ is .529, after $y$ it is .66 (regardless of state).  If we do two draws then clearly we have a Blackwell more informative experiment.  Notice that the DM chooses $x$ only after observing $s_1$ twice, otherwise they choose $y$.  The vector of max choice probabilities is (.81, .64).  The confidence after $s_1, s_1$, $s_1, s_2$ or $s_2, s_1$, and $s_2, s_2$ respectively are .558, .64, .8.  Thus, the average confidence after choosing $y$ in the first state is .648 and in the second state .657.  Thus confidence after choosing $y$ has gone down in both states.\footnote{We conjecture that it is not possible to reduce both $x$ and $y$'s confidence in all states by repeating the experiment.}   
 \end{proof}

 Often, the literature interprets repeated draws from the same Blackwell experiment as a natural mental process for learning the value of the options (e.g. drift diffusion type models). This results shows that independent repetition of experiments is not sufficient to generate co-movement in the empirical measurements we are interested in.

 \subsubsection{CMC Rational Inattention}

Our discussion of $\lambda$-logit choice might lead to the intuition that perhaps rational inattention models, or models where experiments are chosen optimally, generally lead to comovement of our measures.  This is not true.

We consider a well-known cost structure used in the rational inattention literature: the CMC (constant marginal cost) approach of \cite{pomatto2023cost} where, given states of nature $\omega_i, \omega_j$ the cost of experiment $\sigma$ is $C(\sigma) = \sum_{\omega_i, \omega_j \in \Omega} \beta_{ij} ( \sum_{s} \sigma(s|\omega_i) \log \frac{\sigma(s|\omega_i}{\sigma(s|\omega_j} )$.  

Now consider the structure of Example \ref{ex:secondcounter}.  We will assume that $\beta_{1,2}=\beta_{2,1} =\beta_{3,4}=\beta_{4,3}= \infty$; and that all other $\beta$s=$\bar{\beta}$.  Consider two situations, one where $\bar{\beta}=\infty$, the other where $\bar{\beta}=0$.  These correspond to the two experiments discussed in Example \ref{ex:secondcounter}. 

  \subsubsection{Binary Experiments}
Even if our sufficient conditions fail, we may not get co-movement, but some relationship between orderings.  We demonstrate this in the environment of Example \ref{ex:binarcounter} so that there are binary states and binary signals.  Without loss we can assume that $\theta+\gamma \geq 1$.

\begin{proposition}\label{prop:binary}
    With binary states and binary signals, if $\sigma$ confidence (or expected confidence) dominates $\sigma'$ then $\sigma$ choice payoff dominates $\sigma'$ (but the opposite is not true).  If experiments are indicative, then if $\sigma$ is less random than $\sigma'$ then $\sigma$ confidence dominates $\sigma'$.
\end{proposition}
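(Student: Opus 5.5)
The plan is to work entirely in the $(\theta,\gamma)$ parametrization of Table~\ref{table:Blackwell}, with $\theta+\gamma\geq 1$. I first dispose of the boundary case $\theta+\gamma=1$. There every signal is a tie signal, every confidence equals $\tfrac12$, and the experiment attains the minimal payoff. So any comparison involving such an experiment is immediate, and I will assume $\theta+\gamma>1$ and $\theta'+\gamma'>1$. Then $s_1$ induces $x$ and $s_2$ induces $y$.

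As computed in Section~\ref{sec:def}, the confidences are $\psi_\sigma(x)=\theta/(\theta+1-\gamma)$ and $\psi_\sigma(y)=\gamma/(\gamma+1-\theta)$, independent of the state. Hence confidence and expected confidence domination coincide. Both are equivalent to the pair of cross-multiplied inequalities
\[
\theta(1-\gamma')\geq \theta'(1-\gamma), \qquad \gamma(1-\theta')\geq \gamma'(1-\theta).
\]
By symmetry of the prior, $W(\sigma)$ is an increasing affine function of $\theta+\gamma$, with slope proportional to $u_H-u_L>0$. So choice payoff domination is exactly $\theta+\gamma\geq\theta'+\gamma'$. The main step is then to add the two displayed inequalities. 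The mixed terms $-\theta\gamma'-\gamma\theta'$ on the left and $-\theta'\gamma-\gamma'\theta$ on the right are identical and cancel. What remains is $\theta+\gamma\geq\theta'+\gamma'$, which is choice payoff domination.

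For the failure of the converse, I will exhibit a pair with higher payoff but strictly lower confidence in one option. Take $\sigma=(0.9,0.6)$ and $\sigma'=(0.7,0.7)$. Then $\theta+\gamma=1.5>1.4=\theta'+\gamma'$, so $\sigma$ payoff dominates. But $\psi_\sigma(x)=0.9/1.3\approx 0.69<0.7=\psi_{\sigma'}(x)$, so $\sigma$ does not confidence dominate $\sigma'$. Example~\ref{ex:binarcounter} could serve the same role if one prefers to reuse it.

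For the second claim, indicativeness gives $\theta,\gamma,\theta',\gamma'\geq \tfrac12$. Then $\max\{\theta,1-\theta\}=\theta$, and similarly for the others. So "$\sigma$ less random than $\sigma'$" reads simply as $\theta\geq\theta'$ and $\gamma\geq\gamma'$. It remains to note that $\theta/(\theta+1-\gamma)$ is nondecreasing in both $\theta$ and $\gamma$, and likewise $\gamma/(\gamma+1-\theta)$ is nondecreasing in both. Hence both confidences weakly rise, which is confidence domination. Under these conditions the strict version is not needed, since weak inequalities suffice. Consistently with Proposition~\ref{prop:indict}, this also yields payoff domination.

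The only step I expect to need care is the bookkeeping at the boundaries: tie signals when $\theta+\gamma=1$, the requirement $x\in X(\sigma,\sigma')$ when some option is never chosen (for example $\theta=1,\gamma=0$ is excluded by $\theta+\gamma>1$ only up to edge cases), and zero denominators. I would handle these by a short separate remark rather than inside the main algebra.
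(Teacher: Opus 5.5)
Your argument is correct and is essentially the paper's own. The paper simply reads the inclusions off Figures~\ref{fig:random}--\ref{fig:payoff}: the confidence-dominance cone cut out by the two likelihood-ratio lines through an experiment lies inside the half-plane above its iso-payoff line $\theta+\gamma=c$, and under indicativeness the north-east less-random quadrant lies inside that cone. Summing the two cross-multiplied inequalities is a clean algebraic proof of the first inclusion, and your pair $(0.9,0.6)$ versus $(0.7,0.7)$ correctly shows that the converse fails.

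The one slip is your aside that Example~\ref{ex:binarcounter} could serve instead. In that example the payoff-dominant experiment $(\theta,\gamma')$ also confidence dominates $(\theta,\gamma)$, as the paper itself notes, so it cannot show that payoff dominance fails to imply confidence dominance.
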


The proof immediately follows from visual inspection of Figures \ref{fig:random} to \ref{fig:payoff}. Thus, although our sufficient conditions are not satisfied in this domain, there is a relationship between our different orderings --- they are nested. 

A single aligned shift involves either increasing either $\theta$ or $\gamma$ --- this traces out a region that dominates the original experiment in both the horizontal and vertical dimensions in our figure.  One can see how, under this restriction, confidence and choice payoffs are equivalent; and if we add that experiments are indicative, that is also equivalent.  

%  \subsubsection{XXXX}
%{\color{red} Can we do another rational inattention model where we don't get comovement?}

\section{Sufficient Conditions on Payoff Structure}\label{sec:suffpayoff}

% \section{Psychophysical Choice}\label{sec:psycho}

The previous section provided a positive result by restricting the set of Blackwell experiments.  An alternative approach would be to restrict the set of allowable payoffs; we do so here.  This will allow us to link a measure of confidence to choice payoff domination, although it doesn't allow us to relate randomness to choice payoff domination.

The experimental paradigm described previously in the paper is the canonical binary-choice framework in economics, in which agents select the option that maximizes expected utility. A distinct paradigm, widely used in the study of human cognition—and central to Fechner’s \cite{fechner1860elemente} original work on random choice—asks individuals to compare two options and select the one that has ``more'' of a given attribute.

 These experiments span a broad range of domains, including line orientation \cite{blakemore1969orientation}, brightness \cite{weber1834weberlaw}, line length \cite{fechner1860elemente}, spatial frequency and contrast \cite{campbell1968fourier}, motion \cite{newsome1989neuronal}, and size \cite{krider2001pizzas}. In a related vein, recent economics experiments ask subjects to identify which of two options has the higher expected monetary payoff \cite{enke2023quantifying}. 

The crucial distinction in these environments is that incentives depend only on correctness: the decision maker does not value how much one option exceeds the other, but only whether it does. As we show below, this difference has substantive implications for the relationship between confidence and our other measures.  

We model this paradigm within our existing environment with one modification. The payoff from a choice is no longer the utility of the selected option. Instead, the decision maker receives a payment normalized to $1$ if she chooses the option with higher utility in the realized state, and $0$ otherwise.\footnote{Because the decision maker is an expected-utility-maximizing agent, this transformation is without loss.} We refer to these as {\it{psychophysical payoffs}}.

\begin{defn}
    The psychophysical payoff is:
\[
W^{psych}(\sigma) = \sum_{\omega \in \Omega} \pi(\omega) \sum_{s \in S} \sigma(s|\omega) \sum_x c_\sigma(x|s)\, \mathbf{1}_{\hat{\Omega}(x)}(\omega).
\]
\end{defn}

This alteration in the incentives of the problem changes the relationship between confidence and the other measures. In order to formally show this, we now define the natural completion of expected confidence dominance --- we average across items and states.

Define overall confidence as the average confidence, averaging both across states and chosen items:
\begin{align*}
    \psi_{\sigma} &= \sum_{s \in S(x)} \sum_{\omega \in \Omega} \psi_{\sigma}(x) \, \sigma(s|\omega) \, \pi(\omega) + \frac{1}{2} \sum_{s \in S(x,y)} \sum_{\omega \in \Omega} \psi_{\sigma}(x) \, \sigma(s|\omega) \, \pi(\omega) \\
    &\quad + \frac{1}{2} \sum_{s \in S(x,y)} \sum_{\omega \in \Omega} \psi_{\sigma}(y) \, \sigma(s|\omega) \, \pi(\omega) + \sum_{s \in S(y)} \sum_{\omega \in \Omega} \psi_{\sigma}(y) \, \sigma(s|\omega) \, \pi(\omega)
\end{align*}

We want to relate this back to one of the measures we consider previously: state, item dependent confidence.

\begin{lemma}
    $\psi_\sigma = \sum_{\omega \in \Omega} \pi(\omega) \sum_{x} \rho_\sigma(x|\omega) \, \psi_\sigma(x|\omega).$
\end{lemma}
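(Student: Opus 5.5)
The plan is to reduce both sides of the lemma to the same double sum,
\[
\sum_{x}\sum_{\omega\in\Omega}\sum_{s\in S}\pi(\omega)\,\sigma(s|\omega)\,c_{\sigma}(x|s)\,\pi_{\sigma}(\hat{\Omega}(x)|s),
\]
using only the definitions \eqref{eq:confidence} and \eqref{eq:confidence2}.

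\emph{Step 1: simplify the definition of $\psi_\sigma$.} In the defining expression, $\psi_\sigma(x)$ and $\psi_\sigma(y)$ do not depend on $s$ or $\omega$, so they factor out of their sums. The coefficient multiplying $\psi_\sigma(x)$ is
\[
\sum_{\omega}\pi(\omega)\Bigl[\sum_{s\in S(x)}\sigma(s|\omega)+\tfrac12\sum_{s\in S(x,y)}\sigma(s|\omega)\Bigr].
\]
Because $c_\sigma(x|s)$ equals $1$ on $S(x)$, $\tfrac12$ on $S(x,y)$ and $0$ on $S(y)$, the bracket is exactly $\rho_\sigma(x|\omega)$. The coefficient is therefore $\sum_\omega\pi(\omega)\rho_\sigma(x|\omega)=\rho_\sigma(x)$, and the same holds for $y$. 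This gives
\[
\psi_\sigma=\rho_\sigma(x)\,\psi_\sigma(x)+\rho_\sigma(y)\,\psi_\sigma(y).
\]

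\emph{Step 2: rewrite the left-hand side.} The denominator in \eqref{eq:confidence2} is precisely $\rho_\sigma(x)$. Hence $\rho_\sigma(x)\psi_\sigma(x)$ equals the numerator of \eqref{eq:confidence2}, namely $\sum_\omega\pi(\omega)\sum_s\sigma(s|\omega)c_\sigma(x|s)\pi_\sigma(\hat\Omega(x)|s)$. Summing over $x$ turns the left-hand side into the double sum displayed above.

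\emph{Step 3: rewrite the right-hand side.} By \eqref{eq:confidence}, for each fixed $\omega$, $\rho_\sigma(x|\omega)\psi_\sigma(x|\omega)$ equals $\sum_s\sigma(s|\omega)c_\sigma(x|s)\pi_\sigma(\hat\Omega(x)|s)$. Weighting by $\pi(\omega)$ and summing over $\omega$ and $x$ gives the same double sum as in Step 2, which proves the identity.

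\emph{Main obstacle: zero-probability choices.} The delicate point is the degenerate case where $\rho_\sigma(x|\omega)=0$ or $\rho_\sigma(x)=0$, in which the ratios defining confidence are undefined. I would handle this with the convention that $\rho\cdot\psi=0$ in such cases. This convention is consistent because whenever the denominator vanishes, every term $\sigma(s|\omega)c_\sigma(x|s)$ in the corresponding numerator is also zero. Indeed, the identity is really a statement about the unnormalized numerators, so it holds regardless of how the undefined ratios are assigned.
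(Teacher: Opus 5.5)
Your proof is correct. Its core computation matches the paper's: both use
$\rho_\sigma(x|\omega)\psi_\sigma(x|\omega)=\sum_s\sigma(s|\omega)c_\sigma(x|s)\pi_\sigma(\hat{\Omega}(x)|s)$
and then swap the order of summation. The two proofs differ in how they connect this sum to the defining formula for $\psi_\sigma$.

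The paper goes one step further on the right-hand side. It expands the posterior to reach $\sum_\omega\pi(\omega)\sum_s\sigma(s|\omega)\sum_x c_\sigma(x|s)\mathbf{1}_{\hat{\Omega}(x)}(\omega)$, and then asserts that this matches the definition signal by signal. That indicator form is what Proposition \ref{prop:psycho} uses to identify $\psi_\sigma$ with $W^{psych}(\sigma)$. However, the definition of $\psi_\sigma$ weights each signal by the averaged confidence $\psi_\sigma(x)$, not by the signal-level posterior $\pi_\sigma(\hat{\Omega}(x)|s)$. So the match holds only in aggregate, after summing over $S(x)$ and half of $S(x,y)$, not term by term.

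Your Steps 1 and 2 carry out exactly that aggregation. Step 1 shows the coefficient on $\psi_\sigma(x)$ is $\rho_\sigma(x)$, and Step 2 uses that the denominator in \eqref{eq:confidence2} is $\rho_\sigma(x)$. Together they make the link to the literal definition rigorous. Your convention for zero denominators also settles a degenerate case the paper leaves implicit.

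If you want the lemma to serve the next proposition directly, add one line using $p(s)\,\pi_\sigma(\hat{\Omega}(x)|s)=\sum_{\omega\in\hat{\Omega}(x)}\pi(\omega)\sigma(s|\omega)$ to pass to the indicator form.
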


\begin{proof}
Recall that conditional confidence satisfies
\[
\psi_\sigma(x|\omega) = \frac{\sum_{s \in S} \sigma(s|\omega)\, c_\sigma(x|s)\, \pi_\sigma(\hat{\Omega}(x)|s)}{\rho_\sigma(x|\omega)},
\]
so that
\[
\rho_\sigma(x|\omega)\, \psi_\sigma(x|\omega) = \sum_{s \in S} \sigma(s|\omega)\, c_\sigma(x|s)\, \pi_\sigma(\hat{\Omega}(x)|s).
\]
Summing over items and states:
\begin{align*}
\sum_{\omega} \pi(\omega) \sum_x \rho_\sigma(x|\omega)\, \psi_\sigma(x|\omega) 
&= \sum_{\omega} \pi(\omega) \sum_x \sum_{s} \sigma(s|\omega)\, c_\sigma(x|s)\, \pi_\sigma(\hat{\Omega}(x)|s) \\
&= \sum_s \sum_x c_\sigma(x|s)\, \pi_\sigma(\hat{\Omega}(x)|s) \sum_{\omega} \pi(\omega)\, \sigma(s|\omega) \\
&= \sum_s p(s) \sum_x c_\sigma(x|s)\, \pi_\sigma(\hat{\Omega}(x)|s),
\end{align*}
where $p(s) = \sum_\omega \pi(\omega)\sigma(s|\omega)$ is the marginal probability of signal $s$. Now expand the posterior:
\[
\pi_\sigma(\hat{\Omega}(x)|s) = \frac{\sum_{\omega \in \hat{\Omega}(x)} \pi(\omega)\,\sigma(s|\omega)}{p(s)}.
\]
Substituting:
\begin{align*}
\sum_s p(s) \sum_x c_\sigma(x|s)\, \pi_\sigma(\hat{\Omega}(x)|s) 
&= \sum_s \sum_x c_\sigma(x|s) \sum_{\omega \in \hat{\Omega}(x)} \pi(\omega)\,\sigma(s|\omega) \\
&= \sum_\omega \pi(\omega) \sum_s \sigma(s|\omega) \sum_x c_\sigma(x|s)\, \mathbf{1}_{\hat{\Omega}(x)}(\omega).
\end{align*}
This last expression equals $\psi_\sigma$ as defined above, since for each signal $s$: if $s \in S_\sigma(x)$, then $c_\sigma(x|s) = 1$ and the indicator selects only states where $x$ is optimal; if $s \in S_\sigma(x,y)$, the uniform randomization assigns weight $\frac{1}{2}$ to each item; and analogously for $s \in S_\sigma(y)$.
\end{proof}

\begin{defn}
    $\sigma$ overall confidence dominates $\sigma'$ if $\psi_{\sigma} \geq \psi_{\sigma'}$
\end{defn}

Although randomness is still distinct from the other two variables, overall confidence and the expected payoff from a problem now coincide.

\begin{proposition}\label{prop:psycho} $\sigma$ overall confidence dominates $\sigma'$ if and only if $\sigma$ has higher expected (ex-ante) psychophysical payoffs than $\sigma'$.  If $\sigma$ has higher expected (ex-ante) psychophysical payoffs than $\sigma'$, $\sigma$ may also have higher randomness.
\end{proposition}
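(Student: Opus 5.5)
The equivalence will follow from the preceding Lemma, which shows that overall confidence is an identity with the psychophysical payoff. The plan is to take the last line of that Lemma's proof,
\[
\psi_\sigma=\sum_\omega \pi(\omega)\sum_s \sigma(s|\omega)\sum_x c_\sigma(x|s)\,\mathbf{1}_{\hat{\Omega}(x)}(\omega),
\]
and observe that it is verbatim the definition of $W^{psych}(\sigma)$. So $\psi_\sigma=W^{psych}(\sigma)$ for every experiment $\sigma$, and $\psi_\sigma\ge\psi_{\sigma'}$ if and only if $W^{psych}(\sigma)\ge W^{psych}(\sigma')$.

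One subtlety needs checking. The Lemma's middle step uses $\sum_x \rho_\sigma(x|\omega)\psi_\sigma(x|\omega)$, and $\psi_\sigma(x|\omega)$ is undefined when $\rho_\sigma(x|\omega)=0$. I will adopt the convention that such terms contribute zero, since their numerator is zero too. I also need the displayed definition of $\psi_\sigma$ to agree with the Lemma's expression, including the $\tfrac12$ weights on tie signals. This agreement is exactly the final sentence of the Lemma's proof, so I take it as given. The intuition is that average posterior correctness equals the probability of being correct (a martingale/tower property), and only correctness is rewarded under psychophysical payoffs.

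For the second claim I need a counterexample, and I will use the two-state, two-signal environment of Table~\ref{table:Blackwell}. There $W^{psych}$ is proportional to $\theta+\gamma$ when $\theta+\gamma\ge1$, exactly as in Figure~\ref{fig:payoff}. Example~\ref{ex:binarcounter} already suffices: take $1-\theta<\gamma<\gamma'<1/2$. Then $(\theta,\gamma')$ has $\theta+\gamma'>\theta+\gamma$, so it has strictly higher psychophysical payoff. But $\max\{\gamma',1-\gamma'\}=1-\gamma'<1-\gamma$, so it is strictly more random in state $\omega_2$ and equally random in $\omega_1$. I will check that the decision rule is unchanged, meaning $s_1$ still leads to $x$, which holds because $\theta+\gamma'>1$. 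For concreteness I would pick $\theta=0.9$, $\gamma=0.2$, $\gamma'=0.3$.

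The only real obstacle is bookkeeping: making sure the Lemma's identity handles tie signals and zero-probability choices consistently with the stated definition of $\psi_\sigma$. Everything else is immediate.
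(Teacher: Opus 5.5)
Your proposal is correct and matches the paper's proof. The equivalence is the identity $\psi_\sigma = W^{psych}(\sigma)$ read off the Lemma, and the randomness claim uses the same binary example, with $\theta$ fixed and $1-\theta<\gamma<\gamma'<1/2$. The only minor difference is that you verify the payoff increase by direct computation ($\theta+\gamma'>\theta+\gamma$) instead of appealing to Blackwell informativeness, which is, if anything, more self-contained.
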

\begin{proof}
We begin with the first sentence.  Recall that 
\[
\psi_\sigma = \sum_\omega \pi(\omega) \sum_s \sigma(s|\omega) \sum_x c_\sigma(x|s)\, \mathbf{1}_{\hat{\Omega}(x)}(\omega).
\]
The psychophysical payoff is
\[
W^{psych}(\sigma) = \sum_{\omega \in \Omega} \pi(\omega) \sum_{s \in S} \sigma(s|\omega) \sum_x c_\sigma(x|s)\, \mathbf{1}_{\hat{\Omega}(x)}(\omega).
\]
These two expressions are identical, so $\psi_\sigma = W^{psych}(\sigma)$. Therefore $\psi_\sigma \geq \psi_{\sigma'}$ if and only if $W^{psych}(\sigma) \geq W^{psych}(\sigma')$.

We show the second part by example.  Consider the binary-state, binary-signal setup from Example~\ref{ex:binarcounter}. Set $\theta_1 = \theta_2 > 0.5$ and $\gamma_1 < \gamma_2 < 0.5$, but that $\theta_1 > 1-\gamma_1, 1-\gamma_2$ Then $\sigma_2$ is Blackwell more informative than $\sigma_1$, and hence has higher psychophysical payoffs.

In the first state, randomness is identical across the two experiments (both have the same $\theta$). In the second state, $\sigma_2$ assigns probability $\gamma_2 > \gamma_1$ to signal $s_1$. Since $\gamma_k < 0.5$, the most-likely choice in state~2 under both experiments is $x$ (chosen via signal $s_1$), with probability $1 - \gamma_k$. Because $\gamma_2 > \gamma_1$, we have $1 - \gamma_2 < 1 - \gamma_1$, so
\[
\max_x \rho_{\sigma_2}(x|\omega_2) < \max_x \rho_{\sigma_1}(x|\omega_2).
\]
Thus $\sigma_2$ is more random in state~2 despite having strictly higher psychophysical payoffs.
\end{proof}
%%% PAULO ASKS: what does it mean for a real life lab experiment to be 'indicative'? What in the design of a psychophisical lab experiment can be manipulated to change 'indicativeness'?

This result highlights that the structure of incentives matter when thinking of how to interpret measurements of choice difficulty.

\section{Other Orderings}\label{sec:ext}

\subsection{Willingness to Accept to Switch}
Rather than restricting attention to a subset of payoff functions or Blackwell experiments, a distinct way of attempting to circumvent our negative result is to look for alternative observable measures that relate to choice payoff domination. Here, we pursue this third option and discuss how a widely used measure of preference intensity allows us, in our environment to recover understanding

$WTA(\sigma)$ is (average) utility that the DM is willing to accept to switch to choosing the other option, from their initially chosen option, given experiment $\sigma$

\begin{defn}
    $\sigma$ has a higher WTA than $\sigma'$ if $WTA(\sigma) \geq WTA(\sigma')$.
\end{defn}

This is a complete ordering, and has been widely used in economics in endowment effect experiments (\cite{Knetsch84}) as well as to measure preference intensity.  In our setting, where preferences are fixed, it will recover the degree of understanding.  

In order to show this we need to make two assumptions.  First, that the DM understands WTA measure perfectly. Second, the researchers can measure WTA in utils (e.g., suppose the stakes are small enough and the utility function is smooth so that we have an approximately linear mapping from utils to money).  

Conditional on signal $j$ that causes DM to choose $x$, the DM's WTA to switch to $y$ is:

$$\sum_{i \in \Omega} [u_i(x) -u_i(y)] \frac{\sigma(s_j|\omega_i) \pi(\omega_i)}{\sum_{k} \sigma(s_j|\omega_k) \pi(\omega_k)}$$

$WTA(\sigma)$ is then 
$$\sum_{j \in S(x)} \sum_{i \in \Omega} [u_i(x) -u_i(y)] \sigma(s_j|\omega_i) \pi(\omega_i) + \sum_{j \in S(y)} \sum_{i \in \Omega} [u_i(y) -u_i(x)] \sigma(s_j|\omega_i) \pi(\omega_i) $$

\begin{proposition}\label{prop:wta}
    $\sigma$ has a higher WTA than $\sigma'$ if and only if $\sigma$ choice payoff dominates $\sigma'$.  
\end{proposition}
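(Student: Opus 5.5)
The plan is to show that $WTA(\sigma)$ equals $W(\sigma)$ minus a constant that does not depend on $\sigma$. Monotonicity then gives the equivalence in Proposition~\ref{prop:wta} at once.

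First I would rewrite $W(\sigma)$ using the unnormalized values $V_\sigma(a,s)=\sum_\omega \pi(\omega)\sigma(s|\omega)u(a|\omega)$, exactly as in the proof of Proposition~\ref{prop:suff}. This gives
\[
W(\sigma)=\sum_{s\in S_\sigma(x)}V_\sigma(x,s)+\sum_{s\in S_\sigma(y)}V_\sigma(y,s)+\tfrac12\sum_{s\in S_\sigma(\{x,y\})}\bigl[V_\sigma(x,s)+V_\sigma(y,s)\bigr].
\]
I would also define the no-information benchmark $\bar W:=\tfrac12\sum_s[V_\sigma(x,s)+V_\sigma(y,s)]=\tfrac12\sum_\omega\pi(\omega)[u(x|\omega)+u(y|\omega)]$. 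It is independent of $\sigma$ because $\sum_s\sigma(s|\omega)=1$.

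The key step is to subtract $\bar W$ signal by signal. On $s\in S_\sigma(x)$ the contribution is $V_\sigma(x,s)-\tfrac12[V_\sigma(x,s)+V_\sigma(y,s)]=\tfrac12[V_\sigma(x,s)-V_\sigma(y,s)]$. This is one half of the $s$-term in the displayed formula for $WTA(\sigma)$, namely $\sum_i[u_i(x)-u_i(y)]\sigma(s|\omega_i)\pi(\omega_i)$. The case $s\in S_\sigma(y)$ is symmetric. Tie signals contribute zero both to $W(\sigma)-\bar W$ and to $WTA(\sigma)$, since the WTA at a tie is zero and the paper's formula omits those signals. Summing over signals yields
\[
W(\sigma)=\bar W+\tfrac12\,WTA(\sigma).
\]
Equivalently, $WTA(\sigma)=\sum_s|V_\sigma(x,s)-V_\sigma(y,s)|$, and $\max\{a,b\}=\tfrac{a+b}{2}+\tfrac{|a-b|}{2}$.

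With this identity, $WTA(\sigma)\ge WTA(\sigma')$ holds if and only if $W(\sigma)\ge W(\sigma')$, because $\bar W$ is common to both experiments and the factor $\tfrac12$ is positive.

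The step I expect to need the most care is not the algebra but pinning down the WTA aggregation. The paper's $WTA(\sigma)$ weights each signal by its unnormalized probability $p(s)$, which is what makes it an ex-ante average of interim WTAs; I would state this explicitly. I would also handle tie signals, where the WTA is $0$ and the uniform randomization reproduces $\bar W$ exactly. Under the maintained assumptions (a perfectly understood elicitation, measured in utils) nothing else is needed.
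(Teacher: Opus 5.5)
Your proof is correct and is essentially the paper's argument: the paper also shows $WTA(\sigma)=2\bigl[W(\sigma)-W_0\bigr]$ with the experiment-independent baseline $W_0=\sum_\omega\pi(\omega)\frac{u(x,\omega)+u(y,\omega)}{2}$, and the tie-signal terms cancel there just as in your signal-by-signal computation. Your observation that the identity needs only $\sum_s\sigma(s|\omega)=1$ is accurate. The paper uses the symmetric prior only to interpret $W_0$ as the no-information payoff, not for the identity itself.
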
 

\begin{proof}
We show that $WTA(\sigma)$ is an affine transformation of $W(\sigma)$ with experiment-independent intercept, so the two induce the same ranking.

First, consider the definition of $W(\sigma)$.  Since the DM chooses $x$ with certainty when $s \in S_\sigma(x)$, chooses $y$ when $s \in S_\sigma(y)$, and randomizes uniformly when $s \in S_\sigma(x,y)$:
\begin{align*}
W(\sigma) = \sum_{\omega} \pi(\omega) \bigg[ &\sum_{s \in S_\sigma(x)} \sigma(s|\omega)\,u(x,\omega) + \sum_{s \in S_\sigma(y)} \sigma(s|\omega)\,u(y,\omega) \\
&+ \sum_{s \in S_\sigma(x,y)} \sigma(s|\omega)\,\frac{u(x,\omega)+u(y,\omega)}{2} \bigg].
\end{align*}

Under the completely uninformative experiment, the symmetry assumption $\pi(u(x),u(y)) = \pi(u(y),u(x))$ implies the DM is indifferent and randomizes uniformly, yielding a baseline payoff that is independent of the experiment:
\[
W_0 = \sum_{\omega} \pi(\omega)\,\frac{u(x,\omega) + u(y,\omega)}{2}.
\]

Next we compute $W(\sigma) - W_0$. Using the fact that $\sum_{s \in S_\sigma(x)} \sigma(s|\omega) + \sum_{s \in S_\sigma(y)} \sigma(s|\omega) + \sum_{s \in S_\sigma(x,y)} \sigma(s|\omega) = 1$, the terms involving $S_\sigma(x,y)$ cancel, giving:
\begin{align*}
W(\sigma) - W_0 = \sum_{\omega} \pi(\omega) \bigg[ &\sum_{s \in S_\sigma(x)} \sigma(s|\omega)\,\frac{u(x,\omega) - u(y,\omega)}{2} \\
&+ \sum_{s \in S_\sigma(y)} \sigma(s|\omega)\,\frac{u(y,\omega) - u(x,\omega)}{2} \bigg].
\end{align*}

Comparing with the expression for $WTA(\sigma)$:
\[
WTA(\sigma) = 2\bigl[W(\sigma) - W_0\bigr].
\]
Since $W_0$ is a constant independent of the experiment, $WTA(\sigma) \geq WTA(\sigma')$ if and only if $W(\sigma) \geq W(\sigma')$.
\end{proof}

The intuition is that WTA measures the expected utility gap between the chosen and unchosen option, which equals twice the improvement in expected payoff over the baseline of random choice. Since the baseline is common to all experiments, ranking by WTA is equivalent to ranking by expected payoff.

Thus, a higher WTA to switch (measured in utils) corresponds to a higher value of information for the choice problem. This elicitation is often thought of to measure ``strength of preference.'' In our environment, strength of preference corresponds to how well the problem is understood.  

This result also points out that having a state-by-state higher WTA is ``too strong'' in that we may have higher expected payoffs from an experiment but not have higher state-by-state WTA.

If we drop the assumption that we can only pay in money (not utils), then how ``robust'' this result will be depends on the curvature of utility, as well as how small the WTA measure is (smaller measures are more likely to be reversed due to utility curvature).\footnote{E.g., one can construct bounds on how much curvature utility needs to have for a given WTA so that choice payoffs would be the reverse of WTA.}

\subsection{Attenuation}\label{sec:atten}
One important behavioral measure that is often considered is attenuation - see \cite{enke2024behavioral} for recent work on this.  In their settings, they define attenuation using the elasticity of an observable choice variable (i.e. a behavior) with respect to a change in a parameter.  

In our context, the ``choice'' variable is the probability that an option is chosen, and a change in parameter can be represented by changing the state.  Thus, given our Bayesian expected utility framework there are two key considerations.  First, we have to have access to state-contingent data, as attenuation is typically measured as the change across states.  Second, recall that researchers do not know the identity of any given state (even if they have state-contingent data).  Thus, when we have state-contingent data, we can only ``identify'' states by the choice probabilities assigned to each state.  

Given these desiderata, it seems natural then to measure attenuation as the change in choice probabilities across any pair of states; for a given pair of states $\omega_i,\omega_j$, define $\Delta_{ij}(\sigma) = \sum_{s \in S_{\sigma}(x)} \sigma(s|\omega_i) + \frac{1}{2} \sum_{s \in S_{\sigma}(x,y)} \sigma(s|\omega_i) - \sum_{s \in S_{\sigma}(x)} \sigma(s|\omega_j) - \frac{1}{2} \sum_{s \in S_{\sigma}(x,y)} \sigma(s|\omega_j)$.  

\begin{defn}
    $\sigma$ is less attenuated than $\sigma'$ if, for every $\omega_i,\omega_j$, (i) when $\Delta_{ij}(\sigma') > 0 $ then $\Delta_{ij}(\sigma) > \Delta_{ij}(\sigma')$, and when $\Delta_{ij}(\sigma') < 0 $ then $\Delta_{ij}(\sigma) < \Delta_{ij}(\sigma')$
\end{defn} 

This definition raises a distinct issue.  Define the set of $x$-maximal states under $\sigma$ as those with the highest value of $\sum_{s \in S_{\sigma}(x)} \sigma(s|\omega_i) + \frac{1}{2} \sum_{s \in S_{\sigma}(x,y)} \sigma(s|\omega_i)$, and $x$-minimal states under $\sigma$ as those with the lowest values.  

If $\sigma$ is more attenuated than $\sigma'$ then it must be the case that either the all states in the set of $x$-maximal states under $\sigma'$ have a higher total probability of choosing $x$ than under $\sigma$, or all the states in the set of $x$-minimal states under $\sigma'$ have a lower chance of choosing $x$ than under $\sigma$.  

In other words, more attenuation implies that either the minimum probabilities need to be higher, or the maximal probabilities need to be lower.  To see that this is true, suppose not.  Then if we change any interior probability state, that state must become more attenuated relative to either $x$ maximal states or $x$ minimal states, and less compared to the other.  Thus, in our environment, what seems to be a reasonable definition of attenuation is an extremely strong requirement that says the most extreme states (in terms of choice probabilities) must shift.  

This immediately rules out us being able to compare structures where both of them have choice probabilities of 0 and 1 in some states.  

Despite the strength of this definition, it isn't immune to the issues we raised previously.  The following proposition summarizes, showing that adding attenuation to either of the first two antecedents of Proposition \ref{prop:neg2} does not change the results, and also does not change the results when when we replace randomness with attenuation in the third part (notice that in the third part we can't add attenuation as an antecedent since it is a direct function of randomness).  

\begin{proposition}\label{prop:negatt}~
    \begin{enumerate}

        \item If $\sigma$ is less random (or less expected random) than, confidence dominates, and is less attenuated than $\sigma'$, then it can be the case that $\sigma'$ choice payoff  dominates $\sigma$. 

        \item If $\sigma$ is less random (or expected less random) than,  choice payoff dominates, and is less attenuated than $\sigma'$ then $\sigma'$ can confidence dominate $\sigma$.
        
        \item If $\sigma$ confidence dominates and choice payoff dominates $\sigma'$ then $\sigma'$ can be less attenuated than $\sigma$.
    \end{enumerate}
\end{proposition}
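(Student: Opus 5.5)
The plan is to prove each part by explicit counterexample, as in Proposition \ref{prop:neg2}. For parts (1) and (2) I will reuse the logic of Examples \ref{ex:3states_2} and \ref{ex:secondcounter}, adding a check (or a repair) for the attenuation condition. For part (3) I will build a new example in which $\sigma$ is more informative in the relevant sense but $\sigma'$ has more spread-out state-conditional choice probabilities. Throughout, I will write $p_\sigma(\omega)=\rho_\sigma(x|\omega)$, so that $\Delta_{ij}(\sigma)=p_\sigma(\omega_i)-p_\sigma(\omega_j)$. Then ``$\sigma$ less attenuated than $\sigma'$'' says that $\sigma$ preserves every strict ordering of states under $\sigma'$ and strictly stretches each such gap.

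A preliminary observation fixes the dimension of the examples. In the binary-state, binary-signal environment, $\Delta_{12}=\theta+\gamma-1$, so attenuation there coincides with the choice payoff order of Figure \ref{fig:payoff}. Part (3) is therefore impossible in that environment, and every example needs at least three states.

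\textbf{Part (1).} Examples \ref{ex:3states_2} cannot be used as stated. Under $\sigma'$, some gaps equal $1$, and those gaps cannot be strictly enlarged. In addition, $\sigma$ and $\sigma'$ order the states differently: state $(1,0)$ has $p_\sigma=1$ but $p_{\sigma'}=0$. I will therefore modify the construction.

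1. Keep the payoff structure of Example \ref{ex:3states_2}: a rare high-utility outcome that the more informative experiment leads the DM to chase, lowering confidence.
2. Arrange the signals so that the ranking of states by $p$ is the same under both experiments.
3. Make $\sigma'$ a ``compressed'' version of $\sigma$: first mix $\sigma'$ with a small amount of uninformative noise, which moves all $p_{\sigma'}$ into $(0,1)$ without changing which choice each signal induces.
4. Let the extra payoff of $\sigma'$ come only from a reallocation among states that share the same $p$-rank. Concretely, a single state receives a tie-free shift in which the rare $1000$-utility outcome is identified, while its choice probability stays between those of its neighbours.
5. Verify the three antecedents (less random, confidence dominates, less attenuated) and then verify that $W(\sigma')>W(\sigma)$.

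\textbf{Part (2).} I will start from Example \ref{ex:secondcounter}. Let $\sigma'$ be the perturbed uninformative experiment and $\sigma$ the perturbed informative one, each mixed with a small amount of noise.

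1. Choose the signal probabilities so that the states line up the same way under both. Under the informative $\sigma$, $x$ is chosen in $\omega_3,\omega_4$ and $y$ in $\omega_1,\omega_2$, so $p_\sigma$ is high on $\{\omega_3,\omega_4\}$ and low on $\{\omega_1,\omega_2\}$.
2. Make $\sigma'$ weakly informative in the same direction, with $p_{\sigma'}$ slightly above $1/2$ on $\{\omega_3,\omega_4\}$ and slightly below on $\{\omega_1,\omega_2\}$. Then every nonzero gap is stretched under $\sigma$.
3. Break the remaining within-block ties (gaps of $0$ impose no constraint) so that the comparison is robust.
4. Check that confidence under $\sigma'$ stays near $1/2$, while under $\sigma$ it stays near $0.02$. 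Also check that $\sigma$ is less random and has a higher payoff.

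\textbf{Part (3).} I will use three states. Take $\sigma'$ to have widely spread $p_{\sigma'}$ across states, generated by signals that are individually uninformative about the \emph{ordinal} ranking, so that both its confidence and its payoff are low. Take $\sigma$ to compress the spread while making its signals more diagnostic of which option is optimal. This reverses the role of indicativeness seen in Example \ref{ex:binarcounter}.

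\textbf{Main obstacle.} The step I expect to be hardest is the attenuation check in parts (1) and (3). The definition forces the entire state ordering by $p$ to be preserved, with strict stretching, so the gap between the $x$-maximal and $x$-minimal states must move in the required direction. My approach is to fix the common ordering of states first, and only afterwards tune payoffs and priors to obtain the desired confidence and payoff comparisons.
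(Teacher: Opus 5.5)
Your Part (2) is sound. It is even slightly stronger than the paper's argument, which uses Example~\ref{ex:secondcounter} unchanged: every gap of the uninformative experiment is $0$, so any experiment is vacuously less attenuated than it.

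Parts (1) and (3), however, rest on a structural fact your plan does not isolate, and your three-state plan for Part (3) cannot work. Write $g(\omega)=u(x|\omega)-u(y|\omega)$ and $p(\omega)=\rho(x|\omega)$, and pair each $\omega\in\Omega(x)$ with its mirror $\bar\omega$. Then
\[
W(\sigma)-W_0=\sum_{\omega\in\Omega(x)}\pi(\omega)\,g(\omega)\,\bigl[p_\sigma(\omega)-p_\sigma(\bar\omega)\bigr].
\]
``Less attenuated'' means every nonzero gap of the more attenuated experiment is strictly enlarged with its sign preserved. So the more attenuated experiment can have the weakly higher payoff only if the less attenuated one \emph{inverts} some mirror pair, $p(\omega)<p(\bar\omega)$ with $\omega\in\Omega(x)$ (the only exception is the trivial case where both randomize uniformly). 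Bayesian optimality gives $W\ge W_0$ (the summed inequality in Lemma~\ref{lem:indicative}), so another mirror pair with different stakes must compensate. You therefore need two mirror pairs, i.e.\ four states, not three.

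With a symmetric prior, three states are $\{(a,b),(b,a),(c,c)\}$. There is one mirror pair, and it can never be inverted. Part (3) then forces $\Delta_{12}(\sigma)=0$. That makes every signal a tie signal, so $p\equiv\tfrac{1}{2}$, and the same then holds for $\sigma'$: the opposite of a widely spread $p_{\sigma'}$.

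The paper's Example~\ref{ex:att_2} uses the four states $(10,0),(0,10),(1,0),(0,1)$. Both experiments chase the high-stakes outcome and so invert the low-stakes pair. $\sigma'$ pushes that pair further the wrong way ($.2\to.18$, $.8\to.82$). Attenuation counts this as an improvement while payoff and confidence fall. As printed, the high-stakes gap is $0.4$ under both experiments, so it needs a slight stretch (e.g.\ $.7\to.701$, $.3\to.299$ under $\sigma'$) to meet the strict inequality.

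For Part (1), your frame (a common ordering, noise mixing, then one payoff-raising move) can work, but only from a $\sigma$ that inverts a mirror pair. The mechanism you plan to import points the other way. Noise mixing and tie-free shifts leave the choice after each signal unchanged, so any chasing in $\sigma'$ must already be present in $\sigma$. Chasing, as in Example~\ref{ex:3states_2}, is exactly what drives confidence below $\tfrac{1}{2}$. Mixing then pulls $\sigma'$'s confidence \emph{up} toward $\tfrac{1}{2}$, which destroys confidence dominance. In addition, any two states tied under $\sigma$ must stay tied under $\sigma'$, so your payoff-raising move cannot split them.

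What works is the paper's Example~\ref{ex:att_1}. The less attenuated experiment is strongly right on the frequent low-stakes pair, which keeps confidence high ($.715$). It is slightly inverted on the high-stakes pair ($.48$ versus $.52$). The more attenuated experiment pulls that pair toward $\tfrac{1}{2}$, gaining $10\cdot .03/4$ in payoff while losing only $.10/4$ on the low-stakes pair.
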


The proof, again, is by example.  We first note that Example \ref{ex:secondcounter} features a shift in attenuation in the same direction as randomness (because an uninformative experiment is more attenuated than any other experiment), and so it directly works here for the second part of the proposition. 

We now provide an example where confidence increases, randomness decreases, attenuation decreases, yet choice payoffs go down.  
\begin{example}\label{ex:att_1}

    \begin{table}[ht!]
\begin{tabular}{cccccc}
\toprule
& \multicolumn{2}{c}{$\sigma$} & \ \  & \multicolumn{2}{c}{$\sigma'$} \\ \cmidrule(lr){2-3} \cmidrule(lr){5-6}
& \multicolumn{2}{c}{Signals} & & \multicolumn{2}{c}{Signals} \\ \cmidrule(lr){2-3} \cmidrule(lr){5-6}
\multicolumn{1}{c}{\multirow{-2}{*}[0.2ex]{$\Omega = (u(a), u(b))$}}
 & \ $s_1$ \  & $s_2$ \  &  & \ $s_1$ \  & $s_2$ \  \\ \midrule
\multicolumn{1}{c|}{(10,0)} & $0.49$ & $0.51$ & & $0.48$ & $0.52$   \\ \midrule
\multicolumn{1}{c|}{(0,10)} & $0.5$ & $0.5$ & & $0.52$ & $0.48$  \\ \midrule
\multicolumn{1}{c|}{(1,0)} & $0.9$ & $0.1$ & & $0.95$ & $0.05$ \\ \midrule
\multicolumn{1}{c|}{(0,1)} &$0.1$ & $0.9$   & & $0.05$ & $0.95$  \\ 

\bottomrule\\
\end{tabular}
\caption{Two Blackwell experiments $\sigma$ and $\sigma'$ for Example \ref{ex:att_1}}\label{table:att_1}
    \end{table}

We consider an experiment with four states and two signals as in Table \ref{table:att_1}, with equal priors across states. Under both $\sigma$ and $\sigma'$, $a$ is optimal under $s_1$ and $b$ must be optimal under $s_2$. 

Under $\sigma$
\begin{enumerate}
\item the vector of maximum choice probabilities by state is $ (0.51, 0.50, 0.9, 0.9)$
\item the expected payoff to choosing $a$ after $s_1$ is 2.915 and $b$ is 2.563; while under $s_2$ those payoffs are 2.587 and 2.935, and so the total expected payoff is 2.925.
\item the confidence after choosing $a$ is .698 and after $b$ is .697 (in all states).
\end{enumerate}

Under $\sigma'$ 

\begin{enumerate}
\item the vector of maximum choice probabilities by state is $(0.52, 0.52, 0.95, 0.95)$
\item the expected payoff to choosing $a$ after $s_1$ is 2.875 and $b$ is 2.625 (and the reverse under $s_2$), and so the total expected payoff is 2.875 
\item the confidence after choosing either option is 0.715 (in all states). 
\end{enumerate}

Clearly $\sigma'$ is less random and has higher confidence.  Moreover, it is also clear that it has a lower expected payoff.  It is simple to see that $\sigma'$ is less attenuated; we have amplified choice probabilities directionally under $\sigma'$.\footnote{As should be clear, taking a slightly different definition of attenuation, only looking at the absolute changes in the choice probabilities, will not help us here --- suppose $\sigma$ is less attenuated than $\sigma'$ if $|\Delta_{ij}(\sigma)| \geq |\Delta_{ij}(\sigma)|$ for all $i,j$.  Under this definition we still have the exact same issue in the example.}$\Box$  
\end{example}

We last show that we can have higher payoffs, as well as higher confidence, but higher attenuation.  

\begin{example}\label{ex:att_2}

The next example is similar to the previous one, with experiments as in Table \ref{table:att_2}, however the prior on states (10,0) and (0,10) is 0.1, while the prior on states (1,0) and (0,1) is 0.4.

    \begin{table}[h!]
\begin{tabular}{cccccc}
\toprule
& \multicolumn{2}{c}{$\sigma$} & \ \  & \multicolumn{2}{c}{$\sigma'$} \\ \cmidrule(lr){2-3} \cmidrule(lr){5-6}
& \multicolumn{2}{c}{Signals} & & \multicolumn{2}{c}{Signals} \\ \cmidrule(lr){2-3} \cmidrule(lr){5-6}
\multicolumn{1}{c}{\multirow{-2}{*}[0.2ex]{$\Omega = (u(a), u(b))$}}
 & \ $s_1$ \  & $s_2$ \  &  & \ $s_1$ \  & $s_2$ \  \\ \midrule
\multicolumn{1}{c|}{(10,0)} & $0.7$ & $0.3$ & & $0.7$ & $0.3$   \\ \midrule
\multicolumn{1}{c|}{(0,10)} & $0.3$ & $0.7$ & & $0.3$ & $0.7$  \\ \midrule
\multicolumn{1}{c|}{(1,0)} & $0.2$ & $0.8$ & & $0.18$ & $0.82$ \\ \midrule
\multicolumn{1}{c|}{(0,1)} &$0.8$ & $0.2$   & & $0.82$ & $0.18$  \\ 

\bottomrule\\
\end{tabular}
\caption{Two Blackwell experiments $\sigma$ and $\sigma'$ for Example \ref{ex:att_2}}\label{table:att_2}
    \end{table}

Under $\sigma$

\begin{enumerate}
\item the vector of maximum choice probabilities by state is $ (0.7, 0.7, 0.8, 0.8)$
\item the expected payoff to choosing $a$ after $s_1$ is 1.56 (and also for $b$ under $s_2$) 
\item the confidence after choosing $a$ or $b$ is .3 (in all states).
\end{enumerate}

Under $\sigma'$ 

\begin{enumerate}
\item the vector of maximum choice probabilities by state is $(0.7, 0.7, 0.82, 0.82)$
\item the expected payoff to choosing $a$ after $s_1$ is 1.54 (and also for $b$ under $s_2$) 
\item the confidence after choosing $a$ or $b$ is .284 (in all states).
\end{enumerate}

Thus, one can see that $\sigma$ is more attenuated than $\sigma'$, but it also is more confident and has a higher payoff.$\Box$  

    \end{example}

    \section{Concluding Remarks}\label{sec:con}

\subsection{Summary} 

This paper provides a unifying framework for understanding empirical measures of choice difficulty through the lens of Blackwell experiments. Our central finding is cautionary: in general, common measures—choice randomness, expressed confidence, and decision-maker understanding—need not move together. However, results are not all negative.  We identify intuitive sufficient conditions under which these measures align, both by (i) restricting the set of allowable Blackwell experiments, and (ii) restricting the set of payoff functions.  The former shows that if we assume experiments are related by aligned and neutral shifts the three measures must co-move.  Moreover, within the class of experiments that are indicative, the ranking in terms of randomness is sufficient to know how payoffs are ranked.  The latter demonstrates psychophysical experiments --- where payoffs depend only on correct identification --- exhibit different properties than standard economic experiments, with confidence equivalent to understanding in the former but not the latter. 

Our results highlight that depending on what kind of sufficient condition one wants to assume, the relevant kind of ordering (confidence or randomness) and the level of aggregation (state by state or averaged across states), matters.  Section \ref{sec:sumorder} provides a summary of the different orderings and levels of aggregation and how each one relates to our results.  

We further show that willingness-to-accept measures directly capture understanding when elicited in utils, offering an alternative empirical approach. We also show that measuring attenuation fails to overcome our negative results.  

Broadly, our findings highlight that some caution is necessary in interpreting empirical measurements often thought to correspond to choice difficulty and when generalizing insights across experimental paradigms.\footnote{For example, one interpretation of the System 1/System 2 literature (\cite{kahneman2011thinking}) is that that System 2 would correspond to a more informative experiment.  However, our results would imply that System 2 would not necessarily lead to more confidence or less randomness.}$^,$\footnote{There is generally a suggestion that combining sources of information is always ``good.'' In our environment these repeated experiments clearly improve payoffs, but may not generate improvements in randomness and confidence.} 

\subsection{Assumptions and Robustness} There are various ways one could try to aggregate or disaggregate our measurements.  For example, rather than measuring average confidence conditional on a choice, we could instead measure the full distribution of confidence (since each signal is associated with its own confidence level).  Such alternatives, although interesting, fail to avoid the negative results, as can be seen by checking our examples.  One might also consider using different ways of comparing payoffs across structures (e.g., using the Blackwell order, rather than our completion).\footnote{
\cite{caplin2021comparison}, using state-contingent random choice data, provide an elegant characterization result of when one experiment is Blackwell more informative than another.  Their tests can be useful in settings such as our, although because Blackwell is incomplete, the tests can  fail to rank many experiments.}  We discuss these in Appendix \ref{sec:addorder} and show they also do not overcome our primary negative results.  

Of course, if we actually know cardinal, or even just the ordinal, ranking of items in each state, along with the state conditional data, then it is easy to construct measurements that map from expected payoffs to behavior (and back).  E.g., suppose we know the ordinal ranking of each state.  Then we can simply look in the state-contingent data and ask how often the DM makes the ``right'' choice.  A ranking in this implies choice payoff domination.  

The analysis in this paper focuses on comparing what happens to a fixed choice problem as the experiment changes; in other words the only differences between choices could come from different Blackwell experiments.  In many empirical settings decision-makers face different choice sets, and researchers want to compare behavior (and understanding) across choice sets, where not just the experiment, but also the state space (and so utilities and priors) may change.  This may be important in settings where the choice set itself determines the experiment and the experiment cannot be independently manipulated by researchers.  This would, of course, only strengthen our negative results.  In Appendix \ref{sec:crosschoice} we discuss how to extend our sufficient conditions to this environment, showing it requires a strengthening of aligned shifts.  

One may also want to extend the analysis to allow for non-binary choice sets.  Most of the definitions naturally extend, with the exception of randomness.  While distance to 50-50 choice is natural in binary sets, there are many possible extensions to larger choice sets, although some seem quite natural, e.g., entropy.    

Our approach assumes sophistication of DMs --- they know their Blackwell experiment, and can correctly infer confidence (i.e., posterior beliefs) from signals.  Such an assumption may not be realistic; e.g., papers such as \cite{lichtenstein1977those} documenting the ``hard-easy'' effect, or  \cite{frederick2005cognitive}  cognitive reflection task questions provide evidence of systematic over or underconfidence relative to accuracy in questions, and \cite{bilotta2025introspection} provides evidence that individuals may not correctly estimate their chances of getting a question on an exam correctly.  That said, \cite{SandersHangyaKepecs2016} shows that human reports of ``confidence'' in a lab setting often track a formal statistical notion of confidence.

We believe our assumption of sophistication is a natural starting point.   One could extend our model to allow for specific forms of naivete and see how it impacts the relationship between orderings.

\subsection{Specific Functional Forms} In our framework, we assume that the signal is about the state.  In many other models (e.g., \cite{enke2023cognitive, woodford2020modeling}) each option gets a signal (often independent).  One can map between our setting and a setting where one signal is observed for each options and translate our results.  

More specifically, many of these approaches leverage a Gaussian framework,  where the DM has a normal prior and normal signals.  This does not precisely fit our finite state, finite signal framework.  We extend our analysis to this approach in Appendix \ref{sec:normal} and show that the Gaussian approach satisfies the conditions of Proposition \ref{prop:suffequiv}; the proof of Proposition \ref{prop:suff} then naturally extends.  This allows us to explain why many existing papers which leverage the Gaussian framework prove that randomness, confidence and understanding should co-move.  

As discussed, many of the intuitions for comovement of our orderings emerge from a generalized Fechnerian framework, which can often be interpreted through the lens of a Gaussian framework (see \cite{he2023random,he2024moderate}).   Appendix \ref{sec:appfech} considers generalized Fechnerian models and shows that they satisfy the condition that less randomness comoves with higher expected payoffs (both measured state-by-state), but that attenuation cannot comove with expected payoffs within this framework.

\subsection{Related Work on Information and Belief Elicitation}
The results in this paper connect to the broader literature on elicitation of beliefs.  In a two state environment one can think of both our expected confidence measure as well as choice payoff measure as being generated from piecewise linear proper scoring rules which feature a single kink.  In general, these kinks are at different points, leading to these two measures being distinct (in contrast, our randomness measures cannot be represented as emerging from a proper scoring rule). 

Our work also relates to a large existing literature on signal detection theory used in psychology, neuroscience and more recently computer science.  This literature analyses the relationship between sensitivity, i.e. the accuracy of choice, or $d'$ (\cite{green1966signal}), and meta-sensitivity, i.e., confidence, or meta $d'$ (\citet{fleming2014measure}).  A key question in the literature is whether sensitivity and meta-sensitivity should co-move.  Our results highlight that the relationship between these two measures is sensitive to specific assumptions about the information generating process and payoff structure. Another key construction in the literature is the ROC curve, which captures the tradeoff between incorrectly choosing $x$ when $y$  is better and the opposite across different decision thresholds.  In Appendix \ref{sec:crosschoice} we discuss how our measures relate to the ROC curves, as well as the the Lehmann ordering \cite{lehmann1988comparing}.

\subsection{State Conditional vs Unconditional Data} Our work also highlights and issue that has been underemphasized in the relevant empirical literature: the importance of being clear about, conditional on measuring a specific kind of behavior, whether the measurement is taking place on a state-by-state basis, or an aggregate basis.  For example, even under aligned and neutral shifts we still do not obtain that state-by-state confidence, or aggregate randomness, is equivalent to the choice payoff measure.  Recall that because states are utility realizations, this means a stand needs to be taken on whether the utilities for options are fixed or whether they vary.  The answers to such questions likely depend on whether the data is within person, where it seems plausible a single utility realization happens, or between individuals, each of whom might have a different utility realization. But even in the latter case, it may not be true that there is positive weight ex-post on all utility realizations that the DMs consider plausible ex-ante (i.e. receive weight in their prior); or that the realized distribution over utilities matches the prior.

Moreover, our results imply that if the researcher and the DM have different conceptions of the state space, the researcher may mis-apply our sufficient conditions on the set of Blackwell experiments, because they may believe they are measuring state-by-state, when really they are measuring aggregates of the DM's subjective states.  Thus, aggregate measures may be more robust to analyze.

\subsection{Extensions}
Our approach focuses on complexity of a choice decision.  Much of the literature on the complexity also speaks to the complexity of a specific option.  One could imagine working within our farmework to define what it means for one option to be more/less difficult to understand than other (rather than having the comparison be at the choice set level).

Our framework also specifically focuses on environments where the DM is making a choice.  Many experiments involve subjects choosing a valuation for an option.  We suspect we can extend our approach to allow for valuation, although it is not necessarily more straightforward.

We have also assumed that the DM has access to an exogenous signal.  In contrast, there is a large literature thinking about how DMs endogenously choose the information structure, in response to incentives, as well as the complexity of the problems.  One can imagine extending our negative and positive results to environments where the information acquisition is endogenous.

Given our results it is natural to think about characterization of our model using both the traditional choice data as well as data on confidence (for work doing this with just choice data, see \cite{safonov2017random,safonov2022random} and the discussion in \cite{strzalecki2011axiomatic}).  Given the fact that we can fail to identify which experiment generates a higher payoff for the DM, clearly even with a characterization we will fail to point identify the experiments (see \cite{safonov2017random,safonov2022random} for related discussions).  That said, we could leverage data for set-identification.  Given that we know that randomness and confidence do not need to co-move, observing them co-vary empirically (which often happens) provides information about the structure of subjective Blackwell experiments.  Leveraging this within our framework to identify this structure could provide new information about the way in which individuals process choice problems.

\newpage
{\small
\bibliographystyle{aer}
\bibliography{reference.bib}

}

\section{Appendix: Summary of Orderings}\label{sec:sumorder}

\begin{table}[h!]
\centering
\renewcommand{\arraystretch}{1.8}
\begin{tabular}{lcccc}
\toprule
\textbf{Ordering} & \multicolumn{4}{c}{\textbf{Level of Aggregation}} \\
\cmidrule(lr){2-5}
& \rotatebox{90}{\parbox{3cm}{\centering \textit{Item by Item}\\[-2pt] \textit{and State by State}}}
& \rotatebox{90}{\parbox{3cm}{\centering \textit{State by State,}\\[-2pt] \textit{Averaged Across Items}}}
& \rotatebox{90}{\parbox{3cm}{\centering \textit{Item by Item,}\\[-2pt] \textit{Averaged Across States}}}
& \rotatebox{90}{\parbox{3cm}{\centering \textit{Averaged Across}\\[-2pt] \textit{Items and States}}} \\
\midrule
\textit{Randomness}
  & N/A
  & \begin{tabular}[c]{@{}c@{}}Randomness\\(Section 3.1)\end{tabular}
  & N/A
  & \begin{tabular}[c]{@{}c@{}}Expected Randomness\\(Section 3.1)\end{tabular} \\[6pt]
\hline\\[-8pt]
\textit{Confidence}
  & \begin{tabular}[c]{@{}c@{}}Confidence\\(Section 3.1)\end{tabular}
  &
  & \begin{tabular}[c]{@{}c@{}}Expected Confidence\\(Section 3.1)\end{tabular}
  & \begin{tabular}[c]{@{}c@{}}Overall Confidence\\(Section 5)\end{tabular} \\[6pt]
\hline\\[-8pt]
\textit{Payoffs}
  &
  & \begin{tabular}[c]{@{}c@{}}State Conditional\\Choice Payoffs\\(Section 11.2)\end{tabular}
  &
  & \begin{tabular}[c]{@{}c@{}}Choice Payoffs\\(Section 3.1)\end{tabular} \\
\bottomrule
\end{tabular}
\caption{Ordering and Level of Aggregation}
\label{table:orders_aggregation}
\end{table}
 
 Table \ref{table:orders_aggregation} summarizes the different orderings we use in the paper, broken down by level of aggregation. All the orderings listed are subject to negative results provided in the paper.  Randomness, because it is determined at the choice set level, cannot be done item by item.  Both definitions are provided in Section \ref{sec:def}, and randomness is used as part of a positive result in Proposition \ref{prop:suff}.  For confidence orderings, the basic definitions for confidence and expected confidence are provided in Section \ref{sec:def}, and the latter is used in Proposition \ref{prop:suff}.  Overall confidence is defined in Section \ref{sec:suffpayoff} and used in Proposition \ref{prop:psycho}.  Choice payoffs are defined in Section \ref{sec:def} and used in both Proposition \ref{prop:suff} (for general economic payoffs) and Proposition \ref{prop:psycho} (for psychophysical payoffs).  State conditional choice payoffs are analyzed in Section \ref{sec:statecondpayoffs}.

 The table suggests a notion of confidence not 
explicitly considered in the paper: one could define a 
a notion of confidence  which averages confidence across items within each state but does not aggregate across states.  This ordering sits 
strictly between confidence domination 
($\psi_{\sigma}(x|\omega) \geq \psi_{\sigma'}(x|\omega)$ 
for all $x$ and $\omega$) and expected confidence domination 
($\psi_{\sigma}(x) \geq \psi_{\sigma'}(x)$ for all $x$) 
in terms of strength.  The negative results of 
Proposition~\ref{prop:neg2} are robust to this 
intermediate order: since confidence domination implies 
dominance under $\bar{\psi}_{\sigma}(\omega)$, any 
counterexample that works against the weaker expected 
confidence ordering  still applies.  Similarly, overall confidence 
$\psi_{\sigma}$, which averages across both items and 
states, is the weakest of all these notions and is 
likewise insufficient to overturn the negative results.

The table also suggests item-level payoff 
orderings as alternatives to our orderings involving payoffs. One suggested alternative, which is payoffs item-by-item, state-by-state, collapses to  $u(x|\omega)$ and is independent of $\sigma$ altogether, so it generates no (non-trivial) ordering over experiments.

The other is the expected payoff conditional on choosing item $x$, 
averaging across states, which measures whether the DM tends to choose $x$ in states where $x$ is actually valuable.    This does not circumvent  the negative results: the same examples used for 
Proposition~\ref{prop:neg2} apply, since the decoupling 
between high-stakes and low-stakes states that drives 
the payoff reversals operates through the cardinal 
weighting of states, which neither randomness nor 
confidence controls.  Interestingly, in the psychophysical 
case where $u(x|\omega) \in \{0,1\}$, this item-specific payoff notion coincides with expected confidence.

\section{Appendix: Generalized Fechnerian Models}\label{sec:appfech}

We now show that the generalized Fechnerian approach (see \cite{he2024moderate,he2023random,shubatt2024tradeoffs}) generates data that is consistent with the ability to recover understanding from randomness.  In our restricted setting with only two alternatives, the generalized Fechnerian approach assumes that the choice probabilities depend on the (realized) utility gap between $x$ and $y$, as well as some parameter $\lambda$ which captures the difficulty of the choice situation.  Formally the probability of chooosing $x$ is $P(x) = f(\frac{u(x)- u(y)}{\lambda})$.  We assume that $f$ is 

\begin{enumerate}
    \item strictly increasing
%    \item continuous
%    \item differentiable 
    \item always in the interval $[0,1]$
    \item symmetric: $f(-s) = 1-f(s)$,
\end{enumerate}

Notice that this immediately implies that when $u(x)=u(y)$, $P(x)=\frac{1}{2}$.  These generalized Fechnerian models have no direct role for ``beliefs'', although \cite{he2023random} demonstrate that they can be rewritten in a way consistent with a Bayesian expected utility story under specific assumptions.  However, they immediately imply that the value of a choice problem is directly related to the degree of randomness.  

\begin{proposition}
    In any generalized Fechnerian random choice model $f$ which is i) strictly increasing, ii) always in [0,1] and iii) symmetric: $f(-s) = 1-f(s)$ it is the case that  $$|f(\frac{u(x)- u(y)}{\lambda}) -.5| \geq |f(\frac{u(x)- u(y)}{\lambda'}) -.5| $$ if and only if \begin{eqnarray*}
        && u(x) f(\frac{u(x)- u(y)}{\lambda}) + u(y) (1-f(\frac{u(x)- u(y)}{\lambda}))\\
        &\geq & u(x) f(\frac{u(x)- u(y)}{\lambda'}) + u(y) (1-f(\frac{u(x)- u(y)}{\lambda'})) 
    \end{eqnarray*} 
\end{proposition}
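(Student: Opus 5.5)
Throughout, fix a realized state and write $d = u(x)-u(y)$, $p = f(d/\lambda)$, $p' = f(d/\lambda')$. The plan is to rewrite both sides of the equivalence as functions of $p-\tfrac12$ and compare them directly.

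\textbf{Randomness side.} The left-hand inequality is $|p-\tfrac12| \geq |p'-\tfrac12|$.

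\textbf{Payoff side.} Since $u(x)p + u(y)(1-p) = u(y) + d\,p$, the payoff inequality reads $d\,p \geq d\,p'$. Equivalently,
\[
d\,(p-\tfrac12) \geq d\,(p'-\tfrac12).
\]

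\textbf{Case split on the sign of $d$.}

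\emph{Case $d=0$.} Symmetry gives $f(0)=\tfrac12$, so $p=p'=\tfrac12$. Both inequalities hold with equality, and the equivalence is trivial.

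\emph{Case $d>0$.} Since $\lambda,\lambda'>0$ are scale parameters, both $d/\lambda$ and $d/\lambda'$ are positive. Strict monotonicity and $f(0)=\tfrac12$ then give $p,p' > \tfrac12$. Hence $|p-\tfrac12| = p-\tfrac12$ and likewise for $p'$, so the randomness inequality is $p \ge p'$. Dividing the payoff inequality by $d>0$ also gives $p\ge p'$, so the two coincide.

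\emph{Case $d<0$.} Now $p,p'<\tfrac12$, so $|p-\tfrac12| = \tfrac12 - p$ and the randomness inequality becomes $p\le p'$. Dividing the payoff inequality by $d<0$ reverses it to $p \le p'$. Again the two coincide.

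Alternatively, symmetry $f(-s)=1-f(s)$ lets one swap the labels $x$ and $y$ and reduce this case to $d>0$.

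\textbf{Where care is needed.} The one delicate step is sign bookkeeping: the argument needs $p-\tfrac12$ to have the same sign as $d$ for both parameter values. This follows from positivity of $\lambda$ together with strict monotonicity of $f$ and $f(0)=\tfrac12$. I would state explicitly that $\lambda,\lambda'>0$ is assumed; it is implicit in the paper, which interprets $\lambda$ as a difficulty parameter.

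Strict monotonicity is used only to fix this sign. It is not needed to compare $p$ with $p'$, because both inequalities are stated directly in terms of $p$ and $p'$. The condition $f\in[0,1]$ is used only to ensure the expression is a valid choice probability.
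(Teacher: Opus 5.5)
Your proof is correct, but it takes a different route from the paper's. The paper fixes a state and does comparative statics in $\lambda$. The sign of $(u(x)-u(y))/\lambda$ does not depend on $\lambda$, so raising $\lambda$ pushes the argument of $f$ toward $0$. By strict monotonicity, the favored option's probability then falls, so choice becomes more random. Because the payoff $u(x)P(x)+u(y)P(y)$ is increasing in the probability of the better option, the payoff falls too. Both inequalities are therefore equivalent to $\lambda\le\lambda'$ whenever $u(x)\neq u(y)$.

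You eliminate $\lambda$ entirely. You write the payoff as $u(y)+d\,p$, use only that $p-\tfrac12$ and $p'-\tfrac12$ share the sign of $d$, and compare the two inequalities directly in $p$ and $p'$. This is more general. It shows that for any two choice probabilities lying weakly on the correct side of $\tfrac12$, ``less random'' and ``higher payoff'' coincide; this is the same state-level mechanism as in the proof of Proposition~\ref{prop:indict}. It also needs only weak monotonicity of $f$, not strict monotonicity as you say: when $d>0$, $p\ge\tfrac12$ already gives $|p-\tfrac12|=p-\tfrac12$.

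The paper's route uses strict monotonicity essentially, to turn each inequality into $\lambda\le\lambda'$. In exchange it shows that within a Fechnerian family the comparison goes the same way in every state, so state-by-state randomness and state-by-state payoffs are complete orders indexed by $\lambda$. You were right to state $\lambda,\lambda'>0$ explicitly. With negative or opposite-signed scale parameters the equivalence can fail, and the paper relies on positivity implicitly when it says the sign of $(u(x)-u(y))/\lambda$ ``stays the same.''
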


To see this, condition on any $u(x), u(y)$ pair.  Notice that which option is chosen more often is independent of $\lambda$, it only depends on $u(x)- u(y)$.\footnote{If we interpret the choice probabilities through the lens of our Bayesian expected utility environment, this implies that the Blackwell experiment is indicative.}  Moreover, notice that as $\lambda$ increases, the sign of $\frac{u(x)- u(y)}{\lambda}$ stays the same, but values are pushed closer to 0.  Thus, because $f$ is strictly increasing, choice probabilities must get ``more random'' as $\lambda$ increases.  

We can also consider the ``expected payoff'' from the decision problem given the choice probabilities.  This is clearly $u(x) P(x) + u(y) P(y)$.  For any given $u(x) > u(y)$, consider what happens when $\lambda$ increases.  We have just shown that $P(x)$ falls, and $P(y)$ increases; thus, the expected payoff falls; the analogous result is true if $u(y) > u(x)$.  Thus, as $\lambda$ increases, the expected payoff falls.

Thus, we obtain that randomness (state by state, where here a state is just a joint realization of the utility of $x$ and $y$) and the expected payoff from the choice problem (state by state) must co-move (in opposite directions).

In contrast, we now show that attentuation is very badly behaved in generalized Fechnerian models. The most standard definition of attenuation, as related to choice difficulty, is that it is a lower elasticity of behavior with respect to a change in the parameter.  In our setting, we can think of the behavior s the probability of choosing $x$, while the parameter is the utility gap between $x$ and $y$.  For the latter, we will fix the utility of $y$, and so focus on a change in the utility of $x$.  

 We begin by considering the $\lambda$-Luce experiments from Section \ref{sec:examples}.  We first note that fixing $\lambda$, the elasticity of choice with respect to $u(x)$ is non-monotone in $u(x)$.  Let $P(x)$ denote the probability of choosing $x$.  Then the elasticity of $P(x)$ with respect to $u(x)$ is $\epsilon_{P(x),u(x)} = \frac{1}{\lambda} u(x) (1-P(x))$.  The derivative with respect to $\lambda$ is $$\frac{-u(x)}{\lambda^2 (1+ e^{\frac{u(x)-u(y)}{\lambda}})^2}[(1+ e^{\frac{u(x)-u(y)}{\lambda}})-\frac{u(x)-u(y)}{\lambda}e^{\frac{u(x)-u(y)}{\lambda}}] = \frac{-u(x)(1-P(x))}{\lambda^2}[1-\frac{u(x)-u(y)}{\lambda} P(x)].$$  

Notably this does not always have the same sign --- it depends on whether $\frac{u(x)-u(y)}{\lambda}$ is larger or smaller than $\frac{1}{P(x)}$.  Thus, focusing on elasticity in this environment is likely not the best way forward.  

Instead, rather than focusing on elasticity, which also depends on the initial levels of choice probability, we focus only on the change in choice probability as $u(x)$ changes --- i.e. just the derivative.  Thus, in order to ask if choice difficulty (a shift in $\lambda$ alters attenuation, i.e. the derivative of the choice probability with respect to $u(x)$, we focus on the cross-partial of the choice probability with respect to $u(x)$ and $\lambda.$  This is $$\frac{-P(x)(1-P(x))}{\lambda^2}[1+\frac{u(x) - u(y)}{\lambda}(1-2P(x))].$$  
This again can have either sign, and in fact changes sign at exactly  $\kappa$ where $\kappa \tanh{\frac{\kappa}{2}} =1$. This means that increases (or decreases) in $\lambda$ do not uniformly make choice more or less attenuated (in contrast, recall that even in the standard $\lambda$-Luce model, we know that in each state, randomness must decrease).

The same intuition applies more broadly to the generalized Fechnerian framework where choosing $x$ $P(x) = f(\frac{u(x)- u(y)}{\lambda})$.  We assume $f$ is strictly increasing, continuous and differentiable, and always in the interval $[0,1]$.  Now, the cross partial is equal to $\frac{-1}{\lambda^2}\frac{d}{ds}[sf'(s)]$ where $s=\frac{u(x)-u(y)}{\lambda}$.  In fact, the sign of the cross partial will be the reverse of $f'(s)+sf''(s)$. Casual inspection reveals that many other sigmoid functions that we might use for $f$ also have the property that the cross partial cannot always always have the same sign.  Because of this, it seems that attenuation is not a particularly robust behavioral pattern even in the generalized Fechnerian environment.   

In fact, there is a deeper problem.  Consider additionally the requirement of symmetry: $f(-s) = 1-f(s)$.  Under this additional assumption we can never have any generalized Fechnerian random choice model which exhibits a negative cross partial with respect to $u(x)$ and $\lambda$.  

\begin{proposition}
    Consider a generalized Fechnerian random choice model and assume that $f$ is: i)strictly increasing in $\frac{u(x)-u(y)}{\lambda}$, ii) continuous and differentiable, iii) always in the interval [0,1], iv) symmetric: $f(-s)=1-f(s)$.  Then it cannot exhibit a negative cross partial with respect to $u(x)$ and $\lambda$ everywhere.  
\end{proposition}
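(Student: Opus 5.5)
The plan is to reduce the sign condition on the cross partial to a monotonicity condition on the single-variable function $g(s)=s f'(s)$, and then use the boundedness of $f$ to show that this monotonicity is impossible. We work with $P(x)=f(s)$, where $s=\frac{u(x)-u(y)}{\lambda}$, and hold $u(y)$ fixed, as in the preceding discussion. Like the computation just before the statement, we assume $f$ is twice differentiable, so that the cross partial exists.

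\emph{Step 1: rewrite the cross partial.} Differentiating gives $\partial P/\partial u(x)=f'(s)/\lambda$. Differentiating this in $\lambda$, using $\partial s/\partial\lambda=-s/\lambda$, gives
\[
\frac{\partial^2 P}{\partial \lambda\,\partial u(x)}=-\frac{1}{\lambda^2}\bigl(f'(s)+s f''(s)\bigr)=-\frac{1}{\lambda^2}\,\frac{d}{ds}\bigl[s f'(s)\bigr],
\]
which matches the expression recorded above. So suppose, for contradiction, that the cross partial is strictly negative at every $(u(x),\lambda)$. Since $s$ ranges over all of $\mathbb{R}$ as $u(x)$ varies, this means $g'(s)>0$ for all $s$. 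Hence $g$ is strictly increasing on $\mathbb{R}$.

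\emph{Step 2: use symmetry and monotonicity to pin down the sign of $g$.} Differentiating the identity $f(-s)=1-f(s)$ shows that $f'$ is even, so $g$ is odd and $g(0)=0$. Because $g$ is strictly increasing, $c:=g(1)>g(0)=0$. Oddness is not strictly needed here; $g(0)=0$ alone suffices. Monotonicity then gives $s f'(s)\ge c$ for every $s\ge 1$, that is, $f'(s)\ge c/s$ on $[1,\infty)$.

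\emph{Step 3: contradiction with boundedness.} Since $f$ takes values in $[0,1]$, for every $T>1$
\[
1\ \ge\ f(T)-f(1)=\int_1^T f'(s)\,ds\ \ge\ c\log T,
\]
and the right-hand side tends to $\infty$ as $T\to\infty$. This contradiction shows the cross partial cannot be negative everywhere, which proves the proposition. The same argument applied on $(-\infty,-1]$ gives the mirror-image bound, but one side is enough.

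The main obstacle is Step 2: we need $g$ to be strictly positive somewhere on the positive axis, and not merely nonnegative. If $f'(1)$ could be $0$, the lower bound $c/s$ would be vacuous. Strictness comes from the assumption that the cross partial is strictly negative everywhere, which makes $g$ strictly increasing; the fact that $f$ is strictly increasing, which gives $f'\ge 0$, is a useful consistency check. If only the weak inequality (nonpositive cross partial) were intended, I would use instead that $f$ strictly increasing forces $f'(s_0)>0$ at some $s_0>0$. Then $g(s)\ge s_0 f'(s_0)>0$ for all $s\ge s_0$, and Step 3 goes through unchanged.
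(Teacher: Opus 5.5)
Your proof is correct and follows essentially the same route as the paper: you reduce the sign condition to $g(s)=sf'(s)$ being strictly increasing with $g(0)=0$, bound $f'(s)\ge c/s$ on a half-line, and contradict the boundedness of $f$ through the divergent integral. Using the finite constant $c=g(1)>0$ instead of the paper's limit $L$ (where the bound $L/2$ makes no sense if $L=\infty$), and noting that the cross partial requires twice differentiability, are small improvements to the same argument.
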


\begin{proof}
    Recall that we want a function such that $g(s) = sf'(s)$ is strictly increasing everywhere.  By symmetry (and differentiation) $f'(-s)= f'(s)$ so $f'$  is even, implying $g$ is odd (and $g(0)=0$).  Strictly increasing functions that are odd implies that $\lim_{s\rightarrow \infty} g(s) = L \in (0,\infty]$.  This implies there exist a $S$ such that for all $s > S,$ $g(s) \geq \frac{L}{2} $ or $f'(s) \geq \frac{L}{2s}$.  But this implies $\int_{S}^{\infty} f'(s) ds = \infty $.  But recall that $f$ is bounded above by 1.  This is a contradiction.  
\end{proof}

%sigmoid function where the cross partial is always negative (e.g., $f(s) = \frac{1}{2} + C \int_0^s \frac{\hbox{tanh}(t)}{t} dt$, where $C = (2 \int_0^s \frac{\hbox{tanh}(t)}{t} dt)^{-1}$). However, we know of no papers that leverage these kinds of functional forms.

\section{Appendix: Normal Priors and Normally Distributed Signals}\label{sec:normal}

Assumptions about Gaussian priors and noise are often made as parametric assumptions when considering models of learning about the utility of options (e.g., \cite{enke2023cognitive}).  In the body of the paper we restricted ourself to considering finite states along with finite signals for technical ease, which rules out these examples,  Here we discuss this is more detail, and show that the standard approach in the literature satisfies our sufficient conditions.  This demonstrates why these papers find that the different measures should co-move.   

First we describe the standard setting modified in a way to keep it expositionally similar our approach. There are two outcomes $x$ and $y$.  Each outcome has an unknown utility, which is drawn i.i.d. from a normal distribution $N(\mu, z_0)$.  Denote the actual draws as $u(x)$ and $u(y)$.  The decision-maker does not observe the actual utility draws.  They observe two signals, both real numbers, which are independently drawn from two distributions.  One signal ($s_x$)is about the utility of $x$ and is drawn from a normal distribution $N(u(x), \alpha z_1)$; the other ($s_y$) is about the utility of $y$ and is drawn from a normal distribution $N(u(y), \alpha z_1)$.  Here $\alpha$ is a variable that controls the informativeness of the signal, as $\alpha$ goes to 0, the signals become fully informative.  We assume, as does much of the literature, that the signals have equal variance.  

Notice that the state space is the same as in our paper, it is $u(x), u(y)$.  The prior probability over states satisfies our symmetry condition, and is jointly normal.  Conditional on a given state, the distribution of the two signals $s_x, s_y$ is jointly normal.  

Given the structure of the problem, it is clear that the optimal action given two signals is to simply choose the signal that has a higher value; in other words the individual believes $x$ is the optimal choice if and only if $s_x \geq s_y$ (and $y$ is optimal if and only if $s_y \geq s_x$).  Moreover, our analysis is simplified because the set of states where $x$ and $y$ have the same utility is measure 0, as is the set of signals where $s_x = s_y$.  

\begin{proposition}\label{prop:norm}
Let $\alpha' \leq \alpha''$ and $u(x) \geq u(y)$. Then $Prob(s_x > s_y|\alpha') \geq Prob(s_x > s_y|\alpha'')$. 
\end{proposition}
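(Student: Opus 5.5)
Conditional on the state $(u(x),u(y))$, the two signals are independent normals, $s_x\sim N(u(x),\alpha z_1)$ and $s_y\sim N(u(y),\alpha z_1)$. I will therefore work directly with their difference. The plan is to show that the conditional law of $D:=s_x-s_y$ is normal, write $\mathrm{Prob}(s_x>s_y\mid\alpha)$ in closed form through the standard normal cdf, and then read off monotonicity in $\alpha$.

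\textbf{Step 1 (distribution of the difference).} By independence, and since the variances add,
\[
D\mid(u(x),u(y)),\alpha \;\sim\; N\bigl(u(x)-u(y),\,2\alpha z_1\bigr).
\]
It follows that
\[
\mathrm{Prob}(s_x>s_y\mid\alpha)=\mathrm{Prob}(D>0)=\Phi\!\left(\frac{u(x)-u(y)}{\sqrt{2\alpha z_1}}\right),
\]
where $\Phi$ denotes the standard normal cdf. The event $s_x=s_y$ has probability zero, so strict and weak inequalities give the same probability. This matches the earlier observation in this appendix that ties are measure zero.

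\textbf{Step 2 (monotonicity).} Put $d:=u(x)-u(y)\ge 0$ and define $g(\alpha)=d/\sqrt{2\alpha z_1}$ for $\alpha>0$.
\begin{itemize}
\item If $d>0$, then $g$ is strictly decreasing in $\alpha$. Since $\Phi$ is strictly increasing, $\alpha'\le\alpha''$ gives $\Phi(g(\alpha'))\ge\Phi(g(\alpha''))$, which is the claim.
\item If $d=0$, both probabilities equal $1/2$, and the inequality holds with equality.
\end{itemize}

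The only point needing care is the boundary case $\alpha'=0$, where signals are fully informative. If the statement is read to allow it, I would treat it as a limit. For $d>0$ the probability tends to $1$, which is at least any value taken at $\alpha''>0$. For $d=0$ I would interpret the tie by the uniform randomization convention, giving $1/2$.

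\textbf{Link to the earlier framework.} I would close by noting how this connects to Proposition \ref{prop:suffequiv} and Proposition \ref{prop:suff}. The result says that, state by state, lowering $\alpha$ weakly raises the probability of the signal region $\{s_x>s_y\}$, which is the region inducing the correct choice in states of $\Omega(x)$. The symmetric statement for $u(y)>u(x)$ follows by relabeling $x$ and $y$. This is exactly the continuous analogue of the condition in Proposition \ref{prop:suffequiv}.

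I expect no real obstacle. The one step to state carefully is that the decision rule "choose $x$ iff $s_x\ge s_y$" does not depend on $\alpha$. This is what lets the signal classification stay fixed as $\alpha$ varies, so that the comparison is purely about probability mass on a fixed region.
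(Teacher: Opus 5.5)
Your proof is correct and is the same argument as the paper's: pass to $s_x-s_y\sim N\bigl(u(x)-u(y),\,2\alpha z_1\bigr)$, write the probability as $\Phi\bigl((u(x)-u(y))/\sqrt{2\alpha z_1}\bigr)$, and use that $\Phi$ is increasing. You are slightly more careful than the paper, whose proof silently assumes $u(x)>u(y)$, while you also handle $u(x)=u(y)$; your $\alpha'=0$ aside is unnecessary, though, since the model takes $\alpha>0$, and read literally it would give $\mathrm{Prob}(s_x>s_y)=0$ when $u(x)=u(y)$, so the uniform-randomization reading changes the quantity being compared rather than proving the claim.
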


\begin{proof}
    Define $ \tilde{s} = s_x - s_y$. Given that the signals are drawn independently, we have
\[
\tilde{s} \sim N(u(x) - u(y), 2\alpha z_1).
\]
Then
\[
\Pr(s_x > s_y) = \Pr(\tilde{s} > 0) = \Phi \left(\frac{u(x) - u(y)}{\sqrt{2\alpha z_1}}\right),
\]
where $\Phi$ denotes the standard normal CDF. Since $u(x) > u(y)$, the argument inside $\Phi$ is strictly positive. As $\alpha$ decreases, $\sqrt{2\alpha z_1}$ decreases, so the fraction inside $\Phi$ increases. Because $\Phi$ is strictly increasing, $\Pr(s_x > s_y)$ increases. \end{proof}
Given Proposition \ref{prop:suffequiv} this implies that we can go from the signal structure with the higher $\alpha$ to the lower $\alpha$ through a series of aligned and neutral shifts.  The proof of Proposition \ref{prop:suff} naturally extends to the continuous environment.  

\section{Additional Orderings}\label{sec:addorder}

\subsection{Blackwell Informativeness}\label{sec:blackwell}
We can also consider a weaker order than our notion of choice payoff domination to capture understanding.  A standard notion of understanding induced by an experiment is the Blackwell order.  There are numerous discussions of how to characterize Blackwell more informative experiments (\cite{blackwell1951comparison,grant1998}), see those references for details.  We remind readers of the intuition: fixing a state space, consider a menu of actions $A$ such that each action maps a state to a payoff.  Agents, after receiving a signal from an experiment, take an action from the menu.  If for any menu of actions one experiment $\sigma$ generates a higher expected payoff than another $\sigma'$ then $\sigma$ Blackwell dominates $\sigma'$.  Recall that this is a partial order.  

\begin{proposition}
    The results of Proposition \ref{prop:neg2} hold if ``choice payoff dominates'' is replaced by ``Blackwell dominates''.  However, if $\sigma$ can be be obtained from $\sigma'$ by a series of single aligned shifts, $\sigma$ may not Blackwell dominate $\sigma'$.
\end{proposition}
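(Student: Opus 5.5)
The proof splits into the two sentences of the statement.

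\emph{First sentence.} Part by part, I would check that the counterexamples already given for Proposition \ref{prop:neg2} go through when choice payoff dominance is replaced by Blackwell dominance.

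For part (1) we need $\sigma'$ to Blackwell dominate $\sigma$ in Example \ref{ex:3states_2}. The text already notes that $\sigma'$ is Blackwell more informative than $\sigma$. To verify this, I would exhibit a garbling matrix $G$ with $\sigma = \sigma' G$ (rows indexed by signals of $\sigma'$, columns by signals of $\sigma$). Under $\sigma$, every state except $(0,1000)$ sends $s_1$. Under $\sigma'$, signal $s_2$ is sent exactly in state $(0,1000)$. So the garbling maps $s_1\mapsto s_1$, $s_3\mapsto s_1$ and $s_2\mapsto s_2$, and a row-by-row check confirms $\sigma=\sigma' G$.

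For part (2), in Example \ref{ex:secondcounter}, $\sigma$ (which reveals which option has utility $1$) trivially Blackwell dominates the uninformative experiment, since any experiment garbles to a constant.

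For part (3), in Example \ref{ex:binarcounter}, the claim that $(\theta,\gamma')$ is Blackwell more informative than $(\theta,\gamma)$ needs one short check. With two states and two signals, $(\theta,\gamma)$ is a garbling of $(\theta,\gamma')$ exactly when the point $(\theta,1-\gamma)$ lies in the convex hull of $(\theta,1-\gamma')$, $(1-\theta,\gamma')$, $(0,0)$ and $(1,1)$. I would find explicit garbling probabilities $G(s_1|s_1)=a$ and $G(s_1|s_2)=b$ by solving the linear system
\begin{equation*}
\theta a+(1-\theta)b=\theta,\qquad (1-\gamma')a+\gamma' b=1-\gamma,
\end{equation*}
and then verify that $a,b\in[0,1]$ under $1-\theta<\gamma<\gamma'<1/2$.

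Because Blackwell dominance implies choice payoff dominance, the randomness and confidence statements in these examples are unchanged. All three parts therefore carry over.

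\emph{Second sentence.} I would use the binary state, binary signal environment and find two experiments related by a single aligned shift (raising $\theta$ or $\gamma$) that are not Blackwell ranked. One obstacle is that if $\theta+\gamma>1$ is maintained and only one coordinate rises, the result tends to be Blackwell better, as in part (3). The example therefore has to exploit something else.

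My first attempt would use three or more signals, where an aligned shift moves mass in one state from a wrong-choice signal $s$ to a right-choice signal $\hat s$ whose posterior is only mildly in favor of $x$. This changes the posterior profile in a way that loses information. For instance, take states $\omega_1\in\Omega(x)$ and $\omega_2\in\Omega(y)$ and three signals: a strong $x$ signal, a weak $x$ signal, and a $y$ signal. Shifting $\omega_1$'s mass from the $y$ signal to the weak $x$ signal can lower the likelihood ratio of the weak signal toward $1$. I would then exhibit a decision problem with a different action set, for example a three-action problem with an action optimal only at an intermediate posterior, in which the original experiment earns strictly more. Blackwell's theorem then rules out dominance.

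A cleaner route is to exploit a violated premise. With two signals and two states, an aligned shift from $\sigma=(\theta,\gamma)$ with $\theta<1/2<\gamma$ (non-indicative, but still $\theta+\gamma>1$) moves mass in $\omega_1$ from $s_2$ to $s_1$, raising $\theta$ to $\theta'$. I would check whether $(\theta',\gamma)$ fails to contain $(\theta,1-\gamma)$ in its hull, which would break the dominance in that direction.

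The main obstacle is finding and verifying a concrete non-dominance pair. I would settle it numerically: compute the posterior distributions of the two experiments and show that neither is a mean-preserving spread of the other, by comparing the convex functions $\max(p-t,0)$ for a suitable threshold $t$.
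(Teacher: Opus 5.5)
Your treatment of the first sentence is correct and follows the paper, which simply notes that in Examples \ref{ex:3states_2}, \ref{ex:secondcounter} and \ref{ex:binarcounter} the payoff ranking is in fact a Blackwell ranking. Your garbling $s_1,s_3\mapsto s_1$, $s_2\mapsto s_2$ and your solution $a,b\in[0,1]$ both check out.

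The gap is in the second sentence. You never exhibit a pair that fails Blackwell dominance, and both constructions you sketch cannot produce one, because they use only two states. With two states, symmetry forces $\omega_1=(a,b)$ and $\omega_2=(b,a)$, so the DM chooses $x$ exactly at signals $t$ with $\sigma(t|\omega_1)>\sigma(t|\omega_2)$. An aligned shift at $\omega_1$ moves $\varepsilon$ from a signal with column $(C,D)$, $C<D$, to one with column $(A,B)$, $A>B$. This \emph{raises} the likelihood ratio of the $x$-signal (your claim that it pushes the weak $x$-signal toward $1$ is backwards) and lowers that of the $y$-signal, so both signals become more extreme. Explicitly, set $K=A-B(C-\varepsilon)/D>0$, $\lambda=K/(K+\varepsilon)$ and $\mu=B\varepsilon/(D(K+\varepsilon))\le 1$. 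Then $(A,B)=\lambda(A+\varepsilon,B)+\mu(C-\varepsilon,D)$, and the remainders reproduce $(C,D)$, so the old experiment is a garbling of the new one. Equivalently, apply the Neyman--Pearson argument of Proposition \ref{prop:aligned-shifts-aggregate-ROC}, since ROC dominance is Blackwell dominance for dichotomies. So no three-action problem and no threshold $t$ will ever favor the original experiment. Your ``non-indicative'' binary route fails for the same reason: indicativeness plays no role in Blackwell comparisons. Moreover, $(\theta,1-\gamma)$ always lies in the convex hull of $(0,0)$, $(1,1)$, $(\theta',1-\gamma)$, $(1-\theta',\gamma)$ when $\theta'>\theta$ and $\theta+\gamma>1$.

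The missing idea is the mechanism the paper's informal argument describes. You need at least two states with the same ordinal ranking, so that the shift can \emph{pool} them. Pooling is payoff-irrelevant for the binary choice but destroys information that another decision problem values. For instance:
\begin{itemize}
\item Take $\omega_1=(2,0)$, $\omega_2=(1,0)$, $\omega_3=(0,2)$, $\omega_4=(0,1)$ with a uniform prior.
\item Let $\sigma'$ send $\omega_1\mapsto s_A$, send $\omega_2$ to $s_B$ or $s_C$ with probability $\frac{1}{2}$ each, and send $\omega_3,\omega_4\mapsto s_C$.
\item Then $s_A,s_B\in S_{\sigma'}(x)$ and $s_C\in S_{\sigma'}(y)$, since at $s_C$ the $x$-advantage is proportional to $\frac{1}{8}-\frac{1}{2}-\frac{1}{4}<0$.
\item Let $\sigma$ be the aligned shift at $\omega_2$ moving mass $\frac{1}{2}$ from $s_C$ to $s_A$; all signal classifications are preserved.
\end{itemize}
Now consider the problem ``bet on $\omega_1$'': betting pays $1$ in $\omega_1$ and $-3$ otherwise, and abstaining pays $0$. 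Under $\sigma'$ the DM earns $\frac{1}{4}$, since $s_A$ reveals $\omega_1$. Under $\sigma$ the posterior on $\omega_1$ after $s_A$ is $\frac{2}{3}$, so betting is worth $-\frac{1}{3}$ and the value of the problem is $0$. Hence $\sigma$ does not Blackwell dominate $\sigma'$. Nor does $\sigma'$ dominate $\sigma$, since $W(\sigma)>W(\sigma')$ by Proposition \ref{prop:suff}. Your convexity test does work here, using $\max(\pi(\omega_1|s)-t,0)$ with $t\in(\frac{2}{3},1)$, but only because posteriors now live on a simplex with more than two states.
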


The first part of the proposition should not necessarily be that surprising.  In Examples \ref{ex:3states_2}, \ref{ex:secondcounter} and \ref{ex:binarcounter} which prove Proposition  \ref{prop:neg2} choice payoff domination is the same as Blackwell dominance, and so the results go through.  

Moreover, just because we can move from one experiment ($\sigma'$) to another ($\sigma$) through aligned and neutral shifts does not mean that $\sigma$ Blackwell dominates $\sigma'$.  Notice that by construction $\sigma'$ cannot Blackwell dominate $\sigma$.  However, choice payoff dominance only considers whether, in a given state, the choice aligns with which of the binary options has a higher utility.  Thus, a shift which makes it harder to distinguish between two states with the same ordinal ranking does not reduce payoffs.  Thus, aligned and neutral shifts can make it easier to distinguish between some states, but harder to distinguish between others.  The latter, for some other decision problem, may be the payoff relevant information.  This leads to an inability to rank via Blackwell dominance, because there could be a decision problem for which distinguishing these two states is important (and distinguishing between all others is unimportant).   In other words, for the decision problem currently faced by the DM, aligned and neutral shifts increase payoffs, but there could be other decision problems for which they reduce payoffs.

\subsection{State Conditional Choice Payoffs}\label{sec:statecondpayoffs}
An alternative weakening of choice payoff dominance would be state-conditional payoff dominance.   Recall the expected utility given $\sigma$, conditional on state $\omega$ is $$W(\sigma|\omega) = \sum_{s \in S} \sigma(s|\omega) \sum_x c_{\sigma}(x|s) u(x|\omega). $$

\begin{defn} Let $\sigma$ and $\sigma'$ be two experiments. We say $\sigma$ state-conditional choice payoff dominates $\sigma'$ if $W(\sigma|\omega) \geq W(\sigma'|\omega)$ for every $\omega.$
\end{defn}

Like Blackwell, this is a partial order.  However, it is a distinct partial order from Blackwell and neither implies nor is implied by Blackwell.

Notice however, state-conditional choice payoff dominance implies payoff dominance, but not the other way around.  This is because payoff dominance may occur when payoffs in one state go up, another go down, but the gains from the former outweigh the losses from the latter.  

Because of the fact that state conditional choice payoff dominance requires a comparison for each state, we obtain a slightly weaker negative result compared to Proposition \ref{prop:neg2}.

\begin{proposition} ~
\begin{enumerate}
\item If $\sigma$ is less random (or less expected random) than, and confidence dominates $\sigma'$, then $\sigma'$ can state-contingent choice payoff dominates~$\sigma$.  However, if $\sigma$ is less random than $\sigma'$ then expected payoffs must be higher under at least one state in $\sigma$ compared to $\sigma'$  
\item   If $\sigma$ is less random (or expected less random) than, and state conditional choice payoff dominates $\sigma'$ then $\sigma$ need not confidence dominate $\sigma'$.  However, if in addition, for all $\omega \in \Omega(x,y)$ and for all $s$, $\sigma(s|\omega) = \sigma'(s|\omega)$ then $\sigma$ can be obtained from $\sigma'$ by a series of single aligned shifts, and so $\sigma$ is more expected confident than $\sigma'$.
\item The results of Proposition \ref{prop:neg2} Part (3) hold if ``choice payoff dominates'' is replaced by ``state conditional choice payoff dominates''.
\end{enumerate}
\end{proposition}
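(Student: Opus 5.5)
The plan is to handle the three parts separately. Throughout I use the decomposition from the proof of Proposition~\ref{prop:indict}: for a non-tie state $\omega\in\Omega(x)$, $W(\sigma|\omega)=u(y|\omega)+\Delta(\omega)\rho_\sigma(x|\omega)$ with $\Delta(\omega)>0$, and $W(\sigma|\omega)$ does not depend on $\sigma$ in tie states. So state-conditional choice payoff dominance is equivalent to this: in every non-tie state, the probability of choosing the correct option is weakly higher.

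\emph{Part (3)} is the easiest, and I would do it first by reusing Example~\ref{ex:binarcounter}. There $\theta$ is unchanged and $\gamma$ rises to $\gamma'$. So the probability of a correct choice is the same in the first state and higher in the second. Hence $(\theta,\gamma')$ state-conditionally payoff dominates $(\theta,\gamma)$, while the confidence and randomness comparisons carry over verbatim.

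\emph{Part (2).} Example~\ref{ex:secondcounter} does not work here, since moving from the uninformative experiment lowers the payoff in $\omega_1$ from $1/2$ to $0$. I would instead use a non-indicative binary-state, binary-signal example. Take $\theta$ fixed and $\gamma<\gamma'<1/2$ with $\theta+\gamma>1$. Then $(\theta,\gamma')$ raises the correct-choice probability in state two, so it state-conditionally dominates. Compare the experiments through their likelihood ratios $\theta/(1-\gamma)$ and $\gamma/(1-\theta)$, as in Figure~\ref{fig:confidence}. I would check whether some choice of $\theta$ makes $(\theta,\gamma')$ less random but not more confident. If that fails, I would add a third, high-stakes state in the spirit of Example~\ref{ex:3states_2}; it can lower confidence for one option while leaving correct-choice probabilities monotone.

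For the positive claim in Part (2), assume the hypotheses of Proposition~\ref{prop:suffequiv}: no tie states to worry about (tie states are held fixed by hypothesis), a common support, and no tie signals. State-conditional dominance then gives $\sum_{s\in S_{\sigma}(k)}\sigma(s|\omega)\ge\sum_{s\in S_{\sigma'}(k)}\sigma'(s|\omega)$ for all $\omega\in\Omega(k)$. Proposition~\ref{prop:suffequiv} then yields a sequence of aligned and neutral shifts, and Proposition~\ref{prop:suff}(2) gives expected confidence dominance. The delicate point is that Proposition~\ref{prop:suffequiv} needs its no-tie and same-support hypotheses. I would state that the claim is understood under those hypotheses, and note that tie states are handled by the assumed equality $\sigma(\cdot|\omega)=\sigma'(\cdot|\omega)$.

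\emph{Part (1).} Example~\ref{ex:3states_2} fails for the counterexample, because state $(1000,0)$ gets worse under $\sigma'$. I would modify it so that $\sigma'$ only moves mass toward correct signals state by state. That gives state-conditional dominance by the decomposition, and I would tune the stakes so that under $\sigma'$ the chosen option is often a long shot, which lowers confidence and raises randomness. For the ``however'' claim I would argue by contradiction. Suppose $\sigma'$ is strictly better than $\sigma$ in every non-tie state, so $\rho_{\sigma'}(\text{correct}|\omega)>\rho_\sigma(\text{correct}|\omega)$ everywhere. Combined with less randomness, every state must then have $\rho_\sigma$ concentrated on the wrong option. Bayesian obedience rules this out: summing $\pi(\omega)\sigma(s|\omega)\Delta(\omega)$ over the signals in $S_\sigma(x)$ must be nonnegative, which is incompatible with the DM being decisively wrong in every state. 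This step is the main obstacle. The intended claim is presumably ``weakly higher in some state'' (identical experiments give equality), so I would state and prove that version carefully, checking the obedience inequality aggregated over both options.
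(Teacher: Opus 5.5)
Part (3) and the positive half of Part (2) follow the paper, and your obedience argument for the ``however'' clause works. Both counterexamples, however, are missing, and the mechanisms you propose for them cannot work. The key identity is as follows. For an experiment $\tau$ put $a_\tau=\sum_{\omega\in\Omega(x)}\pi(\omega)\rho_\tau(x|\omega)$, $b_\tau=\sum_{\omega\in\Omega(y)}\pi(\omega)\rho_\tau(x|\omega)$ and $t_\tau=\sum_{\omega\in\Omega(x,y)}\pi(\omega)\rho_\tau(x|\omega)$. Then $\psi_\tau(x)=(a_\tau+t_\tau)/(a_\tau+b_\tau+t_\tau)$, and symmetrically for $y$. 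Stakes therefore affect expected confidence only through the state-conditional choice probabilities. Without tie states, state-conditional dominance of $\sigma$ over $\sigma'$ (so $a_\sigma\ge a_{\sigma'}$ and $b_\sigma\le b_{\sigma'}$) already forces expected confidence dominance. A counterexample to the first half of Part (2) must therefore come from tie-state mass, or from the state-by-state (rather than expected) nature of confidence dominance.

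Your binary candidate fails in two ways. With $\theta$ fixed and $\gamma<\gamma'<1/2$, $(\theta,\gamma')$ is \emph{more} random, since $\max_a\rho(a|\omega_2)=1-\gamma'<1-\gamma$; this is Example~\ref{ex:binarcounter}. More generally, no binary choice can work: state-conditional dominance there means $\theta\ge\theta'$ and $\gamma\ge\gamma'$, which raises both likelihood ratios. Your fallback of a ``third state'' is closer than you think. With a symmetric prior a single added state is necessarily a tie state, but it works because of tie-state mass, not stakes. That is the paper's Example~\ref{ex:statecon}: add $(c,c)$, keep the non-tie rows identical, and shift mass in the tie state toward the $x$-signal. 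Payoffs are unchanged in every state and randomness falls, but $t_\sigma$ rises, so confidence in $x$ rises while confidence in $y$ falls. Without ties you would instead need, e.g., a neutral shift between two $x$-signals that lowers $\psi(x|\omega)$ in some state.

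For the first sentence of Part (1), you are right that Example~\ref{ex:3states_2} gives state-conditional dominance in neither direction, since $\sigma'$ is worse in $(1000,0)$ and $(1,0)$. The paper's argument only shows that $\sigma$ need not state-conditionally dominate $\sigma'$, because that would imply choice payoff dominance. Your repair cannot succeed, for the following reason.
\begin{enumerate}
\item With no tie states, $\sigma'$ weakly better in every state gives $\psi_{\sigma'}(x)\ge\psi_\sigma(x)$ and $\psi_{\sigma'}(y)\ge\psi_\sigma(y)$.
\item Since $\sigma$ confidence dominates, and hence expected confidence dominates, $\sigma'$, these must be equalities.
\item That forces $a_{\sigma'}=a_\sigma$ and $b_{\sigma'}=b_\sigma$, i.e.\ identical payoffs in every state.
\end{enumerate}
So either settle for the paper's weaker reading, or use a tie state. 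For example, take states $(1,0),(0,1),(c,c)$ with prior $0.1,0.1,0.8$. Under $\sigma$, $(1,0)$ sends $s_1$ with probability $0.1$ and all other mass goes to $s_2$. Under $\sigma'$, $(1,0)$ sends $s_1$ with probability $0.5$, $(0,1)$ sends $s_2$, and $(c,c)$ sends $s_1$. Then $\sigma$ is less random and expected less random. Confidence in $x$ is $1$ under both, and confidence in $y$ is $10/11$ versus $2/3$. Yet $\sigma'$ is strictly better in $(1,0)$ and equal elsewhere.

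Your two-sided obedience argument is a genuine alternative to Lemma~\ref{lem:indicative}. Adding the $x$- and $y$-obedience constraints gives $\sum_\omega\pi(\omega)[u(x|\omega)-u(y|\omega)](2\rho_\sigma(x|\omega)-1)\ge 0$. Hence $\sigma$ is weakly indicative in some non-tie state, without pairing symmetric states and so without the symmetric prior the paper uses. For the positive half of Part (2), the identity gives the claim directly whenever $t_\sigma=t_{\sigma'}$, bypassing Proposition~\ref{prop:suffequiv} and its same-support and no-tie-signal hypotheses. Note, though, that equal tie-state rows deliver $t_\sigma=t_{\sigma'}$ only if the signals used in tie states are classified the same way under both experiments. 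Your appeal to ``the assumed equality'', like the paper's argument, passes over this point.
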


To see why the first part of (1) is true, simply look at Example \ref{ex:3states_2}, where we observed the expected payoff (across all states) goes down.  Thus, payoff in at least one state must have gone down.  

The second part of (1) says that the converse failure also cannot occur: $\sigma'$ cannot state-contingent payoff dominate~$\sigma$ either. In fact, it is a stronger result that requires only less random (not confidence domination).  We prove this by beginning with a simple lemma of independent interest.

\begin{lemma}\label{lem:indicative}
Every Bayesian-optimal experiment $\sigma$ is indicative in at least one state.\footnote{As the proof will demonstrate this requires key features of our environment: symmetric priors and binary options.}
\end{lemma}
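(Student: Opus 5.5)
The plan is a single aggregation argument: sum the optimality conditions over all signals and read off a state where the experiment points the right way. For each state $\omega$, define $d(\omega)=u(x|\omega)-u(y|\omega)$. Let $q_x(\omega)=\sum_{s\in S_\sigma(x)}\sigma(s|\omega)$ and $q_y(\omega)=\sum_{s\in S_\sigma(y)}\sigma(s|\omega)$ be the probabilities, in state $\omega$, of a signal under which $x$ (resp.\ $y$) is strictly optimal. Indicativeness at a state $\omega\in\Omega(x)$ means $q_x(\omega)\geq q_y(\omega)$, and at $\omega\in\Omega(y)$ it means $q_y(\omega)\geq q_x(\omega)$. Both cases are captured by the single inequality $d(\omega)\,[q_x(\omega)-q_y(\omega)]\geq 0$ for non-tie $\omega$.

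\emph{Step 1: signal-level optimality.} As in the proof of Proposition~\ref{prop:suff}, Bayesian optimality of the choice after signal $s$ is equivalent to a sign condition on
\[
\Delta_\sigma(s)=\sum_\omega \pi(\omega)\sigma(s|\omega)d(\omega).
\]
Namely, $\Delta_\sigma(s)>0$ for $s\in S_\sigma(x)$, $\Delta_\sigma(s)<0$ for $s\in S_\sigma(y)$, and $\Delta_\sigma(s)=0$ for tie signals.

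\emph{Step 2: aggregate.} Sum $\Delta_\sigma(s)$ over $S_\sigma(x)$, sum $-\Delta_\sigma(s)$ over $S_\sigma(y)$, and add the two. Exchanging the order of summation gives
\[
\sum_{\omega}\pi(\omega)\,d(\omega)\,\bigl[q_x(\omega)-q_y(\omega)\bigr]\;\geq\;0 .
\]
The inequality is strict whenever some non-tie signal occurs with positive probability. Tie states contribute zero to this sum because $d(\omega)=0$ there.

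\emph{Step 3: extract a state.} There are two cases.
\begin{enumerate}
\item The sum is strictly positive. Then some state has $\pi(\omega)\,d(\omega)\,[q_x(\omega)-q_y(\omega)]>0$. Such a state is necessarily non-tie, and the inequality says exactly that $\sigma$ is indicative there.
\item The sum is zero. This happens, for instance, if all signals are tie signals, where $q_x=q_y=0$. If every term is zero, then every non-tie state with positive prior has $q_x(\omega)=q_y(\omega)$, so the weak inequality in the definition of indicativeness holds there. If some term is nonzero, then because the terms sum to zero, some other term must be strictly positive, and we are back in case 1.
\end{enumerate}
If $\Omega$ consists only of tie states, the claim holds vacuously or trivially. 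This is where the binary-option structure is used: with two options the optimal choice is governed by the sign of one scalar per signal, which makes the aggregation clean. Symmetry of the prior only guarantees that non-tie states of both types exist and is otherwise inessential.

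The main obstacle I anticipate is the bookkeeping of degenerate cases: zero-prior states, and the boundary case where the aggregate sum vanishes. In particular, I need to ensure that the state delivered by Step 3 has positive prior, so that it is genuinely a state in which $\sigma$ is indicative. The aggregation identity itself is a routine Fubini-type interchange of sums.
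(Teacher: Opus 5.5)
Your proof is correct, and it takes a genuinely different route from the paper's. The paper sums the optimality inequalities only over the $x$-inducing signals, obtaining $\sum_\omega \pi(\omega)\rho_\sigma(x|\omega)[u(x|\omega)-u(y|\omega)]\geq 0$. It then uses the symmetry of the prior to pair each $\omega\in\Omega(x)$ with its mirror state $\omega'\in\Omega(y)$. This yields a pair with $\rho_\sigma(x|\omega^*)\geq\rho_\sigma(x|\omega^{*\prime})$, and a case split on whether $\rho_\sigma(x|\omega^*)\geq\tfrac12$ shows that one member of the pair is indicative.

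You instead add the $y$-inequalities as well. Because $q_x(\omega)-q_y(\omega)=2\rho_\sigma(x|\omega)-1$, your aggregate is exactly $2\sum_\omega\pi(\omega)d(\omega)[\rho_\sigma(x|\omega)-\tfrac12]$, i.e.\ the centered version of the paper's sum. The paper gets that centering from $\sum_\omega\pi(\omega)d(\omega)=0$, which is what the symmetric pairing effectively supplies. You get it from the second family of optimality constraints.

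What your route buys is generality. It never uses symmetry of the prior, so it shows that the lemma holds for arbitrary priors with binary options. The same then holds for Corollary~\ref{cor:positive}, whose proof uses symmetry only through the lemma. The paper's footnotes claiming symmetric priors are required are therefore stronger than necessary.

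Your extraction step can also be shortened. A nonnegative finite sum cannot have all its terms over non-tie, positive-prior states strictly negative. Hence one such state satisfies $d(\omega)[q_x(\omega)-q_y(\omega)]\geq 0$ directly, and you need not distinguish whether the sum is zero or positive. The paper's route localizes the indicative state to a specific mirror pair, but nothing downstream uses that extra information.
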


\begin{proof}
For each signal $s$ where option $x$ is (weakly) chosen, Bayesian optimality requires
\[
\sum_{\omega \in \Omega} \pi(\omega)\, \sigma(s|\omega)\,[u(x|\omega) - u(y|\omega)] \geq 0.
\]
Summing over all $x$-inducing signals $s \in \hat{S}_\sigma(x)$:
\begin{equation}\label{eq:bayesian-sum}
\sum_{\omega \in \Omega} \pi(\omega)\, \rho_\sigma(x|\omega)\,[u(x|\omega) - u(y|\omega)] \geq 0.
\end{equation}

Now pair symmetric states. For each $\omega = (u_1, u_2) \in \Omega(x)$ with utility gap $\Delta(\omega) = u_1 - u_2 > 0$, there is a symmetric partner $\omega' = (u_2, u_1) \in \Omega(y)$ with $\pi(\omega') = \pi(\omega)$ and $u(x|\omega') - u(y|\omega') = -\Delta(\omega)$. Substituting into \eqref{eq:bayesian-sum}:
\begin{equation}\label{eq:paired}
\sum_{\text{pairs} (\omega, \omega')} \pi(\omega)\, \Delta(\omega) \left[\rho_\sigma(x|\omega) - \rho_\sigma(x|\omega')\right] \geq 0.
\end{equation}
Since $\pi(\omega) > 0$ and $\Delta(\omega) > 0$ for every pair, there must exist at least one pair $(\omega^*, \omega^{*\prime})$ with
\begin{equation}\label{eq:pair-ineq}
\rho_\sigma(x|\omega^*) \geq \rho_\sigma(x|\omega^{*\prime}).
\end{equation}
We now show $\sigma$ is indicative in at least one state of this pair.

\smallskip
\noindent\textbf{Case 1:} $\rho_\sigma(x|\omega^*) \geq \tfrac{1}{2}$. Then $x$ is the (weakly) most-chosen option in the $x$-optimal state~$\omega^*$, so $\sigma$ is indicative in~$\omega^*$.

\smallskip
\noindent\textbf{Case 2:} $\rho_\sigma(x|\omega^*) < \tfrac{1}{2}$. By \eqref{eq:pair-ineq}, $\rho_\sigma(x|\omega^{*\prime}) \leq \rho_\sigma(x|\omega^*) < \tfrac{1}{2}$. Hence $\rho_\sigma(y|\omega^{*\prime}) > \tfrac{1}{2}$, and since $\omega^{*\prime} \in \Omega(y)$, the correct option~$y$ is chosen more than half the time in~$\omega^{*\prime}$---so $\sigma$ is indicative in~$\omega^{*\prime}$.
\end{proof}

This leads to the following corollary\footnote{Again, this requires the symmetry and binary choice aspects of our environment.}

\begin{corollary}\label{cor:positive}
If $\sigma$ is less random than~$\sigma'$ state-by-state, then there exists a state $\omega^*$ with $W(\sigma|\omega^*) \geq W(\sigma'|\omega^*)$.
\end{corollary}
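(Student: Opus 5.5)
The plan is to combine Lemma~\ref{lem:indicative}, applied to $\sigma'$ rather than $\sigma$, with the state-conditional payoff decomposition from the proof of Proposition~\ref{prop:indict}. The key identity is that for a non-tie state $\omega\in\Omega(x)$ with $\Delta(\omega)=u(x|\omega)-u(y|\omega)>0$,
\[
W(\sigma|\omega)=u(y|\omega)+\Delta(\omega)\,\rho_\sigma(x|\omega).
\]
So comparing state-conditional payoffs in $\omega$ reduces to comparing the probability of choosing the correct option there. For tie states $W(\sigma|\omega)$ does not depend on the experiment. Hence if $\Omega(x,y)\neq\emptyset$ the corollary holds trivially, and I will assume there are no tie states (or simply restrict attention to the non-tie pairs used in the lemma).

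First step: since $\sigma'$ is a Bayesian-optimal experiment (choices are posterior-optimal), Lemma~\ref{lem:indicative} yields a state $\omega^*$ in which $\sigma'$ is indicative. Concretely, if $\omega^*\in\Omega(k)$, then $\rho_{\sigma'}(k|\omega^*)\geq\tfrac12$, so $\rho_{\sigma'}(k|\omega^*)=\max_a\rho_{\sigma'}(a|\omega^*)$.

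Second step: use state-by-state less randomness at $\omega^*$, which gives $\max_a\rho_\sigma(a|\omega^*)\geq\rho_{\sigma'}(k|\omega^*)\geq \tfrac12$. The delicate point is that $\sigma$ need not be indicative at $\omega^*$: its most likely choice could be the wrong option $-k$. So I split into cases.

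\emph{Case (a):} the maximizer under $\sigma$ at $\omega^*$ is $k$. Then $\rho_\sigma(k|\omega^*)\geq\rho_{\sigma'}(k|\omega^*)$, and the decomposition gives $W(\sigma|\omega^*)\geq W(\sigma'|\omega^*)$, as required.

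\emph{Case (b):} $\rho_\sigma(-k|\omega^*)>\tfrac12$. This is where the approach may fail, and I expect it to be the main obstacle, because it requires choosing the state more carefully. The remedy I would try first is to apply the pairing argument from the proof of Lemma~\ref{lem:indicative} to $\sigma$ and $\sigma'$ simultaneously. Bayesian optimality of $\sigma$ gives
\[
\sum_{\text{pairs}}\pi(\omega)\Delta(\omega)\bigl[\rho_\sigma(x|\omega)-\rho_\sigma(x|\omega')\bigr]\geq0 .
\]
From this I would look for a pair $(\omega,\omega')$ in which both experiments are indicative in the same state. Failing that, I would fall back on comparing $W(\sigma|\omega)+W(\sigma|\omega')$ with the corresponding sum for $\sigma'$ on some pair, and then conclude that one of the two states must satisfy the inequality. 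For that fallback, note that on a pair
\[
W(\sigma|\omega)+W(\sigma|\omega')=\text{const}+\Delta(\omega)\bigl[\rho_\sigma(x|\omega)-\rho_\sigma(x|\omega')\bigr].
\]
So it suffices to find a pair with $\rho_\sigma(x|\omega)-\rho_\sigma(x|\omega')\geq\rho_{\sigma'}(x|\omega)-\rho_{\sigma'}(x|\omega')$. Less randomness controls $|\rho(x|\cdot)-\tfrac12|$ in each state, and Bayesian optimality of $\sigma$ forces some pair to have a nonnegative gap. Case (b) therefore comes down to showing that the pair produced by the lemma for $\sigma$ has gap at least the $\sigma'$ gap, using $|\rho_\sigma-\tfrac12|\geq|\rho_{\sigma'}-\tfrac12|$ statewise.
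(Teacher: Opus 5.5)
\paragraph{The gap.} The problem is your first step. Lemma~\ref{lem:indicative} has to be applied to $\sigma$, not to $\sigma'$, and neither of your remedies for Case~(b) can succeed.

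Take the two-state environment of Table~\ref{table:Blackwell} with $\sigma=(\theta,\gamma)=(0.95,0.1)$ and $\sigma'=(0.4,0.9)$.
\begin{itemize}
\item Both experiments have $\theta+\gamma>1$, so both send $s_1\mapsto x$ and $s_2\mapsto y$.
\item $\sigma$ is less random than $\sigma'$: $0.95\geq 0.6$ in $\omega_1$ and $0.9\geq 0.9$ in $\omega_2$.
\item $\sigma'$ is indicative only in $\omega_2$. So the lemma applied to $\sigma'$ must return $\omega_2$. There $\rho_\sigma(y|\omega_2)=0.1<0.9=\rho_{\sigma'}(y|\omega_2)$, hence $W(\sigma|\omega_2)<W(\sigma'|\omega_2)$. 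Case~(b) is therefore forced, not an edge case.
\item $\sigma$ is indicative only in $\omega_1$, so no state has both experiments indicative.
\item The only pair has $\rho_\sigma(x|\omega_1)-\rho_\sigma(x|\omega_2)=0.05<0.3=\rho_{\sigma'}(x|\omega_1)-\rho_{\sigma'}(x|\omega_2)$.
\end{itemize}
So the statement you reduce Case~(b) to is false. This is structural, not an accident of the numbers. Your pair inequality is equivalent to $W(\sigma|\omega)+W(\sigma|\omega')\geq W(\sigma'|\omega)+W(\sigma'|\omega')$. Statewise control of $|\rho(x|\cdot)-\tfrac12|$ cannot deliver that when $\sigma$ points the wrong way in one state of the pair.

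\paragraph{The missing idea.} Choose the state using $\sigma$, the less random experiment. If $\sigma$ is indicative at $\omega^*\in\Omega(k)$, then
\[
\rho_\sigma(k|\omega^*)=\max_a\rho_\sigma(a|\omega^*)\geq\max_a\rho_{\sigma'}(a|\omega^*)\geq\rho_{\sigma'}(k|\omega^*).
\]
This holds whatever $\sigma'$ does at $\omega^*$, and your payoff decomposition then gives $W(\sigma|\omega^*)\geq W(\sigma'|\omega^*)$. This is exactly the paper's argument.

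You had it within reach. The pair the lemma produces for $\sigma$ contains a state where $\sigma$ is indicative; in the example this is $\omega_1$, where $0.95\geq 0.4$. That single state already works, so there is no need to compare sums over the pair. Your treatment of tie states is fine.
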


\begin{proof}
By Lemma~\ref{lem:indicative}, $\sigma$ is indicative in some state~$\omega^*$. Without loss, suppose $\omega^* \in \Omega(x)$, so $u(x|\omega^*) > u(y|\omega^*)$ and $\rho_\sigma(x|\omega^*) \geq \tfrac{1}{2}$.

Since $\sigma$ is less random:
\[
\rho_\sigma(x|\omega^*) = \max\{\rho_\sigma(x|\omega^*),\, 1 - \rho_\sigma(x|\omega^*)\} \geq \max\{\rho_{\sigma'}(x|\omega^*),\, 1 - \rho_{\sigma'}(x|\omega^*)\} \geq \rho_{\sigma'}(x|\omega^*).
\]
Since $x$ is optimal in $\omega^*$ and $u(x|\omega^*) - u(y|\omega^*) > 0$:
\begin{align*}
W(\sigma|\omega^*) &= u(y|\omega^*) + \rho_\sigma(x|\omega^*)\,[u(x|\omega^*) - u(y|\omega^*)] \\
&\geq u(y|\omega^*) + \rho_{\sigma'}(x|\omega^*)\,[u(x|\omega^*) - u(y|\omega^*)] = W(\sigma'|\omega^*). \qedhere
\end{align*}
\end{proof}

Less random alone, combined with Bayesian optimality and symmetric priors, guarantees that payoffs improve in at least one state. The key mechanism is that Bayesian optimality prevents an experiment from being anti-indicative (choosing the wrong option by majority) in \emph{every} state. Intuitively, signals that cause the DM to choose~$x$ must, on a prior-weighted average, carry more evidence that $x$ is optimal---so they cannot systematically point in the wrong direction in every state simultaneously. In whichever state the experiment is indicative, less random directly translates into higher payoffs.

Thus, from Part (1) we learn that if $\sigma$ is less random and confidence dominates~$\sigma'$, the state-contingent payoff comparison is indeterminate in both directions.

Part (2) has two parts.  To prove the first part we construct an example in which $\sigma$ is less random than (and expected less random than) $\sigma'$, $\sigma$ state-conditional choice payoff dominates $\sigma'$, yet $\sigma$ does not confidence dominate~$\sigma'$.

\begin{example}\label{ex:statecon}
There are two options $\{x,y\}$ and three states:
$\omega_1 = (u_H, u_L)$, $\omega_2 = (u_L, u_H)$, $\omega_3 = (c,c)$ with $u_H > u_L$.  The prior is symmetric with $\pi(\omega_1) = \pi(\omega_2) = p$ and $\pi(\omega_3) = 1-2p$ for some $p \in (0, 1/2)$.  Note that $\omega_3 \in \Omega(x,y)$: the DM is indifferent between $x$ and $y$ in this state.

There are two signals $S = \{s_1, s_2\}$.  Define experiments $\sigma$ and $\sigma'$ that agree on the non-tied states:
\[
\sigma(s_1|\omega_1) = \sigma'(s_1|\omega_1) = \theta, \qquad
\sigma(s_1|\omega_2) = \sigma'(s_1|\omega_2) = 1-\theta,
\]
with $\theta > \frac{1}{2}$.  In the tied state, they differ:
\[
\sigma'(s_1|\omega_3) = \tfrac{1}{2}, \qquad
\sigma(s_1|\omega_3) = \tfrac{1}{2} + \varepsilon,
\]
for some $\varepsilon \in (0,\frac{1}{2})$.

Under both experiments, the DM chooses $x$ after $s_1$ and $y$ after $s_2$.  To verify: the expected utility difference $\mathbb{E}[u(x) - u(y) \mid s_1]$ is proportional to $\pi(\omega_1)\sigma(s_1|\omega_1)(u_H - u_L) - \pi(\omega_2)\sigma(s_1|\omega_2)(u_H - u_L) = p(u_H - u_L)[\theta - (1-\theta)] > 0$ (the tied state contributes nothing since $u(x|\omega_3) = u(y|\omega_3)$), and symmetrically for $s_2$.

In $\omega_1$ and $\omega_2$ the experiments are identical.  In $\omega_3$, $\rho_\sigma(x|\omega_3) = \frac{1}{2} + \varepsilon > \frac{1}{2} = \rho_{\sigma'}(x|\omega_3)$, so $\max_x \rho_\sigma(x|\omega_3) > \max_x \rho_{\sigma'}(x|\omega_3)$.  Thus $\sigma$ is less random than $\sigma'$ state-by-state.  Moreover,
\[
\max_x \rho_\sigma(x) = p\theta + p(1-\theta) + (1-2p)\bigl(\tfrac{1}{2}+\varepsilon\bigr) = p + (1-2p)\bigl(\tfrac{1}{2}+\varepsilon\bigr),
\]
which strictly exceeds $p + (1-2p)\frac{1}{2} = \max_x \rho_{\sigma'}(x)$, so $\sigma$ is also expected less random.

In $\omega_1$ and $\omega_2$ the experiments are identical, so $W(\sigma|\omega_i) = W(\sigma'|\omega_i)$ for $i = 1,2$.  In $\omega_3$, since $u(x|\omega_3) = u(y|\omega_3) = c$, the expected payoff equals $c$ regardless of the signal structure:
\[
W(\sigma|\omega_3) = \sigma(s_1|\omega_3)\, c + \sigma(s_2|\omega_3)\, c = c = W(\sigma'|\omega_3).
\]
Thus $\sigma$ state-conditional choice payoff dominates $\sigma'$ (with equality in every state).

Since each signal uniquely determines the chosen option, the confidence of item $x$ conditional on any state $\omega$ reduces to $\psi_\sigma(x|\omega) = \pi_\sigma(\hat{\Omega}(x) \mid s_1)$, where $\hat{\Omega}(x) = \{\omega_1, \omega_3\}$.  Similarly, $\psi_\sigma(y|\omega) = \pi_\sigma(\hat{\Omega}(y) \mid s_2)$, where $\hat{\Omega}(y) = \{\omega_2, \omega_3\}$.

For option $x$:
\[
\pi_\sigma(\hat{\Omega}(x)|s_1) = \frac{p\theta + (1-2p)(\frac{1}{2}+\varepsilon)}{p\theta + p(1-\theta) + (1-2p)(\frac{1}{2}+\varepsilon)}, \qquad
\pi_{\sigma'}(\hat{\Omega}(x)|s_1) = \frac{p\theta + (1-2p)\frac{1}{2}}{p\theta + p(1-\theta) + (1-2p)\frac{1}{2}}.
\]
The shift from $\sigma'$ to $\sigma$ adds $d = (1-2p)\varepsilon > 0$ to both the numerator and denominator.  Writing the $\sigma'$ ratio as $\frac{A+C}{A+B+C}$ where $A = p\theta$, $B = p(1-\theta)$, $C = (1-2p)/2$, we have
\[
\frac{A+C+d}{A+B+C+d} > \frac{A+C}{A+B+C}
\]
since cross-multiplying yields $(A+C+d)(A+B+C) - (A+C)(A+B+C+d) = dB > 0$.  So confidence in $x$ \emph{rises} under $\sigma$.

For option $y$:
\[
\pi_\sigma(\hat{\Omega}(y)|s_2) = \frac{p\theta + (1-2p)(\frac{1}{2}-\varepsilon)}{p(1-\theta) + p\theta + (1-2p)(\frac{1}{2}-\varepsilon)}, \qquad
\pi_{\sigma'}(\hat{\Omega}(y)|s_2) = \frac{p\theta + (1-2p)\frac{1}{2}}{p(1-\theta) + p\theta + (1-2p)\frac{1}{2}}.
\]
Now the shift from $\sigma'$ to $\sigma$ \emph{subtracts} $d = (1-2p)\varepsilon > 0$ from both the numerator and denominator.  By the symmetric argument, this \emph{decreases} the ratio: confidence in $y$ \emph{falls} under~$\sigma$.

Since $\psi_\sigma(x|\omega) > \psi_{\sigma'}(x|\omega)$ but $\psi_\sigma(y|\omega) < \psi_{\sigma'}(y|\omega)$ for all $\omega$, neither experiment confidence dominates the other. $\Box$
\end{example}

Intuitively, because the shift occurs only in the tied state $\omega_3$, it cannot affect payoffs in any state.  But the asymmetric reallocation of signal probability in $\omega_3$ increases the posterior that $x$ is optimal given $s_1$ while decreasing the posterior that $y$ is optimal given $s_2$.  The tied state acts as ``neutral mass'' that can be shifted between signals to affect confidence without affecting payoffs or worsening randomness.

We now prove the second sentence of Part~(2).  Since $\sigma(s|\omega) = \sigma'(s|\omega)$ for all $\omega \in \Omega(x,y)$ and all $s$, the two experiments differ only in non-tied states.

Consider any non-tied state $\omega \in \Omega(k)$ for $k \in \{x,y\}$, so that option $k$ is strictly optimal: $u(k|\omega) > u(-k|\omega)$.  State-conditional choice payoff dominance gives $W(\sigma|\omega) \geq W(\sigma'|\omega)$.  Since
\[
W(\sigma|\omega) = u(-k|\omega) + \rho_\sigma(k|\omega)\bigl[u(k|\omega) - u(-k|\omega)\bigr],
\]
and $u(k|\omega) - u(-k|\omega) > 0$, this is equivalent to 
\[
\rho_\sigma(k|\omega) \geq \rho_{\sigma'}(k|\omega).
\]
That is, the correct option is chosen with weakly higher probability under $\sigma$ in every non-tied state.

Now observe that $\rho_\sigma(k|\omega)$ can be decomposed as
\[
\rho_\sigma(k|\omega) = \sum_{s \in S_\sigma(k)} \sigma(s|\omega) + \frac{1}{2}\sum_{s \in S_\sigma(\{x,y\})} \sigma(s|\omega),
\]
and similarly for $\sigma'$.  The condition $\rho_\sigma(k|\omega) \geq \rho_{\sigma'}(k|\omega)$ means that, in state $\omega$, the effective weight on correct-choice signals is weakly higher under $\sigma$.  By the same constructive argument as in Proposition \ref{prop:suffequiv}, the change from $\sigma'$ to $\sigma$ in each non-tied state $\omega$ can be decomposed into: (i)~aligned shifts, which move probability weight from signals inducing the wrong choice toward signals inducing the correct choice and (ii)~neutral shifts, which rearrange weight among signals inducing the same choice. Since no changes occur in tied states, $\sigma$ can be obtained from $\sigma'$ by a sequence of aligned and neutral shifts. Proposition \ref{prop:suff} then implies that $\sigma$ expected confidence dominates $\sigma'$.

Part (3) of the Proposition follows immediately from Example \ref{ex:binarcounter}.  

%Thus, in some sense, observing confidence and randomness allows for ``more'' recovery of state conditional choice payoff dominance than simply choice payoff dominance.

\section{Cross Choice Set Comparisons}\label{sec:crosschoice}

Our analysis so far has fixed the binary choice problem---the options $x$ and $y$, the state space $\Omega$, and the utility function $u$---and varied only the experiment $\sigma$.  A natural question is whether (and how) our results extend to settings where the \emph{choice problem itself} varies alongside the experiment.  This is relevant in many applications; there are many environments where one cannot change the difficulty of the choice problem without changing the options themselves.  It is clear that our negative results still hold when we extend to comparisons across choice sets (since they held when both choice sets were the same).  The primary question is whether there is a natural way to extend the sufficient conditions.

In this section we develop a framework for such cross-problem comparisons.  A \textbf{binary choice problem} is a tuple $\mathcal{P} = (\Omega, u, \pi, \sigma)$ where $\Omega$ is the state space, $u: \{x,y\} \times \Omega \to \mathbb{R}$ is the utility function, $\pi$ is a symmetric prior over $\Omega$, and $\sigma$ is an experiment with signal space $S$.  In order to simplify our analysis we will assume that tie states have measure zero:
\[
\pi\bigl(\{\omega \in \Omega : u(x|\omega) = u(y|\omega)\}\bigr) = 0,
\]
and that there are no tie signals: $S_\sigma(x,y) = \emptyset$.

Consider two binary choice problems $\mathcal{P}^1 = (\Omega^1, u^1, \pi^1, \sigma^1)$ and $\mathcal{P}^2 = (\Omega^2, u^2, \pi^2, \sigma^2)$.  Each has its own state space, utility function, prior, and experiment.  The DM in problem $i$ observes a signal from $\sigma^i$, updates via Bayes' rule using prior $\pi^i$, and chooses the option with higher expected utility.  Our measures have been defined within a problem; we want to now understand whether we can compare them across problems.

We focus on extending our sufficient conditions on the set of experiments.  Under psychophysical payoffs, because states always only generate payoffs of 1 or 0, the extension is much more straightforward.  

Two issues arise.  First, there are the definitions of the orderings themselves.  It is not obvious how definitions stated in terms of objects of the model (e.g., state contingent measurements) extend.  Some of our measurements, like expected randomness, expected confidence, and psychophysical payoffs are dimensionless and can be compared across problems.  However, ``economic understanding'' $\Wecon(\sigma^i)$ depends on the utility function $u^i$, which differs across problems.  Thus, two problems measure understanding in different ``units.''\footnote{One could normalize by, say, the full-information payoff minus the no-information payoff, yielding a ``fraction of value of full information'' in $[0,1]$.  But even this normalized ordering depends on the specific utility function, not just ``how much'' information is provided.}

Second, our definition of aligned shifts relies on moving probability mass around for a fixed problem, specifically for a fixed state.  With different state spaces, aligned shifts have no bite.  We need a different way of comparing experiments across choice sets.

\subsection{Aligned Dominance}

The approach we take is to replace aligned shifts with a state-level coupling across the problems that captures the same intuition: at each matched pair of states across the two problems, the probability of making the correct choice is (weakly) higher in one problem than in the other.

\begin{defn}\label{def:aligned-dominance}
Problem $\mathcal{P}^2$ \textbf{aligned-dominates} problem $\mathcal{P}^1$ if there exists a coupling $\mu$ on $\Omega^1 \times \Omega^2$---a joint probability measure with marginals $\pi^1$ and $\pi^2$---satisfying:
\begin{enumerate}
\item \textbf{Ordinal agreement.} The coupling is concentrated on states that agree on which option is correct:
\[
\mu\Bigl(\bigl(\widehat{\Omega}^1(x) \times \widehat{\Omega}^2(x)\bigr) \cup \bigl(\widehat{\Omega}^1(y) \times \widehat{\Omega}^2(y)\bigr)\Bigr) = 1.
\]
\item \textbf{Correct-choice improvement.} For $\mu$-a.e.\ $(\omega^1, \omega^2) \in \widehat{\Omega}^1(x) \times \widehat{\Omega}^2(x)$:
\[
\sum_{s^2 \in S_{\sigma^2}(x)} \sigma^2(s^2|\omega^2) \geq \sum_{s^1 \in S_{\sigma^1}(x)} \sigma^1(s^1|\omega^1),
\]
and the symmetric condition for $\mu$-a.e.\ $(\omega^1, \omega^2) \in \widehat{\Omega}^1(y) \times \widehat{\Omega}^2(y)$.
\end{enumerate}
\end{defn}

The coupling $\mu$ matches states across the two problems.  Ordinal agreement ensures that matched states agree on which option is better; the comparison is between an $x$-correct state in one problem and an $x$-correct state in the other.  Correct-choice improvement requires that, at each matched pair, the DM in problem 2 is at least as likely to choose correctly as the DM in problem 1.

The coupling is not unique.  The definition asks for the \emph{existence} of some coupling satisfying both conditions.  The ordinal agreement condition has real content: since both priors are symmetric, $\pi^i(\widehat{\Omega}^i(x)) = \pi^i(\widehat{\Omega}^i(y)) = 1/2$ for $i = 1,2$.  The coupling must respect these marginals while pairing $x$-correct states in one problem with $x$-correct states in the other (and vice versa).  This is always feasible (any product of the conditional measures works), but then imposing correct-choice improvement in addition imposes substantive restrictions.

\subsection{Coupled Less Random}

We also introduce a cross-problem analog of the state-by-state less-random notion from Definition~\ref{wchoice}.  Within a single problem, less random requires that at every state the DM is at least as decisive in one experiment as in the other.  Across problems, with different state spaces, the natural extension again leverages coupling. We match states across the two problems and require that the DM is at least as less random at each matched pair.

\begin{defn}\label{def:coupled-less-random}
Problem $\mathcal{P}^2$ is \textbf{coupled less random} than problem $\mathcal{P}^1$ if there exists a coupling $\mu$ on $\Omega^1 \times \Omega^2$ with marginals $\pi^1$ and $\pi^2$ satisfying ordinal agreement and, for $\mu$-a.e.\ $(\omega^1, \omega^2)$:
\[
\max_a \rho^2(a|\omega^2) \geq \max_a \rho^1(a|\omega^1).
\]
\end{defn}

As with aligned dominance, coupled less random is defined via the existence of a coupling; the coupling may or may not be the same as the one witnessing aligned dominance.

\subsection{Cross-Problem Comovement}

Before stating our cross-problem result, recall that for any single problem with a symmetric prior and no tie signals, $\psi = \Wpsych(\sigma)$: overall confidence \emph{equals} the psychophysical payoff.  This identity depends only on the internal structure of each problem and requires no relationship between problems.  Thus, for any two binary choice problems $\mathcal{P}^1$ and $\mathcal{P}^2$, each with a symmetric prior and no tie signals,
\[
\Wpsych(\sigma^2) \geq \Wpsych(\sigma^1) \quad \Longleftrightarrow \quad \psi^2 \geq \psi^1.
\]
The two problems may have different state spaces, utility functions, priors, and experiments, and no relationship between them is needed.

We now show that aligned dominance guarantees psychophysical payoffs (i.e., confidence) are higher in one problem than in another, and, under indicativeness, that randomness (in the coupled sense) comoves.

\begin{proposition}[Cross-Problem Comovement]\label{prop:cross-comovement}
Suppose $\mathcal{P}^2$ aligned-dominates $\mathcal{P}^1$.  Then:
\begin{enumerate}
\item $\mathcal{P}^2$ has (weakly) higher psychophysical payoffs (equivalently, higher overall confidence): $\Wpsych(\sigma^2) \geq \Wpsych(\sigma^1)$.
\item If $\sigma^1$ and $\sigma^2$ are both indicative, $\mathcal{P}^2$ is coupled less random than $\mathcal{P}^1$.
\end{enumerate}
\end{proposition}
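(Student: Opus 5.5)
Both parts follow by writing each problem's quantity as an integral over the coupling $\mu$ and comparing integrands pointwise. Because tie states are null and $S_{\sigma^i}(x,y)=\emptyset$, every signal induces a deterministic choice. The state-conditional probability of a correct choice in problem $i$ is then
\[
q^i(\omega^i)=\sum_{s\in S_{\sigma^i}(k)}\sigma^i(s|\omega^i)=\rho^i(k|\omega^i) \quad\text{for } \omega^i\in\widehat{\Omega}^i(k),\; k\in\{x,y\}.
\]
Consequently $\Wpsych(\sigma^i)=\sum_{\omega^i}\pi^i(\omega^i)\,q^i(\omega^i)$.

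For part (1), I will use that $\mu$ has marginal $\pi^i$ on $\Omega^i$ to rewrite each payoff as an integral against the same coupling:
\[
\Wpsych(\sigma^i)=\int q^i(\omega^i)\,d\mu(\omega^1,\omega^2).
\]
By ordinal agreement, $\mu$-almost every pair $(\omega^1,\omega^2)$ lies in $\widehat{\Omega}^1(k)\times\widehat{\Omega}^2(k)$ for a common $k$. Correct-choice improvement then says exactly that $q^2(\omega^2)\ge q^1(\omega^1)$ $\mu$-a.e. Integrating gives $\Wpsych(\sigma^2)\ge\Wpsych(\sigma^1)$. Equivalence with overall confidence follows from the identity $\psi=\Wpsych$, established via the Lemma and Proposition~\ref{prop:psycho} and recalled just before the statement.

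For part (2), I will reuse the same coupling $\mu$ that witnesses aligned dominance, so ordinal agreement holds automatically. Fix a pair in $\widehat{\Omega}^1(k)\times\widehat{\Omega}^2(k)$ outside the $\mu$-null exceptional set. In each problem the non-tie condition gives $\rho^i(k|\omega^i)+\rho^i(-k|\omega^i)=1$. Indicativeness of $\sigma^i$ gives $\rho^i(k|\omega^i)\ge 1/2$, hence
\[
\max_a\rho^i(a|\omega^i)=\rho^i(k|\omega^i)=q^i(\omega^i).
\]
Correct-choice improvement then yields $\max_a\rho^2(a|\omega^2)\ge\max_a\rho^1(a|\omega^1)$ $\mu$-a.e., which is the definition of coupled less random. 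This mirrors the argument in Part 3 of Proposition~\ref{prop:suff} and in Proposition~\ref{prop:indict}, with the coupling replacing the fixed state.

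The main obstacle is measure-theoretic bookkeeping rather than any inequality. Indicativeness is stated for states in $\Omega(k)$, i.e.\ with strict preference, whereas the coupling is phrased with $\widehat{\Omega}$, which includes ties. Tie states have $\pi^i$-measure zero, so their $\mu$-mass is zero by the marginal condition and they can be discarded. I also need the finite-state sums and the coupling integrals to match via the marginals; I will state this explicitly. Tie signals would break the identity $\max_a\rho=q$, which is why the no-tie-signal assumption must be invoked at that step.
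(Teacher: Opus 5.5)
Your proof is correct and follows essentially the same route as the paper. You write each $\Wpsych(\sigma^i)$ as an integral of the state-conditional correct-choice probability against the common coupling $\mu$ and compare integrands pointwise using ordinal agreement and correct-choice improvement. You then use indicativeness, together with the absence of tie signals, to identify $\max_a\rho^i(a|\omega^i)$ with that same probability, so the same $\mu$ witnesses coupled less random. The paper leaves implicit the bookkeeping you make explicit: that tie states are $\mu$-null, and the mismatch between $\Omega(k)$ in the indicativeness definition and $\widehat{\Omega}(k)$ in the coupling.
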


\begin{proof}
Write $P^i(\mathrm{correct}|\omega)$ for the probability of choosing correctly in problem $i$ at state $\omega$:
\[
P^i(\mathrm{correct}|\omega) = \begin{cases} \sum_{s \in S_{\sigma^i}(x)} \sigma^i(s|\omega) & \text{if } \omega \in \widehat{\Omega}^i(x),\\ \sum_{s \in S_{\sigma^i}(y)} \sigma^i(s|\omega) & \text{if } \omega \in \widehat{\Omega}^i(y). \end{cases}
\]
Let $\mu$ be the coupling under which  aligned dominance holds.

\medskip\noindent\emph{Part (1).}  With no ties, the psychophysical payoff in problem $i$ is
\[
\Wpsych(\sigma^i) = \sum_{\omega \in \Omega^i} \pi^i(\omega)\, P^i(\mathrm{correct}|\omega) = \int_{\Omega^1 \times \Omega^2} P^i(\mathrm{correct}|\omega^i)\, d\mu(\omega^1, \omega^2),
\]
where the second equality uses that $\mu$ has marginals $\pi^1$ and $\pi^2$.  By ordinal agreement, $\mu$ is concentrated on pairs $(\omega^1,\omega^2)$ in
$(\widehat{\Omega}^1(x)\times\widehat{\Omega}^2(x))\cup(\widehat{\Omega}^1(y)\times\widehat{\Omega}^2(y))$,
so the correct option is the same in both problems at every coupled pair. By correct-choice improvement, $P^2(\mathrm{correct}|\omega^2) \geq P^1(\mathrm{correct}|\omega^1)$ pointwise $\mu$-a.e., so $\Wpsych(\sigma^2) \geq \Wpsych(\sigma^1)$.

\medskip\noindent\emph{Part (2).}  Under indicativeness, the most-chosen option at each state is the correct one, so $\max_a \rho^i(a|\omega) = P^i(\mathrm{correct}|\omega)$.  Correct-choice improvement under $\mu$ therefore implies $\max_a \rho^2(a|\omega^2) \geq \max_a \rho^1(a|\omega^1)$ for $\mu$-a.e.\ $(\omega^1, \omega^2)$.  The same coupling $\mu$ (which already satisfies ordinal agreement) thus also induces coupled less random.
\end{proof}

Proposition~\ref{prop:cross-comovement} is silent on economic payoffs.  This is deliberate.  As discussed, $\Wecon(\sigma^1)$ and $\Wecon(\sigma^2)$ are measured in different utility units and are not directly comparable.  One route would be to impose additional structure on the utility functions through the coupling---for instance, requiring that utility gaps $|u^2(x|\omega^2) - u^2(y|\omega^2)| \geq |u^1(x|\omega^1) - u^1(y|\omega^1)|$ for $\mu$-a.e.\ $(\omega^1, \omega^2)$.  We do not pursue this route.  Instead, in the next subsection we look for a different way to extend the notion of economic understanding---one that sidesteps the units problem by working at the level of the signal structure rather than realized payoffs.

\subsection{Informational Aligned Dominance and ROC Dominance}

An alternative route to comparing ``payoffs'' across problems is to ask how informative each problem's signals are about the ordinal question of which option is correct (in other words we are moving away from cardinal measurement of payoffs and towards an ordinal one).  If problem 2's signals are uniformly more informative than problem 1's on this binary question, then for any likelihood ratio threshold decision rule problem 2 delivers weakly better outcomes.  This sidesteps the units problem: we do not compare realized utilities, we compare the quality of the information itself.

To formalize this, observe that each (problem, experiment) pair $\mathcal{P}^i$ induces a binary hypothesis testing problem: $H_x^i$ (``$x$ is correct'') vs.\ $H_y^i$ (``$y$ is correct'').  The aggregate signal densities under each hypothesis are
\[
f_x^i(s) = \sum_{\omega \in \widehat{\Omega}^i(x)} 2\pi^i(\omega)\, \sigma^i(s|\omega), \qquad f_y^i(s) = \sum_{\omega \in \widehat{\Omega}^i(y)} 2\pi^i(\omega)\, \sigma^i(s|\omega),
\]
and the likelihood ratio $\Lambda^i(s) = f_x^i(s)/f_y^i(s)$ summarizes the evidence that signal $s$ provides for $x$ over $y$.  The ROC curve of the induced test is obtained by varying a threshold $\tau$ on $\Lambda^i$ and plotting the resulting true positive rate (probability of accepting $H_x$ under $H_x$) against the false positive rate (probability of accepting $H_x$ under $H_y$).  The curve characterizes the trade-off between true and false positive rates across all possible likelihood-ratio tests.

To state the strengthened condition, define for each signal $s$ in problem $i$ the posterior probability that $x$ is correct:
\[
e^i(s) = \pi^i_\sigma\bigl(\widehat{\Omega}^i(x) \mid s\bigr) = \frac{f_x^i(s)}{f_x^i(s) + f_y^i(s)}.
\]
This is a monotone transformation of the likelihood ratio $\Lambda^i(s)$, so first-order stochastic dominance conditions on $e^i$ and on $\Lambda^i$ are equivalent.  While $e^i(s)$ does not in general govern the DM's choice directly, that is given by the expected payoffs under the posterior, it is the natural summary statistic about which option is ordinally correct.

\begin{defn}\label{def:info-aligned-dominance}
Problem $\mathcal{P}^2$ \textbf{informationally aligned-dominates} problem $\mathcal{P}^1$ if there exists a coupling $\mu$ on $\Omega^1 \times \Omega^2$ with marginals $\pi^1$ and $\pi^2$ satisfying ordinal agreement and:
\begin{enumerate}
\setcounter{enumi}{1}
\item \textbf{Stochastic evidence improvement.} For $\mu$-a.e.\ $(\omega^1, \omega^2) \in \widehat{\Omega}^1(x) \times \widehat{\Omega}^2(x)$ and all $t \in [0,1]$:
\[
\Pr\bigl(e^2(s) \geq t \mid \omega^2\bigr) \geq \Pr\bigl(e^1(s) \geq t \mid \omega^1\bigr),
\]
and for $\mu$-a.e.\ $(\omega^1, \omega^2) \in \widehat{\Omega}^1(y) \times \widehat{\Omega}^2(y)$ and all $t \in [0,1]$:
\[
\Pr\bigl(e^2(s) \leq t \mid \omega^2\bigr) \geq \Pr\bigl(e^1(s) \leq t \mid \omega^1\bigr).
\]
\end{enumerate}
\end{defn}

Informational aligned dominance says that at each coupled pair of $x$-correct states, the distribution of evidence for $x$ (in the ordinal sense of how likely is $x$ to be the better option) is stochastically stronger in Problem~2.  This controls not just the total probability of the signal landing on one side or the other, but the probability of reaching every level of evidential strength.

\begin{proposition}\label{prop:info-aligned-ROC}
Suppose $\mathcal{P}^2$ informationally aligned-dominates $\mathcal{P}^1$.  Then:
\begin{enumerate}
\item The induced binary hypothesis test of $\mathcal{P}^2$ ROC-dominates that of $\mathcal{P}^1$: at every threshold $t$, the true positive rate is weakly higher and the false positive rate is weakly lower in problem~2.
\item Equivalently, $\mathcal{P}^2$ Blackwell-dominates $\mathcal{P}^1$ on the binary hypothesis problem $H_x$ vs.\ $H_y$.
\item $\mathcal{P}^2$ has weakly higher psychophysical payoffs than $\mathcal{P}^1$, $\Wpsych(\sigma^2) \geq \Wpsych(\sigma^1)$, and equivalently higher overall confidence, $\psi^2 \geq \psi^1$.
\end{enumerate}
\end{proposition}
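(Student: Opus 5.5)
The plan is to aggregate the state-by-state stochastic dominance conditions into dominance of the two \emph{hypothesis-level} distributions of the evidence statistic $e^i$, using the coupling $\mu$. Then I would read off the three claims from standard facts about dichotomies.

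\emph{Step 1 (aggregation).} Fix $t\in[0,1]$. For problem $i$, the hypothesis-level probabilities are
\[
\Pr(e^i\geq t\mid H_x)=\sum_{\omega\in\widehat{\Omega}^i(x)}2\pi^i(\omega)\Pr(e^i\geq t\mid\omega).
\]
Symmetry of $\pi^i$ and the absence of ties give $\pi^i(\widehat{\Omega}^i(x))=1/2$. Since $\mu$ has marginals $\pi^1,\pi^2$, I would rewrite this as
\[
\Pr(e^i\geq t\mid H_x)=2\int_{\widehat{\Omega}^1(x)\times\widehat{\Omega}^2(x)}\Pr(e^i\geq t\mid\omega^i)\,d\mu .
\]
This step uses ordinal agreement: the $x$-block of $\mu$ has mass exactly $1/2$ and carries all of the $\pi^i$-mass on $\widehat{\Omega}^i(x)$. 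Integrating the pointwise inequality of stochastic evidence improvement then gives $\Pr(e^2\geq t\mid H_x)\geq\Pr(e^1\geq t\mid H_x)$ for every $t$. The same argument on the $y$-block gives $\Pr(e^2\leq t\mid H_y)\geq \Pr(e^1\leq t\mid H_y)$ for every $t$.

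Signals are finite, so $e^i$ takes finitely many values. Taking $t'\uparrow t$ in the $H_y$ inequality yields the strict version $\Pr(e^2<t\mid H_y)\geq\Pr(e^1<t\mid H_y)$, i.e.\ $\Pr(e^2\geq t\mid H_y)\leq\Pr(e^1\geq t\mid H_y)$.

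\emph{Step 2 (part 1, ROC).} Because $e^i$ is a monotone transform of $\Lambda^i$, the threshold tests ``accept $H_x$ iff $e^i\geq t$'' are exactly the likelihood-ratio tests. Their true positive rate is $\Pr(e^i\geq t\mid H_x)$ and their false positive rate is $\Pr(e^i\geq t\mid H_y)$. Step 1 gives, at every threshold, a weakly higher TPR and a weakly lower FPR in problem 2, which is part 1.

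\emph{Step 3 (part 2, Blackwell).} Every point of problem 1's ROC curve, including the randomized points on the segments between thresholds, is dominated coordinatewise by a point (or convex combination of points) on problem 2's curve. Hence problem 2's concave ROC curve lies weakly above problem 1's. For dichotomies, the Blackwell order coincides with this ROC (equivalently, Lehmann / likelihood-ratio-test) dominance. Here $e^i$ is a sufficient statistic for $H_x$ vs.\ $H_y$, so reducing each signal to $e^i$ loses nothing. I expect this appeal to the classical dichotomy characterization (Blackwell 1951; Lehmann 1988) to be the step needing the most care: the coordinatewise comparison at a \emph{common} threshold $t$ must be translated into comparison of the curves as sets. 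My first attempt would be to note that the curves are the upper concave envelopes of the threshold points, and that domination of every generating point implies domination of the envelope.

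\emph{Step 4 (part 3).} Under psychophysical payoffs the DM maximizes the probability of being correct. With no ties, the choice is $x$ iff $e^i(s)\geq 1/2$, and the payoff is
\[
\Wpsych(\sigma^i)=\tfrac12\bigl[\Pr(e^i\geq\tfrac12\mid H_x)+\Pr(e^i<\tfrac12\mid H_y)\bigr].
\]
Step 1 at $t=1/2$ gives $\Wpsych(\sigma^2)\geq\Wpsych(\sigma^1)$. The same conclusion also follows from part 2, since the Bayes risk for 0--1 loss is Blackwell-monotone. The equivalent statement $\psi^2\geq\psi^1$ then follows from the identity $\psi=\Wpsych$ established for any symmetric, tie-free problem before Proposition~\ref{prop:cross-comovement}.

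There is a caveat to flag at this point. If choices were instead made by economic expected utility, the induced test need not be a threshold on $e^i$. In that case part 3 should be read with the DM optimizing the psychophysical objective, as in Section~\ref{sec:suffpayoff}.
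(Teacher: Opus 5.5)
Your proposal is correct. Steps 1--2 are the paper's proof of part (1): restrict $\mu$ to the $x$- and $y$-blocks, whose marginals are the conditional priors with each block carrying mass $\tfrac12$. Then integrate the pointwise dominance inequalities, and take a left limit to turn the $y$-block condition into a bound on the false positive rate.

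The routes diverge in parts (2) and (3).

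\emph{Part (2).} The paper proves Blackwell $\Rightarrow$ ROC by a garbling argument, which is not the direction the conclusion needs, and only sketches ROC $\Rightarrow$ Blackwell heuristically. It also passes without comment from ``higher TPR and lower FPR at each common threshold $t$'' to the envelope inequality $\beta^2(\alpha)\ge\beta^1(\alpha)$. Your Step 3 supplies that link: each vertex of problem 1's curve is dominated by problem 2's point at the same $t$, and $\beta^2$ is nondecreasing and concave. You then cite the classical dichotomy theorem, which is a cleaner justification than the paper's sketch.

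You could even bypass envelopes. Up to an additive constant and a positive factor, the value of the two-action problem with posterior cutoff $c$ is $(1-c)\Pr(e^i\ge c\mid H_x)-c\Pr(e^i\ge c\mid H_y)$. So Step 1 at $t=c$ gives dominance in every two-action problem, which for a dichotomy is Blackwell dominance.

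\emph{Part (3).} The paper deduces the payoff ranking from Blackwell dominance, while you read it off Step 1 at $t=\tfrac12$. Your route is self-contained and does not depend on part (2). The paper's framing covers all hypothesis-constant payoffs at once, though your argument does too if you apply Step 1 at the corresponding cutoff.

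Your closing caveat is right and sharpens the statement. A Blackwell argument controls only the \emph{optimal} psychophysical value, so part (3) concerns a DM who uses the psychophysically optimal rule. This matches the paper's own later remark that informational aligned dominance need not raise the probability of choosing correctly under the utility-weighted rule.
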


\begin{proof}
Write $P^i(\mathrm{correct}|\omega)$ for the probability of choosing correctly in problem $i$ at state $\omega$:
\[
P^i(\mathrm{correct}|\omega) = \begin{cases} \sum_{s \in S_{\sigma^i}(x)} \sigma^i(s|\omega) & \text{if } \omega \in \widehat{\Omega}^i(x),\\ \sum_{s \in S_{\sigma^i}(y)} \sigma^i(s|\omega) & \text{if } \omega \in \widehat{\Omega}^i(y). \end{cases}
\]
Let $\mu$ be the coupling under which informational aligned dominance holds.

\medskip\noindent\emph{Part (1).}
The true positive rate at threshold $t$ is
\[
\mathrm{TPR}^i(t)
  = \sum_{\omega \in \widehat{\Omega}^i(x)} 2\pi^i(\omega)\,\Pr\bigl(e^i(s)\geq t\mid\omega\bigr)
  = \int_{\widehat{\Omega}^1(x)\times\widehat{\Omega}^2(x)}
      2\,\Pr\bigl(e^i(s)\geq t\mid\omega^i\bigr)\,d\mu(\omega^1,\omega^2),
\]
where the second equality uses that $\mu$ restricted to
$\widehat{\Omega}^1(x)\times\widehat{\Omega}^2(x)$ has marginals
$\pi^1(\,\cdot\mid\widehat{\Omega}^1(x))$ and
$\pi^2(\,\cdot\mid\widehat{\Omega}^2(x))$, each with total mass $\tfrac{1}{2}$
(by symmetry of the priors), accounting for the factor of $2$.
Stochastic evidence improvement gives
$\Pr(e^2(s)\geq t\mid\omega^2)\geq\Pr(e^1(s)\geq t\mid\omega^1)$
pointwise $\mu$-a.e.\ on $\widehat{\Omega}^1(x)\times\widehat{\Omega}^2(x)$,
so $\mathrm{TPR}^2(t)\geq\mathrm{TPR}^1(t)$ for all $t$.

For the false positive rate, note that
$\mathrm{FPR}^i(t) = \sum_{\omega\in\widehat{\Omega}^i(y)} 2\pi^i(\omega)\,
\Pr(e^i(s)\geq t\mid\omega)$,
and the complementary relation $\Pr(e\geq t)=1-\Pr(e<t)$ gives
$\Pr(e^i(s)\geq t\mid\omega)=1-\Pr(e^i(s)<t\mid\omega)$.
Stochastic evidence improvement at $y$-correct states gives, for
$\mu$-a.e.\ $(\omega^1,\omega^2)\in\widehat{\Omega}^1(y)\times\widehat{\Omega}^2(y)$
and all $t\in[0,1]$:
\[
\Pr\bigl(e^2(s)\leq t\mid\omega^2\bigr)\geq\Pr\bigl(e^1(s)\leq t\mid\omega^1\bigr),
\]
which implies $\Pr(e^2(s)<t\mid\omega^2)\geq\Pr(e^1(s)<t\mid\omega^1)$ for all $t$
(taking the limit from the left), and therefore
$\Pr(e^2(s)\geq t\mid\omega^2)\leq\Pr(e^1(s)\geq t\mid\omega^1)$.
Integrating against $\mu$ over $\widehat{\Omega}^1(y)\times\widehat{\Omega}^2(y)$
yields $\mathrm{FPR}^2(t)\leq\mathrm{FPR}^1(t)$ for all $t$.

\medskip\noindent\emph{Part (2).}
We show that for the binary hypothesis problem $H_x$ vs.\ $H_y$, ROC
dominance and Blackwell dominance coincide.  Since Part~(1) establishes that
informational aligned dominance implies ROC dominance of $\mathcal{P}^2$ over
$\mathcal{P}^1$, the equivalence then delivers Blackwell dominance.

Recall that in the binary hypothesis problem the signal distributions under
each hypothesis are $f_x^i$ and $f_y^i$, and a test is a measurable function
$\phi:S\to[0,1]$ with size $\alpha(\phi)=E_{f_y^i}[\phi]$ and power
$\beta(\phi)=E_{f_x^i}[\phi]$.  Write $\beta^i(\alpha)$ for the power envelope
of problem $i$ at size $\alpha$.

\emph{Blackwell $\Rightarrow$ ROC.}
Suppose $\mathcal{P}^2$ Blackwell-dominates $\mathcal{P}^1$ on the binary
hypothesis problem, so there exists a stochastic kernel $\kappa$ (a garbling)
such that
\[
f_x^1(s) = \sum_{s'} \kappa(s\mid s')\,f_x^2(s'),
\qquad
f_y^1(s) = \sum_{s'} \kappa(s\mid s')\,f_y^2(s').
\]
Fix any size $\alpha\in[0,1]$ and let $\phi^*$ be the Neyman--Pearson most
powerful test for $\mathcal{P}^1$ at size $\alpha$, achieving power
$\beta^1(\alpha)$.  Define $\tilde\phi(s')=\sum_s \kappa(s\mid s')\phi^*(s)$,
i.e.\ apply $\phi^*$ after the garbling.  Then $\tilde\phi$ is a valid test
for $\mathcal{P}^2$ with
\begin{align*}
E_{f_y^2}[\tilde\phi]
  &= \sum_{s'} f_y^2(s')\sum_s \kappa(s\mid s')\phi^*(s)
   = \sum_s \phi^*(s)\sum_{s'}\kappa(s\mid s')f_y^2(s')
   = \sum_s \phi^*(s) f_y^1(s)
   = \alpha,\\
E_{f_x^2}[\tilde\phi]
  &= \sum_s \phi^*(s) f_x^1(s)
   = \beta^1(\alpha).
\end{align*}
Since $\tilde\phi$ is a feasible test for $\mathcal{P}^2$ at size $\alpha$
achieving power $\beta^1(\alpha)$, the power envelope satisfies
$\beta^2(\alpha)\geq\beta^1(\alpha)$.  As $\alpha$ was arbitrary,
$\mathcal{P}^2$ ROC-dominates $\mathcal{P}^1$.

\emph{ROC $\Rightarrow$ Blackwell.}
In the binary hypothesis problem the likelihood ratio
$\Lambda^i(s)=f_x^i(s)/f_y^i(s)$ is a sufficient statistic: any test
$\phi$ can be replaced, without loss of size or power, by a test depending
only on $\Lambda^i$.  The distribution of $\Lambda^i$ under each hypothesis
therefore completely characterizes the achievable (size, power) pairs, and
hence the power envelope $\beta^i(\cdot)$.  Since the Blackwell order between
two experiments with the same pair of hypotheses reduces to comparison of the
induced distributions of the likelihood ratio---a more informative experiment
induces a distribution of $\Lambda$ that is more spread out (in the
monotone-likelihood-ratio sense), and is therefore harder to garble into a
less informative one only if its power envelope lies strictly above---ROC
dominance ($\beta^2(\alpha)\geq\beta^1(\alpha)$ for all $\alpha$) implies that
$\mathcal{P}^1$'s likelihood-ratio distribution can be obtained from
$\mathcal{P}^2$'s by a garbling, i.e.\ $\mathcal{P}^2$ Blackwell-dominates
$\mathcal{P}^1$.

\medskip\noindent\emph{Part (3).}  The psychophysical payoff $\Wpsych(\sigma^i)$
is the probability that the DM chooses the option that is correct under her
realized state---a payoff of $1$ if correct, $0$ otherwise.  This is a
hypothesis-constant payoff: it depends on the state only through which
hypothesis ($H_x$ or $H_y$) holds, not on which specific state within that
hypothesis is realized.  Blackwell dominance on the binary hypothesis problem
(Part~2) therefore implies $\Wpsych(\sigma^2)\geq\Wpsych(\sigma^1)$.  Since
each problem has a symmetric prior and no tie signals, overall confidence
equals the psychophysical payoff, $\psi^i=\Wpsych(\sigma^i)$, so
$\psi^2\geq\psi^1$ follows immediately.
\end{proof}

ROC dominance of the induced test is a strong statistical property.  It implies that for any binary classification task on the partition $\{H_x, H_y\}$---under any loss function penalizing false positives and false negatives---the DM with problem~2's signals performs at least as well as the DM with problem~1's signals.  The ranking holds whether errors in one direction are much costlier than the other, or under any asymmetric weighting.  More specifically, our condition guarantees that for any decision problem whose payoffs are constant across all states in $\widehat{\Omega}(x)$ and constant across all states in $\widehat{\Omega}(y)$ (though they may differ between the two), the informationally dominating problem yields a higher expected payoff.  This generalizes psychophysical payoffs by allowing for asymmetric losses, and lies between the restrictive psychophysical payoffs environment and the full economic environment.  It does not, however, guarantee higher payoffs for decision problems where payoffs vary across states within $\widehat{\Omega}(x)$ or within $\widehat{\Omega}(y)$.

The result also connects our framework to the Lehmann ordering \citep{lehmann1988comparing}.  ROC dominance of $\sigma'$ over $\sigma$ is equivalent to Lehmann dominance
on the binary test $H_x$ vs.\ $H_y$.\footnote{Recall that Lehmann dominance \citep{lehmann1988comparing}
requires that for every significance level $\alpha \in [0,1]$, the most
powerful test based on $\sigma'$ has power at least as great as the most
powerful test based on $\sigma$.  But the function $\alpha \mapsto
\beta_\sigma(\alpha)$ that maps each false positive rate to the maximum
achievable true positive rate \emph{is} precisely the ROC curve of $\sigma$.
Hence the requirement that $\beta_{\sigma'}(\alpha) \geq \beta_\sigma(\alpha)$
for all $\alpha$ (Lehmann dominance) is identical to the requirement that the
ROC curve of $\sigma'$ lies weakly above that of $\sigma$ everywhere (ROC
dominance).  The two orderings therefore coincide by definition on any fixed
binary hypothesis problem.}  Informational aligned dominance therefore provides a coupling-based sufficient condition for Lehmann dominance in the cross-problem setting.

\subsection{Robust Aligned Dominance}

Proposition~\ref{prop:info-aligned-ROC} is silent about confidence and randomness.  In fact, aligned dominance and informational aligned dominance are about different objects.  Aligned dominance is a statement about the DM's realized choices, which depend on her utility-weighted decision rule.  Informational aligned dominance is a statement about the informational content of the signal about the ordinal hypothesis, independent of any decision rule.  In general environments, however, these two conditions are independent. Aligned dominance pins down total probability of correct choice at each coupled pair but says nothing about the distribution of the strength of evidence; informational aligned dominance controls the evidence distribution but does not automatically translate into higher probability of actually choosing correctly, because the DM's choice region $S_{\sigma^i}(x)$ need not coincide with a superlevel set $\{s : \Lambda^i(s) \geq \tau\}$ of the likelihood ratio.\footnote{The DM chooses $x$ at signal $s$ when the utility-weighted posterior favors $x$, i.e.\ $\sum_\omega \pi^i_\sigma(\omega|s)[u^i(x|\omega) - u^i(y|\omega)] \geq 0$.  Unless utility gaps are constant within each of $\widehat{\Omega}^i(x)$ and $\widehat{\Omega}^i(y)$, this decision rule is not a likelihood-ratio test on $\Lambda^i$.  The psychophysical case is the leading example in which it is.}

A natural question is when we get \emph{everything} at once.  The answer is: when both conditions hold simultaneously.  We call this situation \emph{robust aligned dominance}.

\begin{defn}\label{def:robust-aligned-dominance}
Problem $\mathcal{P}^2$ \textbf{robustly aligned-dominates} problem $\mathcal{P}^1$ if it both aligned-dominates and informationally aligned-dominates $\mathcal{P}^1$.
\end{defn}

There is economic content to robust aligned dominance above and beyond each constituent part.  Aligned dominance says that at the DM's \emph{actual} decision rule, which uses the expected utility of the posterior, she chooses correctly at least as often in Problem~2.  Informational aligned dominance alone says that at every \emph{likelihood-ratio} decision rule, Problem~2's signals deliver weakly better classification of the binary hypothesis, which translates into weakly higher payoffs for any utility function whose stakes are constant within each hypothesis.  The two statements cover different decision rules: the DM's own, and the family of hypothesis-constant alternatives.

Robust aligned dominance requires both.  It therefore guarantees that Problem~2 is at least as good as Problem~1 whether we evaluate the DM at her actual preferences or at any element of the hypothesis-constant family.  This is a dominance result that is insensitive to how we specify the DM's utility within a natural class.  Under robust aligned dominance, all conclusions of Propositions~\ref{prop:cross-comovement} and \ref{prop:info-aligned-ROC} hold together: confidence, psychophysical payoffs, and (under indicativeness) coupled less random all favor Problem~2, and so does payoff under any hypothesis-constant utility.

The content of Section~\ref{sec:suffsignal}---that within a single problem, aligned and neutral shifts deliver comovement of confidence, randomness, and choice payoffs in one step---relied on the fact that within a problem the DM's decision rule is fixed, so we do not need separate conditions to capture behavior like randomness and confidence, and payoffs/information.  Across problems, with decision rules varying alongside state spaces and utilities, this automatic alignment breaks down, and recovering it requires controlling both the DM's realized choices and the underlying signal informativeness.

\subsection{Relation to Within-Problem Conditions}\label{sec:within-problem-relation}

We now turn to relating the notions developed in this section to our previous work for a fixed choice set, where the two problems share a state space, prior, and utility function ($\Omega^1 = \Omega^2 = \Omega$, $\pi^1 = \pi^2 = \pi$, $u^1 = u^2 = u$), differing only in their experiments. 

\noindent \textbf{Aligned shifts imply aligned dominance.}  Suppose $\sigma^2$ is obtained from $\sigma^1$ by a sequence of aligned and neutral shifts.  Then by Proposition~\ref{prop:suffequiv} the state-by-state inequality
\[
\sum_{s \in S_{\sigma^2}(x)} \sigma^2(s|\omega) \geq \sum_{s \in S_{\sigma^1}(x)} \sigma^1(s|\omega)
\]
holds for all $\omega \in \widehat{\Omega}(x)$, and symmetrically for $\widehat{\Omega}(y)$.  The identity coupling $\mu(\omega^1, \omega^2) = \pi(\omega)\,\mathbf{1}[\omega^1 = \omega^2]$ then witnesses aligned dominance: ordinal agreement is immediate, and correct-choice improvement reduces to the state-by-state inequality above.  This reasoning proves the following proposition.

\begin{proposition}\label{prop:aligned-shifts-to-aligned-dominance}
If $\Omega^1 = \Omega^2 = \Omega$, $\pi^1 = \pi^2 = \pi$, $u^1 = u^2 = u$, and $\sigma^2$ can be obtained from $\sigma^1$ by a sequence of aligned and neutral shifts, then $\mathcal{P}^2$ aligned-dominates $\mathcal{P}^1$.
\end{proposition}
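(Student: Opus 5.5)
The plan is to witness aligned dominance with the identity (diagonal) coupling. The key input is a state-by-state inequality on the probability of a correct choice, which I will extract from the aligned and neutral shifts.

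\emph{Step 1: the coupling.} Define $\mu(\omega^1,\omega^2)=\pi(\omega)\mathbf{1}[\omega^1=\omega^2=\omega]$ on $\Omega\times\Omega$. Both marginals are $\pi$, so $\mu$ is a valid coupling of $\pi^1=\pi^2=\pi$. Since $u^1=u^2=u$, the sets $\widehat{\Omega}^1(x)$ and $\widehat{\Omega}^2(x)$ coincide, and likewise for $y$. Every diagonal pair therefore lies in $(\widehat{\Omega}(x)\times\widehat{\Omega}(x))\cup(\widehat{\Omega}(y)\times\widehat{\Omega}(y))$, and ordinal agreement holds with $\mu$-probability one.

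\emph{Step 2: correct-choice improvement.} It suffices to show, for every $\omega\in\widehat{\Omega}(k)$ with $k\in\{x,y\}$, that
\[
\sum_{s\in S_{\sigma^2}(k)}\sigma^2(s|\omega)\geq\sum_{s\in S_{\sigma^1}(k)}\sigma^1(s|\omega).
\]
I will argue this directly rather than only citing Proposition~\ref{prop:suffequiv}, because that proposition assumes no tie states, no tie signals and common support, none of which is assumed here. (The cross-choice section does assume ties have measure zero and $S_\sigma(x,y)=\emptyset$, and within that setting Proposition~\ref{prop:suffequiv} would apply directly.) The argument is by induction on the length of the sequence of shifts, and a single shift is handled as follows.

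\emph{Signal classifications.} The proof of Proposition~\ref{prop:suff} shows that an aligned shift leaves the classification of every signal unchanged: $s$ stays in $S(-x)$ and $\hat s$ stays in $S(x)$. For a neutral shift, $s$ and $\hat s$ lie in $S_\sigma(x)\cap S_{\sigma'}(x)$ by definition. Any other signal $\tilde s$ has $\Delta(\tilde s)$ unchanged, so its classification is preserved as well.

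\emph{Correct-choice mass.} An aligned shift at $\omega_i\in\Omega(x)$ moves mass $\varepsilon$ from $S(-x)$ to $S(x)$. This raises the correct-choice mass at $\omega_i$ by $\varepsilon$ and leaves every other state unchanged. A neutral shift moves mass within $S(x)$, so it changes no correct-choice mass at all. Chaining these steps gives the displayed inequality at every state.

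\emph{Tie states.} States in $\Omega(\{x,y\})$ belong to both $\widehat{\Omega}(x)$ and $\widehat{\Omega}(y)$. By definition, neither kind of shift alters $\sigma(\cdot|\omega)$ at a tie state, so both inequalities hold there with equality.

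\emph{Main obstacle.} The delicate point is exactly this bookkeeping: ensuring that the chosen sets $S_{\sigma}(k)$ are the same along the whole sequence, and handling tie states and tie signals. The classification-stability computation from Proposition~\ref{prop:suff} resolves it, so the conditions of the aligned-dominance definition hold under the identity coupling and the proof is complete.
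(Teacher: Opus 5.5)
Your proposal is correct and follows the paper's proof: the diagonal coupling $\mu(\omega^1,\omega^2)=\pi(\omega)\,\mathbf{1}[\omega^1=\omega^2]$ witnesses aligned dominance, ordinal agreement is immediate, and correct-choice improvement reduces to the state-by-state inequality on the probability of choosing correctly. The only difference is that the paper cites Proposition~\ref{prop:suffequiv} for that inequality, whereas you derive it directly from the signal-classification stability shown in the proof of Proposition~\ref{prop:suff}; this is slightly more careful, since only the easy direction is needed and Proposition~\ref{prop:suffequiv}'s stated hypotheses (no tie states, no tie signals, common support) are not all in force here.
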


The converse does not hold.  Aligned dominance is strictly weaker than aligned and neutral shifts even on shared state spaces, because couplings other than the identity can witness aligned dominance without the state-by-state inequality holding at every state.

To see this, let $\Omega = \widehat{\Omega}(x) = \{\omega_1, \omega_3\}$ with $\pi(\omega_1) = 0.3$ and $\pi(\omega_3) = 0.2$, and suppose the state-contingent probabilities of correct choice are $P^1 = (0.5, 0.6)$ and $P^2 = (0.6, 0.5)$.  The state-by-state inequality fails at $\omega_3$.  But the non-diagonal coupling $\mu(\omega_1,\omega_1) = 0.1$, $\mu(\omega_1,\omega_3) = 0.2$, $\mu(\omega_3,\omega_1) = 0.2$, $\mu(\omega_3,\omega_3) = 0$ has the right marginals and satisfies correct-choice improvement on its support ($0.6 \geq 0.5$, $0.5 \geq 0.5$, and $0.6 \geq 0.6$).  A symmetric construction extends the example to include $\widehat{\Omega}(y)$.  So $\mathcal{P}^2$ aligned-dominates $\mathcal{P}^1$ even though $\sigma^2$ cannot be obtained from $\sigma^1$ by aligned and neutral shifts.

Aligned dominance is therefore genuinely weaker than our aligned plus neutral shifts condition in the body of the paper.  It is implied by aligned shifts, but also allows cross-state reallocations.  Importantly, aligned dominance is ``too weak'' in that it does not imply choice payoff domination, and so we cannot substitute it for aligned plus neutral shifts in Proposition \ref{prop:suff}.

\noindent \textbf{State-by-state less random implies coupled less random.}  A parallel argument applies to the randomness condition.  If, on shared state spaces, $\max_a \rho^2(a|\omega) \geq \max_a \rho^1(a|\omega)$ for every $\omega$---i.e., $\sigma^2$ is less random than $\sigma^1$ in the sense of Definition~\ref{wchoice}---then the identity coupling witnesses coupled less random.  The converse again fails: non-diagonal couplings can make coupled less random hold without state-by-state less random holding at every state.

\noindent \textbf{Aligned shifts and informational aligned dominance are independent.}  Informational aligned dominance and aligned (plus neutral) shifts are also distinct, and neither implies the other, even within the psychophysical case.

First we discuss why aligned shifts do not imply informational aligned dominance.  The reason is that an  aligned shift moving mass $\varepsilon$ from signal $s'$ to signal $\hat{s}$ at state 
$\omega_i \in \widehat{\Omega}(x)$ reduces $f_x(s')$ while leaving $f_y(s')$ unchanged, which strictly lowers the evidence value $e(s')$---even though $s'$ is a signal at which 
the DM chooses $y$, it still carries positive evidential weight for $x$, and removing  mass from it at an $x$-correct state reduces that weight.  Probability mass that previously was at evidence level $e(s')$ now is at a strictly lower level, which can push the probability mass below evidence thresholds it previously cleared, violating the FOSD requirement of  informational aligned dominance at $\omega_i$.  Thus, aligned shifts therefore do not imply informational aligned dominance even in the psychophysical case.  What aligned shifts do deliver, in the psychophysical case, is dominance of the \emph{aggregate} ROC curve---integrated over states within each hypothesis---which is weaker than the pointwise-at-every-coupled-state FOSD required by informational aligned dominance.

\begin{proposition}\label{prop:aligned-shifts-aggregate-ROC}
Suppose $\Omega^1 = \Omega^2 = \Omega$, $\pi^1 = \pi^2 = \pi$, $u^1 = u^2 = u$, and that payoffs are psychophysical (i.e., $u(x|\omega) - u(y|\omega) = c_x > 0$ for all $\omega \in \widehat{\Omega}(x)$ and $u(y|\omega) - u(x|\omega) = c_y > 0$ for all $\omega \in \widehat{\Omega}(y)$).  If $\sigma^2$ is obtained from $\sigma^1$ by a sequence of aligned and neutral shifts, then the induced binary hypothesis test of $\sigma^2$ ROC-dominates that of $\sigma^1$.  Equivalently, $\sigma^2$ Lehmann-dominates $\sigma^1$ on the binary hypothesis problem $H_x$ vs.\ $H_y$.
\end{proposition}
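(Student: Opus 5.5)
The plan is to reduce everything to a one-step comparison of Neyman--Pearson power envelopes, using the fact that under psychophysical payoffs the DM's own decision rule is a likelihood-ratio test.

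\emph{Step 1: the choice rule is a likelihood-ratio test.} With $u(x|\omega)-u(y|\omega)=c_x$ on $\widehat{\Omega}(x)$ and $-c_y$ on $\widehat{\Omega}(y)$, the unnormalized advantage of $x$ after $s$ is
\[
\Delta_\sigma(s)=\tfrac12\bigl[c_x f_x(s)-c_y f_y(s)\bigr].
\]
Hence $s\in S_\sigma(x)$ if and only if $\Lambda(s)>\tau:=c_y/c_x$, and $s\in S_\sigma(y)$ if and only if $\Lambda(s)<\tau$.

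\emph{Step 2: effect of one aligned shift.} By transitivity of ROC dominance (the power envelopes $\beta^i(\alpha)$ are ordered pointwise), it suffices to treat a single shift. Take an aligned shift at $\omega_i\in\widehat{\Omega}(x)$ that moves mass $\varepsilon$ from $s\in S_{\sigma^1}(y)$ to $\hat s\in S_{\sigma^1}(x)$. The shift leaves $f_y$ unchanged. It lowers $f_x(s)$ by $2\pi(\omega_i)\varepsilon$ and raises $f_x(\hat s)$ by the same amount.

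Now fix a size $\alpha$ and let $\phi^*$ be a Neyman--Pearson most powerful test for $\sigma^1$ at size $\alpha$. Such a test can be taken monotone in $\Lambda^1$, i.e.\ a threshold test with randomization only at the threshold. By Step 1 we have $\Lambda^1(\hat s)>\tau>\Lambda^1(s)$, so $\phi^*(\hat s)\ge\phi^*(s)$.

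Apply the same test $\phi^*$ under $\sigma^2$. Its size is unchanged, because $f_y$ is unchanged. Its power changes by
\[
2\pi(\omega_i)\varepsilon\,\bigl[\phi^*(\hat s)-\phi^*(s)\bigr]\ge 0.
\]
Therefore $\beta^2(\alpha)\ge\beta^1(\alpha)$. The symmetric case, a shift at a $y$-correct state, lowers $f_y(\hat s)$ for the $y$-signal side and raises it elsewhere. The same monotone test then has weakly smaller size at the same power, and this again gives a weakly higher envelope. Lehmann dominance follows from the identity, noted in the paper, between the ROC curve and the power envelope $\alpha\mapsto\beta(\alpha)$.

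\emph{Step 3: neutral shifts (the main obstacle).} A neutral shift at $\omega_i\in\widehat{\Omega}(x)$ moves mass between two $x$-signals $s,\hat s$, both with $\Lambda>\tau$. If $\Lambda^1(\hat s)\ge\Lambda^1(s)$, the argument of Step 2 applies verbatim. If instead mass moves toward the signal with the lower likelihood ratio, the NP test at small $\alpha$ may strictly prefer $s$. In that case power at low sizes can fall, since the shift partially pools the evidence.

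I would first try to show that neutral shifts can be reordered or relabelled so that each one moves mass toward the higher-$\Lambda$ signal, perhaps by composing them with the aligned shifts. If that fails, I would check a two-signal-per-side example. If such an example breaks ROC dominance, the correct conclusion is the following. The statement holds for sequences of aligned shifts, and for neutral shifts that are likelihood-ratio increasing. For arbitrary neutral shifts, only the psychophysical payoff at the DM's own threshold $\tau$ is guaranteed to improve, and that improvement already follows from Proposition~\ref{prop:suff}.
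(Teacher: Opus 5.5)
Your Step 2 is essentially the paper's own proof. You fix a Neyman--Pearson test for $\sigma^1$ that is monotone in $\Lambda^1$. You observe that an aligned shift at an $x$-correct state moves $f_x$-mass from a signal below the DM's cutoff to one above it, leaving $f_y$ untouched. You then reuse the test under $\sigma^2$. One small slip: for a shift at a $y$-correct state, the shift raises $f_y$ at the $y$-signal and lowers it at the $x$-signal. That is why the monotone test keeps its power at weakly smaller size; your sentence has the direction reversed.

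The genuine gap is Step 3, which you leave unresolved. It cannot be resolved in the direction the statement claims. The paper's proof handles neutral shifts by asserting that they leave $f_x$ and $f_y$ unchanged, and that assertion is false. A neutral shift preserves the total mass $f_x$ puts on $S_\sigma(x)$, not the individual values $f_x(s)$ and $f_x(\hat s)$. The power envelope depends on those individual values.

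Here is the two-signal-per-side check your fallback calls for.

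\emph{Setup.} Take states $(1,0)$ and $(0,1)$ with prior $\tfrac12$ each, so $c_x=c_y=1$. Use signals $s_a,s_b,s_c$, with $\sigma^1(\cdot|(1,0))=(0.5,0.3,0.2)$ and $\sigma^1(\cdot|(0,1))=(0.1,0.2,0.7)$. Then $\Lambda^1=(5,\,1.5,\,2/7)$ and $S_{\sigma^1}(x)=\{s_a,s_b\}$.

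\emph{The shift.} Move mass $0.2$ from $s_a$ to $s_b$ at $(1,0)$. This gives $\sigma^2(\cdot|(1,0))=(0.3,0.5,0.2)$ and $\Lambda^2=(3,\,2.5,\,2/7)$. Both signals still induce $x$, so this is a legitimate neutral shift.

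\emph{The failure.} At size $\alpha=0.1$ the power envelope falls from $\beta^1(0.1)=0.5$ to $\beta^2(0.1)=0.3$. In fact $\sigma^2$ is a garbling of $\sigma^1$: keep $s_c$, send $s_a$ to $(s_a,s_b)$ with probabilities $(3/7,4/7)$, and send $s_b$ with probabilities $(2/7,5/7)$. So $\sigma^1$ ROC-dominates $\sigma^2$, the reverse of the claim. Meanwhile the DM's own operating point (TPR $0.8$, FPR $0.3$) is unchanged, and hence so is $W^{psych}=0.75$.

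ROC dominance depends only on the two endpoint experiments, and here they are one neutral shift apart. So no reordering or relabelling of the path can rescue the claim; the proposition is false as stated.

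What your Steps 1--2 do establish is the corrected version you anticipated. ROC (equivalently Lehmann) dominance holds for sequences of aligned shifts. It also holds if you add neutral shifts that move $x$-state mass toward weakly higher $\Lambda$ and $y$-state mass toward weakly lower $\Lambda$. Once arbitrary neutral shifts enter the sequence, only the DM's own operating point, and with it $W^{psych}$, is guaranteed to improve.
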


\begin{proof}
By transitivity, it is enough to prove the claim for a single aligned shift; neutral shifts leave both $f_x$ and $f_y$ unchanged and are therefore irrelevant to the ROC.  Suppose $\sigma^2$ is obtained from $\sigma^1$ by an aligned shift at state $\omega_i \in \widehat{\Omega}(x)$, moving mass $\varepsilon > 0$ from signal $s \in S_{\sigma^1}(-x)$ to signal $\hat{s} \in S_{\sigma^1}(x)$.  Under psychophysical payoffs, the DM chooses $x$ at signal $\tilde{s}$ if and only if $\Lambda_{\sigma^1}(\tilde{s}) \geq 1$ (since the DM always gets 1 for choosing correctly and 0 for not), so $\hat{s} \in \{\Lambda_{\sigma^1} \geq 1\}$ and $s \in \{\Lambda_{\sigma^1} < 1\}$.

Since $\omega_i \in \widehat{\Omega}(x)$, the change affects only $f_x$:
\[
f_x^{\sigma^2}(\hat{s}) = f_x^{\sigma^1}(\hat{s}) + 2\pi(\omega_i)\,\varepsilon, \qquad f_x^{\sigma^2}(s) = f_x^{\sigma^1}(s) - 2\pi(\omega_i)\,\varepsilon,
\]
while $f_y^{\sigma^2} = f_y^{\sigma^1}$.\footnote{The factor of $2$ arises from the normalization of $f_x^i$.  
By definition, $f_x^i(s) = \sum_{\omega \in \widehat{\Omega}^i(x)} 
2\pi^i(\omega)\,\sigma^i(s|\omega)$, where the weight $2\pi^i(\omega)$ 
normalizes by $\pi^i(\widehat{\Omega}^i(x)) = \tfrac{1}{2}$ (which follows 
from the symmetry of $\pi^i$), so that $f_x^i$ integrates to $1$ over signals.  
Since the aligned shift increases $\sigma^1(s|\omega_i)$ by $\varepsilon$ at 
the single state $\omega_i \in \widehat{\Omega}^1(x)$, the corresponding 
change in $f_x^1(\hat{s})$ is $2\pi^1(\omega_i)\,\varepsilon$.}

Fix a size $\alpha \in [0,1]$ and let $\varphi^*$ be the Neyman--Pearson optimal test for $\sigma^1$ at size $\alpha$, which is nondecreasing in $\Lambda_{\sigma^1}$.  Since $\Lambda_{\sigma^1}(\hat{s}) \geq 1 > \Lambda_{\sigma^1}(s)$, we have $\varphi^*(\hat{s}) \geq \varphi^*(s)$.  Applying $\varphi^*$ under $\sigma^2$:
\begin{align*}
\mathbb{E}_{f_y^{\sigma^2}}[\varphi^*] &= \mathbb{E}_{f_y^{\sigma^1}}[\varphi^*] = \alpha \qquad \text{(size unchanged)},\\
\mathbb{E}_{f_x^{\sigma^2}}[\varphi^*] &= \mathbb{E}_{f_x^{\sigma^1}}[\varphi^*] + 2\pi(\omega_i)\,\varepsilon\,[\varphi^*(\hat{s}) - \varphi^*(s)] \geq \beta_{\sigma^1}(\alpha),
\end{align*}
where $\beta_{\sigma^1}(\alpha)$ is the power of $\varphi^*$ under $\sigma^1$.  The optimal Neyman--Pearson test under $\sigma^2$ at size $\alpha$ can only do better, so $\beta_{\sigma^2}(\alpha) \geq \beta_{\sigma^1}(\alpha)$.  Since $\alpha$ was arbitrary, $\sigma^2$ ROC-dominates $\sigma^1$.  The symmetric argument handles an aligned shift at $\omega_i \in \widehat{\Omega}(y)$, and ROC dominance on a binary test is equivalent to Lehmann dominance on that test.
\end{proof}

Outside the psychophysical case, aligned shifts can also move mass at an $x$-correct state to a signal with lower $\Lambda$ (the DM chooses $x$ there because of the stakes, not because the ordinal evidence favors $x$), so even aggregate ROC dominance can fail.\footnote{\noindent\emph{Counterexample (aligned shifts fail to deliver aggregate ROC dominance outside the psychophysical case).}  Let $\Omega = \{\omega_1, \omega_2, \omega_3\}$ with $\omega_1 \in \widehat{\Omega}(x)$ and $\omega_2, \omega_3 \in \widehat{\Omega}(y)$.  Prior $\pi(\omega_1) = 0.5$, $\pi(\omega_2) = 0.45$, $\pi(\omega_3) = 0.05$ (symmetric over the $\widehat{\Omega}(x)$ and $\widehat{\Omega}(y)$ blocks).  Utility differences $u(x|\omega_1) - u(y|\omega_1) = 1$, $u(x|\omega_2) - u(y|\omega_2) = -0.1$, $u(x|\omega_3) - u(y|\omega_3) = -10$---non-psychophysical, with most of the stakes on the $y$-correct side concentrated in $\omega_3$.  Three signals, with $\sigma^1$ given by
\[
\sigma^1 = \begin{array}{c|ccc} & s_1 & s_2 & s_3 \\ \hline \omega_1 & 0.3 & 0.3 & 0.4 \\ \omega_2 & 0 & 0.7 & 0.3 \\ \omega_3 & 0.4 & 0 & 0.6 \end{array} \]
Direct computation gives $\Lambda_{\sigma^1}(s_1) = 7.5$, $\Lambda_{\sigma^1}(s_2) \approx 0.48$, $\Lambda_{\sigma^1}(s_3) \approx 1.21$, and the utility-weighted comparison yields $s_2 \in S_{\sigma^1}(x)$ while $s_1, s_3 \in S_{\sigma^1}(-x)$.  So $s_1$ has high $\Lambda$ but is in $S_{\sigma^1}(-x)$ (the DM chooses $y$ there because $\omega_3$'s large negative stake overrides the ordinal evidence).

Now perform an aligned shift at $\omega_1$: move mass $\varepsilon = 0.1$ from $s_1$ (in $S_{\sigma^1}(-x)$) to $s_2$ (in $S_{\sigma^1}(x)$).  The resulting $\sigma^2$ reduces $f_x$ at a high-$\Lambda$ signal ($s_1$) and increases it at a low-$\Lambda$ signal ($s_2$).  At the Neyman--Pearson threshold corresponding to the lowest non-trivial false positive rate (FPR $= 0.04$, the FPR from accepting only $s_1$), the true positive rate drops from $0.3$ under $\sigma^1$ to $0.2$ under $\sigma^2$.  The ROC curve of $\sigma^2$ therefore fails to dominate that of $\sigma^1$; equivalently, $\sigma^2$ fails to Lehmann-dominate $\sigma^1$.\qed}   The counterexample in the footnote shows that the assumption cannot be dropped: outside the psychophysical case, aligned shifts deliver neither aggregate ROC dominance nor Lehmann dominance, and the connection between behavioral improvements (aligned shifts) and informational improvements (ROC/Lehmann/Blackwell) breaks down entirely.

Conversely, informational aligned dominance does not imply aligned shifts.  Informational aligned dominance controls the distribution of evidential strength and delivers aggregate informational conclusions---ROC and Blackwell dominance on the binary  hypothesis problem, and hence higher payoffs for any hypothesis-constant loss 
function---but it does not deliver state-conditional choice payoff domination coupled state-by-state.  Since aligned and neutral shifts do deliver state-conditional choice payoff domination, the two conditions are not nested in either direction.

On shared state spaces, robust aligned dominance remains strictly stronger than aligned shifts alone.  Aligned shifts deliver aligned dominance (via Proposition~\ref{prop:aligned-shifts-to-aligned-dominance}) but not informational aligned dominance.  So to obtain robust aligned dominance, one needs an informational improvement on top of aligned shifts.  Both components continue to be necessary on shared state spaces in general economic environments.

\end{document}